\newif\iffull
\renewcommand{\infer}[3][]{
\ifthenelse{\equal{#1}{}}{
\inferrule{#2}{#3}
}
{
\inferrule*[right={\scriptsize \textbf{#1}}]
{#2}
{#3}
}
}
\newtheorem{theorem}{Theorem}
\newcommand{\myparagraph}[1]{{\bf {#1}}}
\newcommand{\ttt}[1]{\texttt{#1}}
\newcommand{\out}[1] {}
\newcounter{codeLineCntr}
\newcommand{\cfbox}[2]{%
    \colorlet{currentcolor}{.}%
    {\color{#1}%
    \fbox{\color{currentcolor}#2}}%
}
\newif\ifnotes
\newcommand{\punt}[1]{}
\newcommand{\secref}[1]{Section~\ref{sec:#1}}
\newcommand{\appref}[1]{Appendix~\ref{app:#1}}
\newcommand{\figref}[1]{Figure~\ref{fig:#1}}
\newcommand{\lineref}[1]{line~\ref{line:#1}}
\newcommand{\thmref}[1]{Theorem~\ref{thm:#1}}
\newcommand{\lemref}[1]{Lemma~\ref{lem:#1}}
\newcommand{\corref}[1]{Corollary~\ref{cor:#1}}
\renewcommand{\eqref}[1]{Equation~(\ref{eq:#1})}
\newcounter{remark}[section]
\newcommand{\myremark}[3]{
\refstepcounter{remark}
\[
\color{red}
\left\{
\sf
\parbox{0.8\columnwidth}
{
{\bf {#1}'s remark~\theremark:}
{#3}
}
\right.
\]
\marginpar{\bf {#2}.~\theremark}
}
\newcommand{\sremark}[1]{\myremark{Sam}{S}{#1}}
\newcommand{\sr}[1]{\sremark{#1}}
\newcommand{\todo}[1]{}
\begin{document}

\title{Local Optimization of Quantum Circuits}
\iffull
\title{Local Optimization of Quantum Circuits (Extended Version)}
\fi
\author{Jatin Arora}
\email{jatina@andrew.cmu.edu}
\affiliation{%
	\institution{Carnegie Mellon University}
	\city{Pittsburgh}
	\state{PA}
	\country{USA}
}
\author{Mingkuan Xu}
\email{mingkuan@cmu.edu}
\affiliation{%
	\institution{Carnegie Mellon University}
	\city{Pittsburgh}
	\state{PA}
	\country{USA}
}
\author{Sam Westrick}
\email{shw8119@nyu.edu}
\affiliation{%
	\institution{New York University}
	\city{New York}
	\state{NY}
	\country{USA}
}
\author{Pengyu Liu}
\email{pengyuliu@cmu.edu}
\affiliation{%
	\institution{Carnegie Mellon University}
	\city{Pittsburgh}
	\state{PA}
	\country{USA}
}
\author{Dantong Li}
\email{dantong.li@yale.edu}
\affiliation{%
	\institution{Yale University}
	\city{New Haven}
	\state{CT}
	\country{USA}
}
\author{Yongshan Ding}
\email{yongshan.ding@yale.edu}
\affiliation{%
	\institution{Yale University}
	\city{New Haven}
	\state{CT}
	\country{USA}
}
\author{Umut A. Acar}
\email{umut@cmu.edu}
\affiliation{%
	\institution{Carnegie Mellon University}
	\city{Pittsburgh}
	\state{PA}
	\country{USA}
}
\date{}

\thispagestyle{empty}
\newcommand{\kwcost}[1]{\mathbf{cost}\left(  {#1} \right)}
\newcommand{\circuitcon}[2]{{#1} + {#2}}

\newcommand{\bigomega}{\mathbf{\Omega}}

\newcommand{\sfbox}[1]
{
\cfbox{blue}{#1}
}
\newcommand{\srule}{\vspace{2mm}\rule{\columnwidth}{1pt}\vspace{2mm}}

\newcommand{\lang}{\textsc{Laqe}}
\newcommand{\dom}[1]{\mathop{\text{dom}}(#1)}
\newcommand{\codom}[1]{\mathop{\text{cod}}(#1)}
\newcommand{\kw}[1]{\mbox{\ttt{#1}}}
\newcommand{\cdparens}[1]{({#1})}
\newcommand{\cd}[1]{{\lstinline!#1!}}
\newcommand{\hmm}{\textsf{HMM}}
\newcommand{\rulename}[1]{\textsc{#1}}
\newcommand{\ruleref}[1]{Rule~\rulename{#1}}

\newcommand{\true}{\ensuremath{\kw{true}}}
\newcommand{\false}{\ensuremath{\kw{false}}}
\newcommand{\ttrue}{\kw{t}}
\newcommand{\ffalse}{\kw{f}}
\newcommand{\prog}{\ensuremath{P}}
\newcommand{\pred}{\ensuremath{\mathcal{P}}}
\newcommand{\predf}[2]{\ensuremath{\pred(#1,#2)}}
\newcommand{\defeq}{\triangleq}

\newcommand{\type}[2]{\ensuremath{#1 : #2}}
\newcommand{\typed}[4]{\ensuremath{#1 \vdash_{#2} \type{#3}{#4}}}

\newcommand{\btype}{\beta}
\newcommand{\utype}{\theta}
\newcommand{\val}{v}
\newcommand{\uval}{u}

\newcommand{\algname}{\textsf{OAC}}
\newcommand{\algnameminus}{\textsf{OACMinus}}
\newcommand{\coam}{\textsf{OAC}}
\newcommand{\lopt}{\textsf{Lopt}}
\newcommand{\coamwith}[1]{\ensuremath{\mathsf{SOAM}[{#1}]}}
\newcommand{\queso}{{\textsf{Queso}}}
\newcommand{\voqc}{{\textsf{VOQC}}}
\newcommand{\pyzx}{{\textsf{PyZX}}}
\newcommand{\quartz}{{\textsf{Quartz}}}
\newcommand{\quartztool}{$\mathsf{Quartz}$}
\newcommand{\quesotool}{$\mathsf{Queso}$}
\newcommand{\feyntool}{\textsf{FeynOpt}}

\newcommand{\quartzt}[1]{\ensuremath{\mathsf{Quartz}_{\,#1}}}
\newcommand{\quesot}[1]{\ensuremath{\mathsf{Queso}_{\,#1}}}
\newcommand{\coamt}[1]{\ensuremath{\mathsf{SOAM}[#1]}}
\newcommand{\clifft}{Clifford+T}

\newcommand{\compat}[2]{\ensuremath{{#1} \mathbin{\scaleobj{1.2}{\diamond}} {#2}}}
\newcommand{\notcompat}[2]{\ensuremath{{#1} \mathbin{\scaleobj{1.2}{\centernot{\diamond}}} {#2}}}
\newcommand{\windowopt}[2]{{#2}~\textsf{\textbf{segment-optimal}}_{#1}}
\newcommand{\wopttext}{segment optimal}
\newcommand{\compressed}[1]{{#1}~\textsf{\textbf{compact}}}
\newcommand{\locallyopt}[2]{{#2}~\textsf{\textbf{locally-optimal}}_{#1}}
\newcommand{\qubits}[1]{\mathsf{qubits}({#1})}
\newcommand{\tmemprog}{memory-progress\xspace}
\newcommand{\tmempres}{memory-preservation\xspace}

\newcommand{\kwint}{\kw{int}}
\newcommand{\kwnat}{\kw{nat}}
\newcommand{\kwfut}{\kw{fut}}
\newcommand{\kwprod}[2]{\ensuremath{{#1} \times {#2}}}
\newcommand{\kwarr}[2]{\ensuremath{{#1} \ra {#2}}}
\newcommand{\kwloc}[1]{\ensuremath{{#1}~\kw{loc}}}
\newcommand{\kwref}[1]{\ensuremath{{#1}~\kw{ref}}}

\newcommand{\kwtid}{\kw{tid}}
\newcommand{\kwunit}{\kw{unit}}
\newcommand{\kwok}{\kw{ok}}

\newcommand{\heap}{H}
\newcommand{\spc}{H}
\newcommand{\empspc}{\emptyset}
\newcommand{\spaceext}[3]{{#1}[{#2} \mapsto {#3}]}

\newcommand{\catspace}{\uplus}
\newcommand{\catheap}{\uplus}

\newcommand{\heapun}[1]{\langle #1 \rangle}
\newcommand{\heapbi}[2]{\langle #1 ; #2 \rangle}
\newcommand{\heaptri}[3]{\langle #1 ; #2; #3 \rangle}
\newcommand{\heapquad}[4]{\langle #1 ; #2 ; #3 ; #4 \rangle}
\newcommand{\restctx}[2]{\ensuremath{#1 \upharpoonright_{#2}}}
\newcommand{\freeloc}[1]{\ensuremath{\mathsf{FL}(#1)}}
\newcommand{\locs}[1]{\ensuremath{\mathsf{Loc}(#1)}}
\newcommand{\diff}[1]{\ensuremath{\mathsf{Diff}(#1)}}

\newcommand{\rename}[3]{[#2 \mapsto #3](#1)}
\newcommand{\kwt}{\kw{t}}

\newcommand{\estore}{[~]}
\newcommand{\mkstore}[2]{\ensuremath{{#1}::{#2}}}

\newcommand{\kwn}{\kw{n}}
\newcommand{\kwlet}[3]{\kw{let}~{#1}={#2}~\kw{in}~{#3}~\kw{end}}
\newcommand{\kwfun}[3]{\ensuremath{\kw{fun}~{#1}~{#2}~\kw{is}~{#3}~\kw
{end}}}
\newcommand{\kwpair}[2]{\ensuremath{\langle{#1},{#2}}\rangle}
\newcommand{\kwapply}[2]{\ensuremath{{#1}~{#2}}}
\newcommand{\kwfst}[1]{\ensuremath{\kw{fst}\cdparens{#1}}}
\newcommand{\kwsnd}[1]{\ensuremath{\kw{snd}\cdparens{#1}}}
\newcommand{\gcing}[1]{\ensuremath{[#1]}}
\newcommand{\kwnew}[1]{\ensuremath{\kw{ref}(#1)}}
\newcommand{\kwderef}[1]{\ensuremath{\mathop{!}#1}}
\newcommand{\kwwrite}[2]{\ensuremath{#1 \mathop{:=} #2}}

\newcommand{\kwletrec}[2]{\ensuremath{#1 \mathop{\cdot} #2}}
\newcommand{\kwtask}[3]{\ensuremath{#1 \mathop{\cdot} #2 \mathop{\cdot} #3}}
\newcommand{\kwtaskalt}[2]{\ensuremath{#1 \mathop{\cdot} #2}}
\newcommand{\halt}{\bot}
\newcommand{\tree}{T}
\newcommand{\trace}{t}
\newcommand{\kwfork}[1]{{\ensuremath{\kw{fork}\cdparens{#1}}}}
\newcommand{\kwjoin}[1]{{\ensuremath{\kw{join}\cdparens{#1}}}}
\newcommand{\kwunitv}{\ensuremath{(\,)}}
\newcommand{\kwtidv}{\ensuremath{\kw{t}}}
\newcommand{\gheap}{G}
\newcommand{\lheap}{\heap}
\newcommand{\tolheap}[1]{\ensuremath{\Delta(#1)}}
\newcommand{\theap}[2]{\left(#1, #2\right)}

\newcommand{\cdpar}{\texttt{par}}
\newcommand{\kwpar}[2]
           {\ensuremath{\mathop{\vartriangleleft \hspace{-0.1em} #1, #2
               \hspace{-0.1em} \vartriangleright}}}
\newcommand{\kwpara}[2]
           {\ensuremath{\mathop{\blacktriangleleft \hspace{-0.1em} #1, #2
               \hspace{-0.1em} \blacktriangleright}}}
\newcommand{\kwparl}[2]{\ensuremath{#1 \overset{\leftarrow}{\|} #2}}
\newcommand{\kwparr}[2]{\ensuremath{#1 \overset{\rightarrow}{\|} #2}}
\newcommand{\task}{T}
\newcommand{\config}{\mathcal{C}}

\newcommand{\flate}[1]{\hat{#1}}
\newcommand{\flatten}[3]{\left\| #2 \right\|_{#1} \leadsto #3}
\newcommand{\fstep}{\step}

\renewcommand{\a}{\ensuremath{\alpha}}
\renewcommand{\b}{\ensuremath{\beta}}
\newcommand{\h}{\ensuremath{\eta}}
\renewcommand{\r}{\ensuremath{\rho}}
\newcommand{\p}{\ensuremath{P}}
\newcommand{\s}{\p}
\newcommand{\om}{\ensuremath{\Omega}}
\renewcommand{\l}{\ensuremath{l}}
\newcommand{\sig}{\ensuremath{\Sigma}}
\newcommand{\empctx}{\ensuremath{\cdot}}

\newcommand{\la}{\leftarrow}
\newcommand{\ra}{\rightarrow}
\newcommand{\pstep}{\Rightarrow}
\newcommand{\tstep}{\Rightarrow}
\newcommand{\optstep}{\longmapsto}
\newcommand{\compstep}{\longmapsto_{\delta}}
\newcommand{\localstep}[3]{{#2} \overset{#1}{\optstep} {#3}}
\newcommand{\globstep}[2]{{#1} \compstep {#2}}
\newcommand{\step}{\ra}
\newcommand{\stepgc}[1]{\xra[{\mbox{\tiny GC}}]{#1}}
\newcommand{\gcstep}{\ra_{\mbox{\tiny GC}}}
\newcommand{\pgcstep}{\pstep_{\mbox{\tiny GC}}}
\newcommand{\cgcstep}{\rightarrow_{\mbox{\tiny CGC}}}


\newcommand{\fresh}{\ensuremath{\; \mathsf{fresh}}}
\newcommand{\starrow}[1]{\xrightarrow{#1}}
\newcommand{\alloc}[4]{#1; #2 \starrow{alloc} #3; #4}
\newcommand{\update}[4]{#1; #2; #3 \starrow{update} #4}
\newcommand{\lookup}[3]{#1; #2 \starrow{lookup} #3}
\newcommand{\newtask}[2]{#1 \starrow{new} #2}
\newcommand{\isdone}[3]{#1 \starrow{done} #2; #3}
\newcommand{\diffs}[1]{\mathit{diff}(#1)}
\newcommand{\initial}{\ensuremath{\;\mathsf{initial}}}
\newcommand{\htyped}[3]{\vdash_{#3} #1 : #2}

\newcommand{\heaptype}[3]{\left(#1, #2\right) : #3}
\newcommand{\allocg}[4]{\mathit{allocg}\left(#1, #2\right) = \left(#3, #4\right)}
\newcommand{\allocl}[4]{\mathit{allocl}\left(#1, #2\right) = \left(#3, #4\right)}
\newcommand{\promote}[6]{#1; #2; #3 \starrow{promote} #4; #5; #6}
\newcommand{\promotebrl}[3]{#1; #2; #3}
\newcommand{\promotebra}{\starrow{promote}}
\newcommand{\promotebrr}[3]{#1; #2; #3}
\newcommand{\pmap}{M}
\newcommand{\greachable}[1]{\mathsf{greachable}\left(#1\right)}

\newtheorem{thm}[theorem]{Theorem}
\newtheorem{lem}[theorem]{Lemma}

\newenvironment{rulearray}
{
\newcommand{\newcol}{\qquad}
\newcommand{\newcolhalf}{\quad}
\newcommand{\newrow}{\\[4ex]}
\newcommand{\newrowhalf}{\\[2ex]}
\[
\begin{array}{c}
}
{
\end{array}
\]
\let\newcol\undefined
\let\newrow\undefined
}

\newcommand{\ramtodo}[2][]
{\todo[color=magenta,author=Ram,size=\small,#1]{#2}}

\newcommand{\defn}[1]{\emph{\textbf{#1}}}
\newcommand{\mpl}{\textsf{MPL}}

\newcommand{\rulereftwo}[2]{rules~\rulename{#1} and \rulename{#2}}
\newcommand{\with}{\ensuremath{\mathbin;}}

\newcommand{\highlight}[1]{\colorbox{gray!20}{\ensuremath{#1}}}
\newcommand{\hred}[1]{\colorbox{red!10}{\ensuremath{#1}}}
\newcommand{\hblue}[1]{\colorbox{blue!10}{\ensuremath{#1}}}
\newcommand{\hgreen}[1]{\colorbox{green!10}{\ensuremath{#1}}}
\newcommand{\sizeof}[1]{\ensuremath{\lvert #1 \rvert}}
\newcommand{\costof}[1]{\ensuremath{\mathbf{cost} ({#1})}}
\newcommand{\cost}{\ensuremath{\mathbf{cost}}}
\newcommand{\oracle}{\ensuremath{\mathbf{oracle}}}

\definecolor{darkblue}{HTML}{0007C9}
\newcommand{\mh}[1]{{\ensuremath{\mbox{\ensuremath{#1}}}}}

\newcommand{\ctxemp}{\ensuremath{\cdot}}
\newcommand{\ctxext}[3]{\ensuremath{#1,#2\!:\!#3}} 
\newcommand{\etyped}[4]{\ensuremath{{#1} \vdash_{#2} {#3} : {#4}}}
\newcommand{\memtyped}[3]{\ensuremath{{#1} \vdash {#2} : {#3}}}
\newcommand{\gtyped}[3]{\ensuremath{{#1} \vdash {#2} : {#3}}}
\newcommand{\httyped}[6]{\ensuremath{{#1} \with {#2} \with {#3} \vdash {#4}\!\cdot\!{#5} : {#6}}}
\newcommand{\ttyped}[5]{\ensuremath{{#1} \with {#2} \with {#3} \vdash {#4} : {#5}}}

\newcommand{\sttyped}[6]{\ensuremath{{\vdash_{#1} {#2} \with {#3} \with {#4} \with {#5} : {#6}}}}
\newcommand{\getyped}[6]{\ensuremath{{#1} \vdash_{#2, #3} {#4} \with {#5} : {#6}}}

\newcommand{\typnat}{\kw{nat}}
\newcommand{\typint}{\kw{int}}
\newcommand{\typbool}{\kw{bool}}
\newcommand{\typchar}{\kw{char}}
\newcommand{\typfloat}{\kw{float}}
\newcommand{\typprod}[2]{\ensuremath{{#1} \times {#2}}}
\newcommand{\typfun}[2]{\ensuremath{{#1}\!\rightarrow\!{#2}}}
\newcommand{\typref}[1]{\ensuremath{{#1}~\kw{ref}}}
\newcommand{\typfut}[1]{\ensuremath{{#1}~\kw{fut}}}
\newcommand{\futs}[1]{\mathsf{Fut}(#1)}
\newcommand{\futsmem}[2]{\mathsf{Fut}(#1, #2)}

\newcommand{\enat}[1]{\ensuremath{#1}}
\newcommand{\efun}[3]{\ensuremath{\kw{fun}~{#1}~{#2}~\kw{is}~{#3}}}
\newcommand{\epair}[2]{\ensuremath{\langle {#1}, {#2} \rangle}}
\newcommand{\eapp}[2]{\ensuremath{{#1}~{#2}}}
\newcommand{\efst}[1]{\ensuremath{\kw{fst}~{#1}}}
\newcommand{\esnd}[1]{\ensuremath{\kw{snd}~{#1}}}
\newcommand{\eref}[1]{\ensuremath{\kw{ref}~{#1}}}
\newcommand{\ebang}[1]{\ensuremath{\mathop{!}#1}}
\newcommand{\eupd}[2]{\ensuremath{#1 \mathop{:=} #2}}
\newcommand{\elet}[3]{\kw{let}~{#1}={#2}~\kw{in}~{#3}}
\newcommand{\epar}[2]{\ensuremath{\langle {#1}\mathbin\|{#2} \rangle}}

\newcommand{\purelang}{{\sc $\lambda^{P}$}}
\newcommand{\reflang}{{\sc $\lambda^{U}$}}


\newcommand{\mememp}{\emptyset}
\newcommand{\memext}[3]{\ensuremath{#1}[{#2} \!\hookrightarrow\! {#3}]}

\newcommand{\actarrow}{\blacktriangleright}
\newcommand{\pasarrow}{\vartriangleright}
\newcommand{\fmap}{\Delta}
\newcommand{\femp}{\emptyset}
\newcommand{\fmapactive}[3]{\ensuremath{#1} [{#2} \!\actarrow\! {#3}]}
\newcommand{\fmapjoined}[3]{\ensuremath{#1} [{#2} \!\pasarrow\! {#3}]}

\newcommand{\futctxt}{\Knownctxt}
\newcommand{\Futctxt}{\Knownctxt}
\newcommand{\ReadLocs}{\mathsf{R}}
\newcommand{\Knownctxt}{{K}}
\newcommand{\fut}[2]{\kw{fut}(#1; #2)}
\newcommand{\harpfut}[1]{\kw{fut}(#1)}
\newcommand{\futctxtemp}{\emptyset}
\newcommand{\te}[1]{\{#1\}}
\newcommand{\hemp}{\emptyset}
\newcommand{\hcat}{\cup}
\newcommand{\hext}[2]{{#1},{#2}}

\newcommand{\tack}{\oplus}
\newcommand{\plug}{\bowtie}

\newcommand{\omparam}{step length}
\newcommand{\actwrite}[2]{\textbf{U}{#1}\!\Leftarrow\!{#2}}
\newcommand{\actalloc}[2]{\textbf{A}{#1}\!\Leftarrow\!{#2}}
\newcommand{\actread}[2]{\textbf{R}{#1}\!\Rightarrow\!{#2}}
\newcommand{\actsync}[2]{\textbf{F}{#1}\!\Rightarrow\!{#2}}
\newcommand{\actnone}{\textbf{N}}

\newcommand{\stepstar}{\longmapsto^*}
\newcommand{\tstepstar}{\tstep^*}
\newcommand{\drfstep}[2]{\xmapsto[{#2}]{{\,#1\,}}}
\newcommand{\drfstepstar}[1]{\xmapsto{{\,#1\,}}\joinrel\mathrel{^*}}

\newcommand{\gt}[2]{\ensuremath{\mathsf{GT}({#1},{#2})}}
\newcommand{\gemp}{\bullet}
\newcommand{\gseq}[2]{{#1}\oplus{#2}}
\newcommand{\gseqnamed}[3]{{#1}\oplus_{#2}{#3}}
\newcommand{\gseqa}[2]{\gseqnamed{#1}{a}{#2}}
\newcommand{\gseqb}[2]{\gseqnamed{#1}{b}{#2}}
\newcommand{\gspawn}[1]{\mathsf{spawn}\ {#1}}
\newcommand{\gsync}[1]{\mathsf{sync}\ {#1}}
\newcommand{\ghead}[1]{\mathsf{hd}(#1)}
\newcommand{\gtail}[1]{\mathsf{tl}(#1)}

\newcommand{\fcpar}[3]{\ensuremath{\gseq{#1}{(\gpar{#2}{#3})}}}

\newcommand{\gmerge}[2]{\bowtie_F ({#1}, {#2})}
\newcommand{\gmergerel}[3]{\bowtie_R ({#1}, {#2}) \downarrow {#3}}
\newcommand{\gcseq}[1]{\ensuremath{[#1]}}
\newcommand{\gcpar}[3]{\ensuremath{\gseq{#1}{(\gpar{#2}{#3})}}}
\newcommand{\gcparnamed}[4]{\ensuremath{\gseq{#1}{({#2}\otimes_{#3}{#4})}}}
\newcommand{\gcspawn}[4]{\ensuremath{\gseq{#1}{\gseq{#2}{(\gpar{#3}{#4})}}}}

\newcommand{\gpar}[2]{{#1}\otimes_{a}{#2}}
\newcommand{\gw}[1]{\ensuremath{\mathsf{W}({#1})}}
\newcommand{\ga}[1]{\ensuremath{\mathsf{A}({#1})}}
\newcommand{\greads}[1]{\ensuremath{\ReadLocs({#1})}}
\newcommand{\gaw}[1]{\ensuremath{\mathsf{AW}({#1})}}
\newcommand{\lw}[1]{\ensuremath{\mathsf{LW}({#1})}}
\newcommand{\alw}[1]{\ensuremath{\mathsf{A}({#1}) \cup \mathsf{LW}({#1})}}
\newcommand{\gabw}[1]{\ensuremath{\gaw{#1}}}

\newcommand{\saw}[1]{\ensuremath{\mathsf{SP}({#1})}}
\newcommand{\gf}[1]{\ensuremath{\overline{#1}}}

\newcommand{\extendsfj}[2]{\ensuremath{{#1}~\textsf{extends}~{#2}~\textsf{with f/j}}}
\newcommand{\extendswith}[3]{\ensuremath{{#1}~\textsf{extends}~{#2}~\textsf{with}~{#3}}}

\newcommand{\geok}[2]{{#1} \with {#2}~\textit{ok}}
\newcommand{\loc}[1]{{#1}~\textit{loc}}
\newcommand{\gleaf}[1]{{#1}~\textit{leaf}}
\newcommand{\gnode}[1]{{#1}~\textit{node}}

\newcommand{\drf}[2]{{#1} \vdash {#2}~{\textit{drf}}}
\newcommand{\drfb}[2]{{#1} \vdash {#2}~{\textit{wrf}}}
\newcommand{\drft}{\textit{drf}}

\theoremstyle{plain}
\newtheorem{property}{Property}


\newcommand{\flushLR}[3]{\hspace*{#3}\makebox[0em][l]{#1}\hspace*{\fill}\makebox[0em][r]{#2}\hspace*{#3}}
\newcommand{\rulesdesc}[2]{\textbf{#1}\hspace*{1em}{\fbox{#2}}}
\newcommand{\desc}[1]{\textbf{#1}}

\newdimen\zzlistingsize
\newdimen\zzlistingsizedefault
\zzlistingsizedefault=9pt
\newdimen\kwlistingsize
\kwlistingsize=9pt
\zzlistingsize=\zzlistingsizedefault
\gdef\lco{black}

\begin{abstract}
%
Recent advances in quantum architectures and computing have motivated
the development of new optimizing compilers for quantum programs or
circuits.
Even though steady progress has been made, existing quantum
optimization techniques remain asymptotically and practically
inefficient and are unable to offer guarantees on the quality of the
optimization.
Because many global quantum circuit optimization problems belong to
the complexity class QMA (the quantum analog of NP), it is not clear
whether quality and efficiency guarantees can both be achieved.

In this paper, we present optimization techniques for quantum programs
that can offer both efficiency and quality guarantees.
Rather than requiring global optimality, our approach relies on a form
of local optimality that requires each and every segment of the
circuit to be optimal.
We show that the local optimality notion can be attained by a
cut-and-meld circuit optimization algorithm.
%
The idea behind the algorithm is to cut a circuit into subcircuits,
optimize each subcircuit independently by using a specified ``oracle''
optimizer, and meld the subcircuits by optimizing across the cuts
lazily as needed.
We specify the algorithm and prove that it ensures local optimality.
To prove efficiency, we show that, under some assumptions, the main
optimization phase of the algorithm requires a linear number of calls to
the oracle optimizer.
We implement and evaluate the local-optimality approach to circuit
optimization and compare with the state-of-the-art optimizers.
The empirical results show that our cut-and-meld algorithm can
outperform existing optimizers significantly, by more than an order of
magnitude on average, while also slightly improving optimization
quality.
These results show that local optimality can be a relatively strong
optimization criterion and can be attained efficiently.

\if0

A key challenge in utilizing modern and future quantum computers is to
optimize a quantum circuit (program) by removing unnecessary gates.
Unfortunately this optimization problem is QMA hard (QMA is the analog
of NP but for quantum computers) and unless the polynomial hierarchy
collapses, it is unlikely that we will have efficient quantum circuit
optimizers that deliver fully optimal circuits.
Indeed, state-of-the-art quantum circuit optimizers can take hours or
even days to optimize circuits with a few hundred gates.

In this paper, we propose a notion of local optimality, called
$\Omega$-optimality that requires all $\Omega$-deep contiguous
segments of the circuit to be optimal.
For example, a circuit is 16-optimal if any subcircuit of depth at~16
is optimal.
%
%
We present an efficient local-optimization algorithm that takes as
input 1) an oracle optimizer, which works well for ``small'' circuits
(e.g., tens of gates), and 2) an input circuit to optimize.
The algorithm partitions the circuit into smaller subcircuits,
recursively optimizes each subcircuit and uses the base optimizer for
the smallest subcircuits.  The algorithm then joins the optimized
subcircuits by using a ``meld'' operation.
As it joins two circuits, the meld operation optimizes along ``the
seam'' by propagating optimizations from one partition to others
until it achieves convergence, i.e., no further
optimizations are possible.
We prove that our algorithm guarantees $\Omega$-optimality for an
important class of cost functions
and is efficient: it makes at most $O(|C|)$ calls to the oracle, where
$|C|$ is the size of the input circuit.
This means that if the oracle takes constant time for $\Omega$-deep
circuits for any fixed $\Omega$, then the algorithm requires linear
time to locally optimize the number of gates in the circuit.
We show that the algorithm is practical by implementing it and
applying it to three different optimizers chosen from previous work.
Our experiments with a variety of quantum circuits confirm the
bounds obtained by our asymptotic analysis
and
show that local optimality is a good quality criterion that enables
efficient circuit optimization.
\fi

\end{abstract}
\maketitle
\section{Introduction}
\if0
\paragraph{TODO: WEAVE: THREE QUESTIONS}
\begin{itemize}
    \item Does a locally optimal circuit exist?
    \item How does the quality of a locally optimal circuit compare with a global
    optimality?
    \item Is it possible to find a locally optimal circuit efficiently?
    \item Does our local optimization technique work well with different optimizers?
\end{itemize}

... our results are:
\begin{itemize}
    \item Definition of local optimality (semantics)
    \item Evidence that local optimality is, in practice, ``just as good
    as global optimality'' in terms of quality
    \item Algorithm for local optimality and proof of (nearly) linear time
    \item Empirical evaluation showing that our algorithm scales linearly
    with circuit size.
\end{itemize}
\fi

Quantum computing holds the potential to solve problems in fields such
as chemistry simulation~\cite{Feynman82,Benioff80},
optimization~\cite{childs2017quantum, peruzzo2014variational},
cryptography~\cite{shor1994algorithms}, and machine
learning~\cite{biamonte2017quantum, schuld2015introduction}
that can be very challenging for classical computing techniques.
Key to realizing the advantage of quantum computing in these and
similar fields is achieving the scale of thousands of qubits and
millions of quantum operations (a.k.a., gates), often with high
fidelity (minimal error).
%
~\cite{hoefler2023disentangling, gidney2021factor, alexeev2021quantum}.
Over the past decade, the potential of quantum computing and the
challenges of scaling it have motivated much work on both hardware and
software.
On the hardware front, quantum computers based on
superconducting circuits~\cite{kjaergaard2020superconducting},
trapped ions~\cite{monroe2021programmable, moses2023race}, and
Rydberg atom arrays~\cite{ebadi2021quantum, scholl2021quantum} have
advanced rapidly, scaling to hundreds of qubits and achieving
entanglement fidelity over $99\%$.
On the software front, a plethora of programming
languages,
optimizing compilers, and run-time environments have been proposed,
both in industry and in academia
(e.g.,~\cite{selinger-towards-2004,quipper-2013,prz-qwire-2017,Nam_2018,tket-2020,silq-2020,hietala2021verified,yc-tower-2022,twist-2022,quartz-2022,qiskit-2023,v+qunity-2023}).

%

%
Due to the limitations of modern quantum hardware and the need for
scaling the hardware to a larger number of gates, optimization of
quantum programs or circuits remain key to realizing the potential of
quantum computing.
The problem, therefore, has attracted significant research.
Starting with the fact that global optimization of circuits is QMA
hard and therefore unlikely to succeed, Nam et al
developed a set of heuristics for optimizing quantum programs or
circuits~\cite{Nam_2018}.
Their approach takes at least quadratic time in the number of gates
in the circuit, making it difficult to scale to larger circuits,
consisting for example hundreds of thousands of gates.
In followup work Hietala et al.~\cite{hietala2021verified} presented a
verified implementation of Nam et al.'s approach.
In more recent work Xu et al.~\cite{quartz-2022} presented techniques for
automatically discovering peephole optimizations (instead of human-generated heuristics) and applying them to optimize a circuit.
Xu et al.'s optimization algorithm, however, requires exponential time
in the number of the optimization rules and make no quality guarantees
due to pruning techniques used for controlling space and time
consumption.
In follow-up work Xu et al.~\cite{queso-2023} and Li et
al~\cite{li2024quarl} improve on Quartz's run-time.
All of these optimizers can take hours to optimize moderately large
circuits (\secref{eval})
and cannot make any quality guarantees.

Given this state of the art and the fact that global optimality is unlikely
to be efficiently attainable due to its QMA hardness~\cite{Nam_2018},
we ask:
\emph{\textbf{is it possible to offer a formal quality guarantee while also ensuring 
efficiency?}}

In this work, we answer this question affirmatively and thus bridge quality
and efficiency guarantees.
We first present a form of ``local optimality'' and its slightly
weaker form called ``segment optimality'', and present a rewriting
semantics for achieving local optimality.
For the rewriting semantics, we consider a reasonably broad set of
cost functions (as optimization goals) and
prove that saturating applications of local rewriting rules yield
local optimality.
To ensure generality, we formulate local optimality in an
``unopinionated'' fashion in the sense that we do not make any
assumptions about which optimizations may be performed by the local
rewrites.
Instead, we defer all optimization decisions to an
abstract \defn{oracle} optimizer that can be instantiated with an
available optimizer as desired.
Local optimality differs from global optimality in the sense that it
requires that each segment of the circuit, rather than the global
circuit, is optimal with respect to the chosen oracle.
We believe that this is a strong optimality guarantee, because it
requires optimality of each and every segment of the circuit.

Our rewriting semantics formalizes the notion of local optimality, but
it does not yield an efficient algorithm.
We present a local-optimization algorithm, called \algname{}
(Optimize-and-Compact) that takes a circuit and optimizes it in
rounds, each of which consists of an optimization and compaction
phase.
The optimization phase takes the circuit and outputs a segment-optimal
version of it, and the compaction phase compacts the circuit by
eliminating ``gaps'' left by the optimization, potentially enabling new
optimizations.
The \algname{} algorithm repeats the optimization and compaction
phases until convergence, where no more optimizations may be found.

To ensure efficiency, the optimization phase of \algname{} employs a
variant of the circuit cutting technique that was initially developed
for simulation of quantum circuits on classical
hardware~\cite{circuit-cutting-2020,tang-cutqc-2021,bravyi-future-qc-2022,k+cutting-2024}.
Specifically, our algorithm cuts the circuit hierarchically into
smaller subcircuits, optimizes each subcircuit independently, and
combines the optimized subcircuits into a locally optimal circuit.
To optimize small segments, the algorithm uses any chosen oracle and
does not make any restrictions on the optimizations that may be
performed by the oracle.
The approach can therefore be used in conjunction with many existing
optimizers that support different gate sets and cost functions.
By cutting the circuit into smaller circuits, the algorithm ensures
that most of the optimizations take place in the context of small
circuits, which then helps reduce the total optimization cost.
But optimizing subcircuits independently can miss crucial
optimizations.
We therefore propose a \defn{melding} technique to ``meld'' the
optimized subcircuits by optimizing over the cuts.
To ensure efficiency, our melding technique starts at the cut,
optimizes over the cut, and proceeds deeper into the circuit only as
needed.

The correctness and efficiency properties of our cut-and-meld
algorithm are far from obvious.
In particular, it may appear possible that 1) the algorithm misses
optimizations and 2) the cost of meld operations grows large.
We show that none of these are possible and establish that the
optimization algorithm guarantees segment optimality and accepts a
linear time cost bound in terms of the call to oracle.
For the efficiency bound, we use an ``output-sensitive'' analysis
technique that charges costs not only to input size but also to the
cost improvement, i.e., reduction in the cost (e.g., number of gates)
between the input and the output.
%
Even though the algorithm can in principle take a linear number of
rounds, this appears unlikely, and we observe in practice that it
requires very few rounds (e.g., less than four on average).


To evaluate the effectiveness of local optimality, we implement
the \algname{} algorithm
%
and evaluate it by considering a variety of quantum circuits.
%
%
Our experiments show that our \algname{} implementation improves
efficiency, by more than one order of magnitude (on average), and
closely matches or improves optimization quality.
These results show that local optimality is a reasonably strong
optimization criterion and our cut-and-meld algorithm can be a
efficient approach to optimizing circuits.
Because our approach is generic, and can be tooled to use existing
optimizers, it can be used to amplify the effectiveness of existing
optimizers to optimize large circuits.

Specific contributions of the paper include the following.
\begin{itemize}
    \item The formulation  of local optimality and its formal definition.

\item A rewriting semantics for local optimality and proofs that saturating rewrites yield locally optimal circuits.

    \item An algorithm \algname{} for optimizing quantum circuits locally.

\item Proof of correctness of \algname{}.
    
  \item Run-time complexity bounds and their proofs for  the \algname{} algorithm.
    
    \item Implementation and a comprehensive empirical evaluation of \algname{}, demonstrating
    the benefits of local optimality and giving experimental evidence for the practicality of the approach.
\end{itemize}

We note that due to space restrictions, we have omitted proofs of
 correctness and efficiency;
\iffull
we provide these proofs and additional 
experiments in the Appendix.
\else
we provide these proofs and additional 
experiments in the Appendix submitted as supplementary material.
\fi

\section{Background}

In this section,
we provide some quantum computing background that is relevant for the paper.

\paragraph{Quantum States, Gates, and Circuits}
The state of a quantum bit (or \emph{qubit}) is represented as a linear superposition,
$\ket{\psi} = \alpha\ket{0} + \beta\ket{1}$, of the single-qubit basis vectors $\ket{0} = [1\;\; 0]^\textnormal{T}$ and $\ket{1} = [0\;\; 1]^T$,
for $\alpha,\beta \in \mathbb{C}$ with normalization constraint $|\alpha|^2 + |\beta|^2 = 1$.
A valid transformation from one quantum state to another is described
as a $2\times 2$ complex unitary matrix, $U$,
where $U^\dagger U = I$. An $n$-qubit quantum state is a superposition of $2^n$ basis vectors,
$\ket{\psi} = \sum_{i\in\{0,1\}^n} \alpha_i \ket{i}$,
and its transformation is a $2^n\times 2^n$ unitary matrix.

A \emph{quantum circuit} is an ordered sequence of quantum logic gates selected from a predefined gate set.
Each \emph{quantum gate} represents a unitary matrix that transforms the state of one, two, or a few qubits.
%
%
Given a circuit $C$, the \emph{size} ($|C|$) is the total number of gates used,
while the \emph{width} ($n$) represents the number of qubits.
The \emph{depth} ($d$) is the number of circuit layers, wherein each qubit participates in at most one gate.

\paragraph{Circuit Representation.}
Quantum circuits can be represented with many data structures such as
graphs, matrices, text, and layer diagrams.
In this paper,
we represent circuits with a sequence of layers,
where each layer contains gates that may act at the same time step
on their respective qubits.
We use the layer representation to define and prove the circuit quality guaranteed
by our optimization algorithm.
In addition to the layer representation,
our implementation uses the QASM (quantum assembly language) representation.
The QASM is a standard format
which orders all gates of the circuit in a way that
respects the sequential dependencies between gates.
It is supported by almost all quantum computing
frameworks and enables our implementation to interact with off-the-shelf tools.

\paragraph{Quantum Circuit Synthesis and Optimization}
The goal of \emph{circuit synthesis} is to decompose the desired unitary transformation
into a sequence of basic gates that are physically realizable within the constraints of the underlying quantum hardware architecture.
Quantum circuits for the same unitary transformation can be represented in multiple ways,
and their efficiency can vary when executed on real quantum devices.
\emph{Circuit optimization} aims to take a given quantum circuit as input
and produce another quantum circuit that is
logically equivalent but requires fewer resources or shorter execution time,
such as a reduced number of gates or a reduced circuit depth.
Synthesizing and optimizing large circuits are known to be challenging due to their high dimensionality.
For example, as the number of qubits in a quantum circuit increases, the degree of freedom in the unitary transformation grows exponentially, leading to higher synthesis and optimization complexity. In particular, global optimization of quantum circuits is QMA-hard ~\cite{janzing2003identity}.
%


\section{Local Optimality}
\label{sec:lang}

In this section, we introduce \defn{local optimality} for quantum circuits
using a circuit language called \lang{}, which represents circuits as sequences of layers.
%
%
We define local optimality based on three components:
(1) a base optimizer called the \defn{oracle},
(2) a \defn{cost} function that evaluates the circuit quality, and
(3) a \defn{segment size} $\Omega$,
which determines the scope of \emph{local optimizations}.
A segment refers to a contiguous sequence of layers.
Roughly speaking,
a circuit is locally optimal when it satisfies the following conditions:
\begin{enumerate}
    \item No local optimizations are possible, i.e., the oracle
    cannot optimize any segments of size $\Omega$.
    \item All circuit segments are as compact as possible with no unnecessary gaps.
\end{enumerate}

For a locally optimal circuit, the oracle cannot find more
optimizations unless it operates on segments larger than $\Omega$.
Because our definition is parametric in terms of the oracle,
we can define local optimality for any quantum gate set
by instantiating appropriate oracles.

We then develop a circuit rewriting semantics that produces
locally optimal circuits.
The semantics only uses the oracle on small circuit segments,
each containing at most $\Omega$ contiguous layers.
Using this semantics,
we prove that for a general class of cost functions,
\emph{any} circuit can be transformed into a locally optimal circuit.
This makes local optimality applicable to various cost functions
such as gate count, $\mathsf{T}$ count, and many others.



%
%

%



\subsection{Circuit Syntax and Semantics}

We present our circuit language called \lang{} (Layered Quantum Representation)
which represents a quantum circuit as a sequence of layers.
Figure~\ref{fig:lang:syn} shows the abstract syntax of the language.
We let the variable $q$ denote a qubit, and $G$ denote a gate.
For simplicity, we consider only unary gates $g(q)$ and binary gates
$g(q_1, q_2)$, where $g$ is a gate name in the desired gate set.
These definitions can be easily extended to support gates of any arity.

A \lang{} circuit $C$ consists of a sequence of
layers $\langle L_0, \dots, L_{n - 1} \rangle $,
where each layer $L_i$ is a set of gates
that are applied to qubits in parallel.
The circuit is \defn{well formed} if
the gates of every layer act on disjoint qubits, i.e.,
no layer can apply multiple gates to the same qubit.
As a shorthand, we write $\compat{G_1}{G_2}$ to denote that gates
$G_1$ and $G_2$ act on disjoint qubits,
i.e., $\qubits{G_1} \cap \qubits{G_2} = \emptyset{}$.
We similarly write $\compat{L_1}{L_2}$ for the same condition on layers.
Note that we implicitly assume well-formedness throughout the section
because it is preserved by all our rewriting rules.

We define the \defn{length} of a \lang{} circuit as the number of layers,
and the \defn{size} of a circuit $C$, denoted $|C|$, as the total number of gates.
A \defn{segment} is a contiguous subsequence of layers of the circuit, and
a \defn{k-segment} is a segment of length $k$.
%
We use the Python-style notation $C[i : j]$ to represent a segment containing
layers $\langle L_i, \dots, L_{j - 1} \rangle $ from circuit $C$.
In the case of overflow (where either $i < 0$ or $j > \mathsf{length}(C)$),
we define $C[i : j] = C[\max(0, i) : \min(j, \mathsf{length}(C))]$.

Two circuits $C$ and $C'$ can be concatenated together as $C; C'$,
creating a circuit containing the layers of $C$ followed by layers of $C'$.
Formally,
if $C = \langle L_0 \dots L_{n-1} \rangle$ and $C' = \langle L'_0 \dots L'_{m-1} \rangle$,
then $C;C' = \langle  L_0 \dots L_{n-1},L'_0 \dots L'_{m-1} \rangle$.

%


\begin{figure*}
{
\begin{minipage}{0.45\textwidth}
\[
\begin{array}{llcl}

\textit{Qubit} & q & &
\\
\textit{Gate Name} & g & &
\\
\textit{Gate} & G & \bnfdef &
g\ (q) \bnfalt
g\ (q_1, q_2)
\\
\textit {Layer} & L & \bnfdef & 
\left\{G_0, G_1, \dots, G_{t-1}\right\}
\\
\textit {Circuit} & C & \bnfdef & \langle L_0, L_1, \ldots L_{n-1}\rangle
\end{array}
\]
\end{minipage}
\hfill{}
\begin{minipage}{0.45\textwidth}
\begin{align*}
\qubits{G} &\defeq \begin{cases}
    \{q\}, &G = g(q) \\
    \{q_1, q_2\}, &G = g(q_1, q_2)
\end{cases} \\
\qubits{L} &\defeq \bigcup_{G \in L} \qubits{G}
\end{align*}

\begin{align*}
&\compat{G_1}{G_2} \Leftrightarrow \qubits{G_1} \cap \qubits{G_2} = \emptyset{} \\
&\compat{L_1}{L_2} \Leftrightarrow \qubits{L_1} \cap \qubits{L_2} = \emptyset{} \\
&C~\text{well-formed} \Leftrightarrow \\
&\quad \forall L \in C.~\forall G_1, G_2 \in L.~
\compat{G_1}{G_2}
\end{align*}
\end{minipage}
%
%
\caption{Syntax of \lang{} and well-formed circuits.}
\label{fig:lang:syn}
}

\end{figure*}

%
%




\subsection{Local Optimality}

\begin{figure}
\small
\[
\begin{array}{c}

\infer[]
{
  \forall i.~
  \forall G \in L_{i}.~
  \notcompat{L_{i-1}}{\{G\}}
}
{
  \compressed{\langle L_0, \ldots, L_{n-1} \rangle}
}



\qquad 

\infer[]
{
  \forall ij.~ i \leq j \leq i+\Omega \Rightarrow \costof{\oracle{}(C[i:j])} = \costof{C[i:j]}
}
{
  \windowopt\Omega C
}

\\[3ex]

\infer[]
{
  \compressed{C}
  \\
  \windowopt\Omega C
}
{
  \locallyopt\Omega{C}
}

\end{array}
\]

\caption{Definition of local optimality, parameterized by a $\cost$ function, an \oracle{} optimizer,
and a segment length $\Omega$.}
\label{fig:lopt-defn}
\end{figure}

We introduce two optimality properties on circuits
written in our language \lang{}.
These properties are defined on the following parameters.
First, we assume an abstract \cost{} function over circuits,
where a smaller cost means a better quality circuit.
Second, we introduce a parameter $\Omega$,
which represents the maximum segment length that can be considered for optimization.
In this context,
the optimizations are \emph{local}
because each optimization can only optimize a circuit segment of length $\Omega$.
Third,
we assume an $\oracle$ optimizer that takes a circuit of length $\Omega$ and
produces an equivalent circuit, optimized w.r.t. the cost function.
We assume that the oracle and the cost function are compatible,
meaning that the oracle can only decrease the cost:
\[ \forall C.~\costof{\oracle{}(C)} \leq \costof{C}. \]

\myparagraph{Segment optimal circuits.}
A circuit is segment-optimal
if each and every $\Omega$-segment of the circuit is optimal
for the given \oracle{} and \cost{} function.
\figref{lopt-defn} defines \wopttext{} circuits
as the judgment $\windowopt\Omega C$.
The judgment checks that any segment $C[i:j]$ whose length
is smaller than $\Omega$ (i.e., $i \leq j \leq i+\Omega$)
can not be further optimized by the oracle.
Thus,
calling the oracle on any such
segment does not improve the cost function.
%

\myparagraph{Compact circuits.}
A circuit is \emph{compact} if every gate is in the left-most possible
position, i.e., in the earliest layer possible.
%
\figref{lopt-defn} formalizes this with the judgement $\compressed C$.
The judgment checks every gate $G$ and ensures that
at least one qubit used by $G$ is also used by the previous layer
(i.e., $G \in L_i$ and $\notcompat{L_{i-1}}{\{G\}}$).
%
%
If every gate satisfies this condition, the circuit is compact.

Ensuring that a layered circuit is as compact as possible
is important because compact circuits are more amenable to local optimizations.
For example, consider two $H$ gates on the same qubit, one of them in layer $0$
and the other in layer $3$, with no gates in between.
%
Suppose we have an optimization that cancels two $H$ gates on the same qubit,
when they are on adjacent layers.
This optimization would not apply to
our circuit because the two $H$ gates are not adjacent.
However, if the circuit did not have such unnecessary ``gaps'',
we could apply the optimization and eliminate the two $H$ gates from our circuit.
Compaction ensures that such optimizations are not missed.

\myparagraph{Locally Optimal Circuits.}
A circuit is {locally optimal} if it is both
compact and \wopttext.
This means that each and
every $\Omega$-segment of the circuit is optimal
for the given \oracle{} and \cost{} function
and the circuit as a whole is compact,
ensuring that no more local optimizations are possible.
\figref{lopt-defn} defines locally optimal circuits,
as the judgment $\locallyopt\Omega C$,
%


\subsection{Circuit Rewriting for Local Optimality}

In this section,
we present a rewriting semantics for producing locally optimal circuits.
The rewriting semantics performs local optimizations,
each of which rewrites an $\Omega$-segment,
and compacts the circuit as needed.
%
Given a $\cost{}$ function, a \oracle{}, and a segment length $\Omega$,
we define the rewriting semantics as a relation $\localstep\Omega C {C'}$
which rewrites the circuit $C$ to circuit $C'$.
%
%
\figref{circ-rewrite} shows the rewriting rules
$\rulename{Lopt}$ and $\rulename{ShiftLeft}$
for local optimization and compaction, respectively.

\begin{figure}


\[
\begin{array}{c}
\infer[Lopt]
{
  \mathsf{length}(C) \leq \Omega
  \quad
  C' = \oracle{}(C)
  \quad
  \costof{C'} < \costof{C}
}
{
  \localstep\Omega{P; C; S}{P; C'; S}
}
\\[3ex]
\infer[ShiftLeft]
{
  G \in L_2 \\
  \compat{L_1}{\{G\}} \\
  L'_1 =  L_1 \cup \{G\} \\
  L'_2 = L_2 \setminus \{ G \} 
}
{
  \localstep\Omega{P; \langle L_1, L_2 \rangle; S}{P; \langle L'_1, L'_2 \rangle; S}
}

\end{array}
\]

\caption{Local optimization rewrite rules.}
\label{fig:circ-rewrite}
\end{figure}




\myparagraph{Local Optimization Rule.}
The rule $\rulename{Lopt}$ (\figref{circ-rewrite})
performs one local optimization on a circuit.
It takes a segment $C$ of length $\Omega$,
feeds it to the \oracle{},
and retrieves the output segment $C'$.
If the cost of $C'$ is lower than the cost of $C$,
then the rule replaces the segment $C$ with $C'$.
The rule does not modify the
remaining parts $P$ (prefix) and $S$ (suffix) of the circuit.
%



\myparagraph{Compaction Rule.}
To compact the circuit, we have the rule \rulename{ShiftLeft} (\figref{circ-rewrite}).
In one step,
the rule shifts a gate to the ``left''
by removing it from its current layer and adding it to the preceding layer.
A circuit is compact whenever all gates have been fully shifted left.
Formally, the \rulename{ShiftLeft} rule
considers consecutive layers $L_1$ and $L_2$ and moves a gate
$G$ from layer $L_2$ into $L_1$,
creating new layers $L_1' = L_1 \cup \{G\}$ and $L_2' = L_2 \setminus \{G\}.$
To maintain well-formedness,
the rule checks that no gate in the previous layer operates on the same qubits
($\compat{L_1}{\{G\}}$).
%


\if0
alternative to the rule \rulename{ShiftLeft} would be a rule \rulename{MoveRight},
which would move the gates to the right when possible
and would guarantee that each gate is placed as late as possible.
For our purposes of local optimization,
both rules are theoretically identical as they guarantee
smallest possible length for each segment.
We also observed that it makes no difference to our practical results.

\fi


\myparagraph{Example.}
\begin{figure}[t]
    \centering
    \includegraphics[width=\columnwidth]{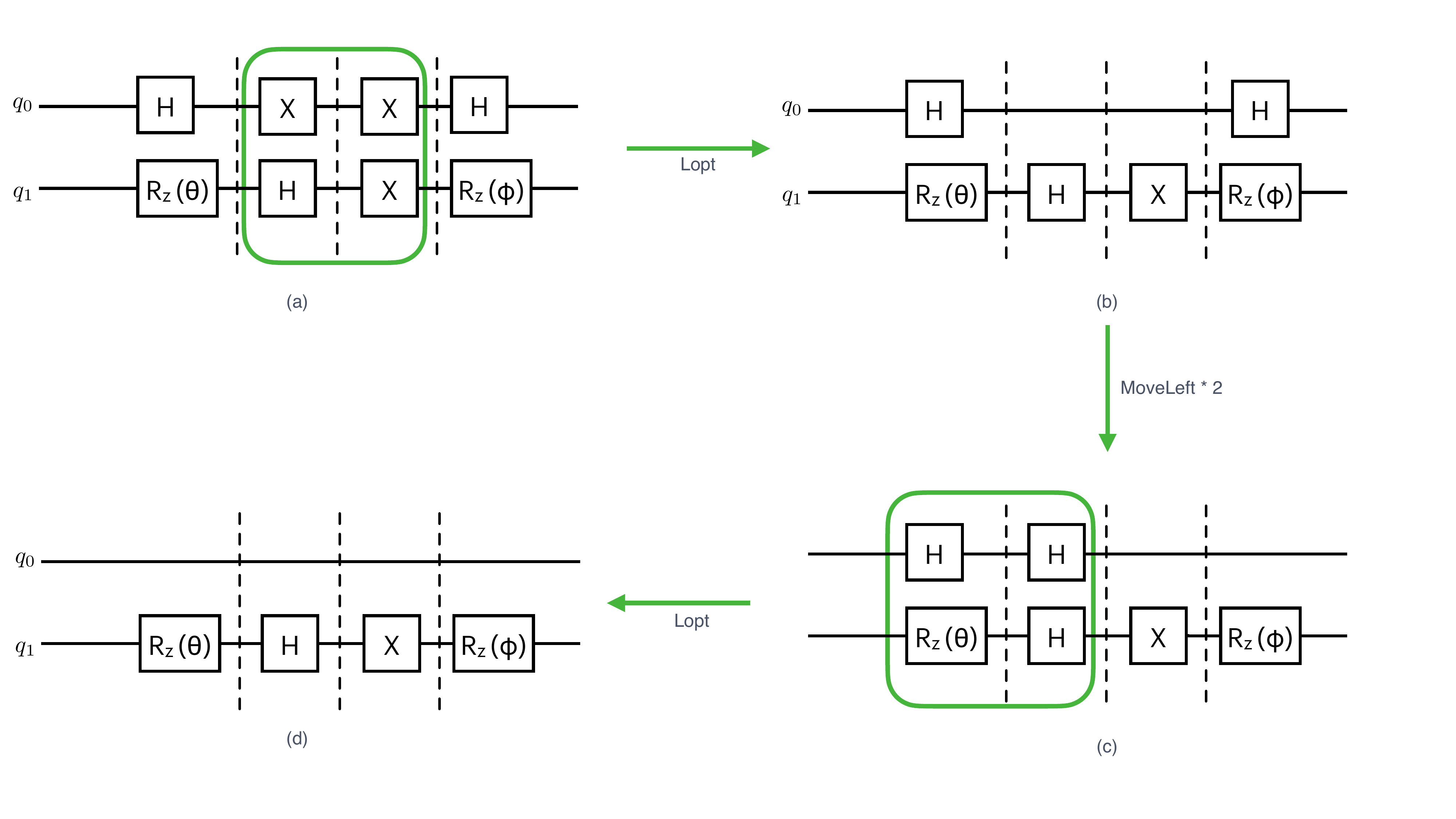}
    \vspace{-0.25in}
    \caption{The figure illustrates how our rewriting semantics optimizes circuits for $\Omega = 2$.
        The figure implicitly assumes an oracle that removes any two consecutive $H$ and $X$ gates.
        At each step, our semantics either selects a segment of size $2$ (denoted by green boxes)
        and performs an optimization,
        or picks a gate and shifts it left.
    }
    \label{fig:compression_unlocks_optimizations}
\end{figure}

\figref{compression_unlocks_optimizations} illustrates how
our rewriting rules can optimize a circuit with $\Omega = 2$.
The dotted lines in the circuit separate the four layers.
The optimizations in the figure implicitly use an oracle that
performs the following actions:
it removes two consecutive $H$ gates on the same qubit because they cancel each other;
similarly, it removes two consecutive $X$ gates on the same qubit.

In the figure,
moving from (a) to (b),
the rule \rulename{Lopt} optimizes the $2-$segment enclosed in the green box and
removes two consecutive X gates on qubit $q_0$.
Then,
going from (b) to (c),
we apply the rule \rulename{ShiftLeft} twice,
moving the gate $H$ gates together,
and compacting the circuit.
Because of this compaction,
the rule \rulename{Lopt} can step from (c) to (d),
canceling the two consecutive $H$ gates.

\subsection{Correspondence between local optimality and local rewrites}

It is easy to show that our definition of local optimality is consistent with the
rewriting semantics, in the sense that (1) locally optimal circuits cannot be further
rewritten (\lemref{converged}), and (2) if a circuit is not locally optimal, then it
always can be rewritten (\lemref{can-step}).
%

\begin{lemma}
\label{lem:converged}
For every circuit $C$ where $\locallyopt\Omega{C}$, there is no $C'$ such that $\localstep\Omega C {C'}$.
\end{lemma}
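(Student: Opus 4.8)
The plan is to argue by contradiction. Suppose $\locallyopt\Omega{C}$ holds and yet there exists some $C'$ with $\localstep\Omega C {C'}$. By the definition in \figref{lopt-defn}, local optimality unpacks into the two premises $\compressed{C}$ and $\windowopt\Omega C$. Since the rewriting relation is generated by exactly the two rules of \figref{circ-rewrite}, the step $\localstep\Omega C {C'}$ must be an instance of either $\rulename{Lopt}$ or $\rulename{ShiftLeft}$, and I would refute each case in turn using one of the two premises.

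For the $\rulename{Lopt}$ case, the rule decomposes $C$ as $P; D; S$ with $\mathsf{length}(D) \leq \Omega$ and requires $\costof{\oracle{}(D)} < \costof{D}$. The key observation is that $D$ is exactly a segment of $C$: if $P$ has $p$ layers, then $D = C[p : p + \mathsf{length}(D)]$, so setting $i = p$ and $j = p + \mathsf{length}(D)$ gives $i \leq j \leq i + \Omega$. Segment optimality $\windowopt\Omega C$ then yields $\costof{\oracle{}(C[i:j])} = \costof{C[i:j]}$, i.e. $\costof{\oracle{}(D)} = \costof{D}$, which directly contradicts the strict inequality demanded by the rule.

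For the $\rulename{ShiftLeft}$ case, the rule decomposes $C$ as $P; \langle L_1, L_2 \rangle; S$ and fires on a gate $G \in L_2$ satisfying $\compat{L_1}{\{G\}}$. Here I would note that $L_2$ occupies some index $k \geq 1$ in $C$ (it has the predecessor $L_1$), so the preceding layer of $L_2$ is precisely $L_1$. Compactness $\compressed{C}$ applied to $G \in L_k$ gives $\notcompat{L_{k-1}}{\{G\}}$, that is $\notcompat{L_1}{\{G\}}$, which contradicts the premise $\compat{L_1}{\{G\}}$ of the rule. Both cases being impossible, no such $C'$ can exist.

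The argument is essentially routine once the definitions are unfolded; the only care needed is the bookkeeping that identifies the rule's distinguished sub-object (the segment $D$ in $\rulename{Lopt}$, the layer $L_2$ and its predecessor in $\rulename{ShiftLeft}$) with the correspondingly indexed object in the global judgments $\windowopt\Omega C$ and $\compressed{C}$. I do not anticipate a genuine obstacle, since this lemma is the easy half of the correspondence: it only rules out steps rather than constructing them. The harder direction is the companion statement \lemref{can-step}, which must exhibit an applicable rewrite whenever local optimality fails.
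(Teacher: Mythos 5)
Your proof is correct: the case analysis on the two rules \rulename{Lopt} and \rulename{ShiftLeft}, refuting the first via $\windowopt\Omega C$ and the second via $\compressed C$, is exactly the argument the paper has in mind (the paper states this lemma as "easy to show" and gives no explicit proof, deferring only the harder results to the appendix). Your indexing bookkeeping --- identifying the rule's segment $D$ with $C[i:j]$ and the rule's $L_1$ with the predecessor layer $L_{k-1}$ --- is the only nontrivial content, and you handle it correctly.
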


\begin{lemma}
\label{lem:can-step}
For every circuit $C$, either $\locallyopt\Omega C$ or $\localstep\Omega C {C'}$.
\end{lemma}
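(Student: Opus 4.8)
The plan is to prove the dichotomy by assuming $C$ is not locally optimal and then exhibiting an applicable rewrite rule. By the definition of $\locallyopt\Omega C$ in \figref{lopt-defn}, local optimality is exactly the conjunction of $\compressed C$ and $\windowopt\Omega C$, so if $\locallyopt\Omega C$ does not hold then at least one of these two judgments fails. I would split into these two cases and, in each, produce a witness $C'$ with $\localstep\Omega C {C'}$.

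First I would handle the case where $C$ is not compact, i.e.\ $\compressed C$ fails. Unfolding its definition, there is a layer index $i \geq 1$ and a gate $G \in L_i$ for which the premise $\notcompat{L_{i-1}}{\{G\}}$ does not hold, that is, $\compat{L_{i-1}}{\{G\}}$. I then decompose $C = P; \langle L_{i-1}, L_i \rangle; S$ with $P = C[0:i-1]$ and $S = C[i+1:\mathsf{length}(C)]$, and apply $\rulename{ShiftLeft}$ with $L_1 = L_{i-1}$ and $L_2 = L_i$. Its side conditions $G \in L_2$ and $\compat{L_1}{\{G\}}$ are precisely what the failure of compactness supplies, so the rule fires and yields the desired step.

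Next I would treat the case where $C$ is compact but $\windowopt\Omega C$ fails. Unfolding that definition, there are indices with $i \leq j \leq i+\Omega$ such that $\costof{\oracle{}(C[i:j])} \neq \costof{C[i:j]}$. Invoking the standing compatibility assumption $\costof{\oracle{}(C)} \leq \costof{C}$, this inequality must be strict: $\costof{\oracle{}(C[i:j])} < \costof{C[i:j]}$. Writing $D = C[i:j]$, I decompose $C = P; D; S$ with $P = C[0:i]$ and $S = C[j:\mathsf{length}(C)]$. Since $j - i \leq \Omega$ we have $\mathsf{length}(D) \leq \Omega$, so $\rulename{Lopt}$ applies with $C' = \oracle{}(D)$: all three premises $\mathsf{length}(D) \leq \Omega$, $C' = \oracle{}(D)$, and $\costof{C'} < \costof{D}$ hold, giving a step $\localstep\Omega C {C'}$.

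I expect the only real obstacle to be boundary bookkeeping rather than anything conceptual: ensuring the prefix/suffix decomposition is well defined (in particular that the compactness failure occurs at some $i \geq 1$, since $L_0$ has no predecessor layer and the compactness judgment is quantified only over indices possessing a previous layer), and respecting the clamping convention for $C[i:j]$ under overflow so that the exhibited segment genuinely has length at most $\Omega$. Once these indexing details are pinned down, each case reduces to an immediate instance of the corresponding rewrite rule.
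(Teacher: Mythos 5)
Your proposal is correct, and it is precisely the argument the paper intends: the paper states Lemma~\ref{lem:can-step} is ``easy to show'' and omits the proof entirely, so there is nothing to diverge from. Negating the conjunction defining $\locallyopt\Omega C$ gives exactly your two cases, and in each case the failed premise (a gate $G \in L_i$, $i \geq 1$, with $\compat{L_{i-1}}{\{G\}}$, resp.\ a clamped segment of length at most $\Omega$ whose oracle output has strictly smaller cost by the compatibility assumption) is literally the premise list of \rulename{ShiftLeft}, resp.\ \rulename{Lopt}, so the step exists; your attention to the $i \geq 1$ indexing and the clamping convention covers the only delicate points.
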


\subsection{Termination}
Given any circuit, we intend to use the rewriting semantics to produce an equivalent
locally optimal circuit.
However, this approach only succeeds if the rewriting semantics terminates, i.e., if
eventually no further rewrites are possible.
(Note that when our semantics terminates, \lemref{can-step} guarantees that the
circuit is locally optimal.)
Below are two examples where termination is not guaranteed.
\begin{enumerate}[leftmargin = *]
\item \textbf{Infinite Decrease in Cost}:
The cost function could be such that it decreases infinitely.
For instance,
consider the (contrived) cost function $\costof{C} = \prod_{\textsf{Rz}(\theta) \in C} \theta$,
which takes the product of the angles of gates $\textsf{Rz}(\theta)$ performing Z-axis rotations.
Suppose we have a circuit containing a single gate $\textsf{Rz}(1)$, setting $\theta = 1$
and its cost is $1$.
We can replace the $\textsf{Rz}(1)$ gate by  two consecutive $\textsf{Rz}(1/2)$ gates,
which achieve the same rotation,
but their combined cost is $1/4$.
This process can repeat infinitely,
halving the angle at each step, and the cost function would keep decreasing.
For this circuit and cost function, our rewriting semantics does not terminate.

\item \textbf{Cycles Due to Cost Function}:
A local optimization on one segment can increase the cost of a nearby segment,
creating an infinite loop of local optimizations.
To demonstrate this, consider the badly behaved cost function:
$\costof{C} = 0$ if $|C|$ is even, and
otherwise $\costof{C} = 1$.
Now, ``optimizing'' one segment (to toggle its number of gates from odd to even)
can cause a nearby overlapping segment
to toggle from even to odd, potentially repeating forever.
\end{enumerate}



\myparagraph{Ensuring termination.}
We can prove that our semantics terminates under certain conditions on the cost function.
First, we require $\costof{C} \in \mathbb{N}$; this guarantees that the cost cannot decrease infinitely.
Second, we require that the cost function is \defn{additive} according to the following definition.
\begin{definition}
A function $\cost{}: \textit{Circuit} \to \mathbb{N}$ is \defn{additive} iff both of the following conditions hold:
\begin{enumerate}
    \item $\costof{C_1 ; C_2} = \costof{C_1} + \costof{C_2}$ for all circuits $C_1$ and $C_2$.
    \item $\costof{\langle L_1 \cup L_2 \rangle} = \costof{\langle L_1 \rangle} + \costof{\langle L_2 \rangle}$ for all layers $L_1$ and $L_2$ such that $\compat{L_1}{L_2}$.
\end{enumerate}
\end{definition}

Note that all cost functions that take a linear combination of counts
of each gate in the circuit are additive.
These include metrics
such as gate count (number of gates), $\mathsf{T}$ count (number of $\mathsf{T}$ gates), $\mathsf{CNOT}$ count (number of $\mathsf{CNOT}$ gates),
and two-qubit count (number of two-qubit gates).

For additive cost functions, we prove (\thmref{convergence}) that the rewriting semantics always
terminates.

\begin{theorem}[Termination]
\label{thm:convergence}
Let $\cost{}: \textit{Circuit} \to \mathbb{N}$ be an additive cost function.
For any initial circuit $C_0$,
there does not exist an infinite sequence of circuits such that
$$C_0 \overset{\Omega}\optstep C_1 \overset{\Omega}\optstep C_2 \overset{\Omega}\optstep \cdots$$
where each $\localstep{\Omega}{C_i}{C_{i+ 1}}$ represents a step of our rewriting semantics.
\end{theorem}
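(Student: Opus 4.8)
The plan is to define a potential function (a termination measure) that strictly decreases with every rewrite step, and show it cannot decrease forever. The obvious first candidate for the measure is just the cost $\costof{C}$, since the \rulename{Lopt} rule strictly decreases cost. But this does not work directly because the \rulename{ShiftLeft} rule does not change the cost at all: shifting a gate from one layer to an adjacent layer rearranges gates without adding or removing them, so by additivity the cost is unchanged. Therefore a single scalar measure based only on cost is insufficient, and I expect the main obstacle to be finding a measure that also decreases under \rulename{ShiftLeft} while being compatible with the decrease under \rulename{Lopt}.

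\emph{First I would} handle the two rules with a lexicographically ordered pair of measures $(\costof{C}, \Phi(C))$, ordered so that a decrease in the first component dominates, and for equal first component a decrease in the second component $\Phi$ suffices. For \rulename{Lopt}, the cost strictly decreases (this is the premise $\costof{C'} < \costof{C}$, and since the rest of the circuit is untouched, additivity, that is, condition~(1) of the definition of additive, gives that the whole-circuit cost also strictly drops), so the first component decreases and we are done regardless of what happens to $\Phi$. For \rulename{ShiftLeft}, I must show the cost is preserved and that $\Phi$ strictly decreases. Cost preservation follows from additivity: splitting the circuit as $P; \langle L_1, L_2\rangle; S$, only the middle two-layer piece changes, and moving a gate $G$ from $L_2$ into $L_1$ (legal because $\compat{L_1}{\{G\}}$) leaves the total multiset of gates unchanged, so by conditions~(1) and~(2) of additivity the cost of the two-layer piece, and hence of the whole circuit, is unchanged.

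\emph{The key remaining step} is to choose $\Phi$ so that \rulename{ShiftLeft} strictly decreases it while keeping $\Phi$ bounded below (so it lives in a well-founded order such as $\mathbb{N}$). The natural choice is a measure of how far gates sit from the left, for example
\[
\Phi(C) \;=\; \sum_{i}\; i \cdot |L_i|,
\]
the sum over layer indices $i$ weighted by the number of gates in layer $L_i$. A \rulename{ShiftLeft} step moves one gate from layer index $i_2$ to a strictly smaller index $i_1 < i_2$ (adjacent layers, so $i_2 = i_1 + 1$), which strictly decreases $\Phi$ by exactly $i_2 - i_1 \ge 1$, while all other gates keep their positions; since $\Phi(C)\in\mathbb{N}$ and is bounded below by $0$, it cannot decrease indefinitely. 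I would verify carefully that \rulename{ShiftLeft} never moves a gate rightward and that the layer indexing is consistent across the step (no empty layers are introduced or collapsed that would renumber indices in a way that increases $\Phi$).

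\emph{Finally I would} assemble the argument: suppose for contradiction there is an infinite rewrite sequence $C_0 \overset{\Omega}\optstep C_1 \overset{\Omega}\optstep \cdots$. Along this sequence the pair $(\costof{C_i}, \Phi(C_i))$ is non-increasing in the lexicographic order on $\mathbb{N} \times \mathbb{N}$, which is well-founded. Each \rulename{Lopt} step strictly decreases the cost, so only finitely many \rulename{Lopt} steps can occur (the cost is a natural number bounded below by $0$ and cannot increase under either rule). After the last \rulename{Lopt} step the cost is fixed, so every subsequent step is a \rulename{ShiftLeft} step, each of which strictly decreases $\Phi \in \mathbb{N}$; this too can happen only finitely often. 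Hence the sequence must be finite, contradicting the assumption and establishing termination. The only subtlety to get right is confirming that neither rule ever increases $\Phi$ in a way that would break the lexicographic descent, and that the cost genuinely never increases — both of which follow from additivity and the directionality of \rulename{ShiftLeft}.
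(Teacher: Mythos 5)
Your proposal is correct and takes essentially the same route as the paper: the paper's potential is exactly your lexicographic pair, $\Phi(C) = \left(\costof{C}, \mathsf{IndexSum}(C)\right)$ with $\mathsf{IndexSum}(C) = \sum_i i\,|L_i|$, and its proof performs the same two-case analysis (\rulename{Lopt} strictly decreases cost by additivity; \rulename{ShiftLeft} preserves cost and decreases the index sum by exactly $1$). Your closing argument—finitely many \rulename{Lopt} steps, after which only finitely many \rulename{ShiftLeft} steps can occur—is just an explicit unpacking of the well-foundedness of the lexicographic order on $\mathbb{N} \times \mathbb{N}$ that the paper invokes directly.
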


To prove \thmref{convergence}, we define a ``potential'' function
$\Phi: \textit{Circuit} \to \mathbb{N} \times \mathbb{N}$
and show that each step of our rewriting semantics decreases the potential function,
with the ordering $\Phi(C') < \Phi(C)$ defined lexicographically over $\mathbb{N} \times \mathbb{N}$.
A suitable definition for $\Phi$ is as follows.
\[
\begin{array}{c}
  \Phi(C) \defeq \left(\costof{C}, \mathsf{IndexSum}(C)\right)
\end{array}
\quad\text{where}\quad
\begin{array}{c}
  \mathsf{IndexSum}(C) \defeq \sum_{i} i |L_i| \\[1ex]
  C = \langle L_0, L_1, \ldots \rangle
\end{array}\]
This potential function has two components:
(1) the cost of the circuit, which decreases as optimizations are performed, and
(2) an ``index sum'', which decreases as gates are shifted left.
Together these components guarantee that the potential function decreases at each step, 
ensuring termination.
We provide the full proof in
\iffull
\appref{lem-pot-dec}.
\else
the Appendix.
\fi

%
%

\if0
\subsection{I don't think we need this}

there exists $C^\text{OPT}$ such that $C \Downarrow C^\text{OPT}$.
Specifically, we show that:
\begin{theorem}[Termination]
\label{thm:convergence}
For any additive function $\cost{}: \textit{Circuit} \to \mathbb{N}$ and any well-formed circuit $C$,
there exists $C^\text{OPT}$ such that $C \Downarrow C^\text{OPT}$.
\end{theorem}

\begin{proof}
We proceed by induction over circuits ordered by $\Phi$.
First, if $\locallyopt{\Omega}{C}$ then $C \Downarrow^\Omega C$, i.e., $C = C^\text{OPT}$, and we are done.
Otherwise, by \lemref{can-step}, we have $\localstep\Omega C {C'}$, and therefore by \lemref{pot-dec1} we
have $\Phi(C') < \Phi(C)$.
Inductively we have $C' \Downarrow^\Omega C^\text{OPT}$ which (together with $\localstep\Omega C {C'}$)
yields $C \Downarrow C^\text{OPT}$.
\end{proof}

We prove this using an approach based on a potential function.
The idea is to define a suitable function $\Phi(C)$ which decreases after each step $\localstep\Omega C {C'}$,
making a termination proof possible via induction.
We specifically use the following potential function.
\[
\begin{array}{c}
  \Phi(C) \defeq \left(\costof{C}, \mathsf{IndexSum}(C)\right)
\end{array}
\quad\text{where}\quad
\begin{array}{c}
  \mathsf{IndexSum}(C) \defeq \sum_{i} i |L_i| \\[1ex]
  C = \langle L_0, L_1, \ldots \rangle
\end{array}\]
This potential function has two components:
(1) the cost of the circuit, which decreases as optimizations are performed, and
(2) an ``index sum'', which decreases as gates are shifted left.
Note that $\Phi(-)$ ranges over tuples $\mathbb{N} \times \mathbb{N}$, and therefore we define
$\Phi(C') < \Phi(C)$ lexicographically.
We can then prove that the potential decreases on every step, as presented in \lemref{pot-dec1}.


\begin{lemma}
\label{lem:pot-dec1}
For any additive function $\cost{}: \textit{Circuit} \to \mathbb{N}$,
if $\localstep\Omega{C}{C'}$ then $\Phi(C') < \Phi(C)$.
\end{lemma}
\begin{proof}
%
There are two cases
for $\localstep\Omega{C}{C'}$:
either \rulename{Lopt} or
\rulename{ShiftLeft}.
In each case we show $\Phi(C') < \Phi(C)$.

In the case of \rulename{Lopt}, we have $\localstep\Omega C {C'}$ where
$C = (P ; C'' ; S)$ and $C' = (P ; \oracle{}(C'') ; S)$.
In $C'$, the segment $C''$ has been improved by one call to the \oracle{},
i.e., $\costof{\oracle{}(C'')} < \costof{C''}$.
Because the \cost{} function is additive, we have that
$\costof{C'} = \costof{P ; \oracle{}(C'') ; S} < \costof{P ; C'' ; S} = \costof{C}$.
This in turn implies $\Phi(C') < \Phi(C)$ due to lexicographic ordering on the
potential function.



In the case of \rulename{ShiftLeft}, we have $\localstep\Omega C {C'}$ where
$C = (P ; \langle L_1, L_2 \rangle ; S)$ and $C' = (P ; \langle L_1', L_2' \rangle ; S)$
and $L_1' = L_1 \cup \{G\}$ and $L_2' = L_2 \setminus \{G\}$.
Because the cost function is additive, we have $\costof{\langle L_1, L_2 \rangle} = \costof{\langle L_1', L_2' \rangle}$
and therefore $\costof C = \costof {C'}$.
To show $\Phi(C') < \Phi(C)$, due to the lexicographic ordering, it remains to show
$\mathsf{IndexSum}(C') < \mathsf{IndexSum}(C)$.
This in turn follows from the definition of $\mathsf{IndexSum}$; in particular, plugging
in $|L_1'| = |L_1| + 1$ and $|L_2'| = |L_2| - 1$ we get
$\mathsf{IndexSum}(C') = \mathsf{IndexSum}(C) - 1$.
Thus we have $\Phi(C') < \Phi(C)$.




\end{proof}

As a result of \lemref{pot-dec1}, we know that every rewrite step makes progress along a
descending chain of the ordering given by $\Phi$.
%


\sr{TODO: weave: these are helpful sentences, they should go somewhere.

\begin{quote}
Local optimality is a practical and flexible notion of optimality
as it can be used as a quality guarantee for any optimizer,
and thus can be used for different gate sets and cost functions.
In our evaluation, we demonstrate this with
four different oracle optimizers and three different gate sets.

In theory,
it is not immediately clear how we can use the rewriting
rules to create a locally optimal circuit from a given circuit.
A priori,
a locally optimal circuit may not even exist because
it is possible that the rewrite rules can lead to a cycle.

it is not immediately clear if every
circuit can be rewritten to a locally optimal circuit
and if so, how can we find it efficiently.
Consider for example
\todo{show why it may not exist}
\end{quote}
}




\sr{ TODO: this is older brainstorm text. Still to weave, perhaps into algorithm section.
\begin{quote}
The reader might ask why we would want to delay the compression until after the optimization.  Why not interleave?

Here is an example that demonstrates why this is a bad idea

Suppose we split the circuit into A and B ``halves''
We then split B into B1 and B2 and optimize
When melding B1 and B2, we realize that we can compress by moving some gate from B2 to all the way into B1.
This triggers a full reoptimization of B, because many segments have changed.
Good, so we re-optimize B.
Now we meld with A and we realize that all those gates can again be moved to the beginning of A.  So now, we have to reopt A and B, effectively throwing away the intermediate optimization work.

The point is that the optimized prematurally without knowing the
global opportunity for compression.

TECHNICAL REASON: we are checking segments multiple times, in fact once at every recursion level.
\end{quote}
}

\fi

\section{Local Optimization Algorithm} \label{sec:algorithm}

In \secref{lang}, we presented a rewriting semantics consisting of just
two rules (corresponding to optimization of a segment and compaction)
and
proved that any saturating rewrite that applies the two rules to
exhaustion yields a locally optimal circuit.
This result immediately suggests an algorithm: simply apply the
rewriting rules until they no longer may be applied, breaking ties
between the two rules arbitrarily.
Even though it might seem desirable due to its simplicity, such an
algorithm is not efficient, because searching for a segment to
optimize requires linear time in the size of the circuit (both in
worst and the average case), yielding a quadratic bound for
optimization.
For improved efficiency, it is crucial to reduce the search time
needed to find a segment that would benefit from optimization.

Our algorithm, called \algname{}, controls search time by using a
circuit cutting-and-melding technique.
The algorithm cuts the circuit hierarchically into smaller
subcircuits, optimizes each subcircuit independently.
The hierarchical cutting naturally reduces the search time for the
optimizations by ensuring that most of the optimizations take place in
the context of small circuits.
Because the algorithm optimizes each subcircuit independently, it can
miss crucial optimizations.
To compensate for this, the algorithm melds the optimized subcircuits
and optimizes further the melded subcircuits starting with
the \defn{seam}, or the boundary between the two subcircuits.
The meld operation guarantees local optimality and does so efficiently
by first optimizing the seam and further optimizing into each
subcircuit only if necessary.
By melding locally optimal subcircuits, the algorithm can guarantee
that the subcircuits or any of their ``untouched'' portions (what it
means to be ``untouched'' is relatively complex) remain optimal.
We make this intuitive explanation precise by proving that the
algorithm yields a locally optimal circuit.
We note that circuit cutting techniques have been studied for the
purposes of simulating quantum circuits on classical
hardware~\cite{circuit-cutting-2020,tang-cutqc-2021,bravyi-future-qc-2022}.
We are not aware of prior work on circuit melding techniques that can
lazily optimize across circuit cuts.

\subsection{The algorithm}
\label{sec:opt-phase}

\begin{figure}
\centering
\input{fig/fig-lopt-code}
\caption{
Algorithm \algname{} produces locally optimal circuits with
respect to a given $\oracle{}$, $\cost{}$, and segment length $\Omega$.
To achieve local optimality,
\algname{} only uses the oracle on small segments of length $2\Omega$.
The algorithm repeatedly optimizes and compacts the circuit
until convergence.
The function \textsf{segopt}$()$ implements our optimization algorithm
and uses $\mathsf{meld}()$ to efficiently produce \wopttext{} circuits.
}
\label{fig:lopt-code}
\end{figure}

\figref{lopt-code} shows the pseudocode for our algorithm.
The algorithm (\lstinline{OAC}) organizes the computation into rounds,
where each round corresponds to a recursive invocation
of \lstinline{OAC}.
A round consists of a compaction phase (via the function \textsf{compact})
and
a segment-optimization phase, via the function \textsf{segopt}.
The rounds repeat until convergence, i.e., until no more optimization
is possible (at which point the final circuit is guaranteed to
be \emph{locally optimal}).
As the terminology suggests, the segment optimization phase always
yields a segment-optimal circuit, where each and every segment is
optimal (as defined by our rewriting semantics).
Compaction rounds ensure that the algorithm does not miss optimization
opportunities that arise due to compaction.
%
We also present a relaxed version of our algorithm that stops early
when a user-specified \defn{convergence threshold}
$0 \leq \epsilon \leq 1$, is reached.

\myparagraph{Function \textsf{segopt}.}
The function \textsf{segopt} takes a circuit $C$ and produces
a segment optimal output.
To achieve this, it uses a divide-and-conquer strategy to cut the
circuit hierarchically into smaller and smaller circuits:
it splits the circuit into the
subcircuits $C_1$ and $C_2$, optimizes each recursively, and then
calls \textsf{meld} on the resulting circuits to join them back
together without losing segment optimality.
This recursive splitting continues until the circuit has been partitioned into
sufficiently small segments,
specifically where each piece is at most $2\Omega$ in length.
For such small segments, the function directly uses the oracle
and obtains optimal segments.

\myparagraph{Function \textsf{meld}.}
\figref{lopt-code} (right) presents the pseudocode of the meld function.
The function takes segment-optimal inputs $C_1$ and $C_2$
and returns a segment-optimal circuit that is
functionally equivalent to the concatenation of the input circuits.

Given that the inputs $C_1$ and $C_2$ are segment optimal,
all $\Omega$-segments that lie completely within $C_1$ or $C_2$
are already optimal.
Therefore, the function only considers and optimizes
``boundary segments''
which have some layers from circuit $C_1$ and other layers from circuit $C_2$.

To optimize segments at the boundary,
the function creates a ``super segment'', named $W$,
by concatenating the last $\Omega$ layers of circuit $C_1$
with the first $\Omega$ layers of circuit $C_2$.
The function denotes this concatenation as
$C_1[d_1 - \Omega : d_1] + C_2[0 : \Omega]$ (see \lineref{combine}).
The meld function calls the oracle on $W$
and retrieves the $W'$,
which is guaranteed to be segment-optimal because
it is returned by the oracle.
The meld function then considers the costs of $W$ and $W'$.

If the costs of $W$ and $W'$ are identical, then $W$ is already
segment optimal.  Consequently, all $\Omega$-segments at the boundary
of $C_1$ and $C_2$ are also optimal.
The key point is that the ``super segment'' $W$ encompasses
all possible $\Omega$-segments at the boundary of $C_1$ and $C_2$.
To see this,
let's choose an $\Omega$-segment at boundary,
which takes the last $i > 0$ layers of circuit $C_1$
and the first $j > 0$ layers from of circuit $C_2$;
we can write this as $C_1[d_1 - i : d_1] + C_2[0 : j]$,
where $d_1$ is the number of layers in $C_1$.
Given that this is an $\Omega-$segment and has $i + j$ layers,
we get that $i + j = \Omega$ and $i < \Omega$ and $j < \Omega$.
Now observe that our chosen segment $C_1[d_1 - i : d_1] + C_2[0 : j]$
is contained within the super segment $W = C_1[d_1 - \Omega : d_1] + C_2[0 : \Omega]$ (\lineref{combine}),
because $i < \Omega$ and $j < \Omega$.
Given that $W$ is segment optimal, our chosen segment is also optimal (relative to the oracle).

Returning to the \textsf{meld} algorithm,
consider the case where
the segment $W'$ improves upon the segment $W$.
In this case, meld incorporates $W'$ into the circuit
and propagates this change to the neighboring layers.
To do this, meld works with three segment optimal circuits:
circuit $C_1[0 : d_1 - \Omega]$,
which contains the first $d_1 - \Omega$ layers of circuit $C_1$,
is segment optimal because $C_1$ is segment optimal;
the circuit $W'$ is segment optimal because it was returned by the oracle;
and the circuit $C_2[\Omega: d_2]$,
which contains the last $d_2 - \Omega$ layers of circuit $C_2$,
is segment optimal because $C_2$ is segment optimal.
Thus, we propagate the changes of window $W'$,
by recursively melding these segment optimal circuits.

In \figref{lopt-code},
the function meld first melds the remaining layers of circuit $C_1$ with the segment $W'$,
obtaining circuit $M$ (see \lineref{mrec}),
and then melds the circuit $M$ with the remaining layers of $C_2$.

\subsection{Meld Example}
We present an example of how \lstinline{meld} joins two circuits by
optimizing from the ``seam'' out, and does so ``lazily'',  as needed.

\begin{figure}
  \includegraphics[width=\linewidth]{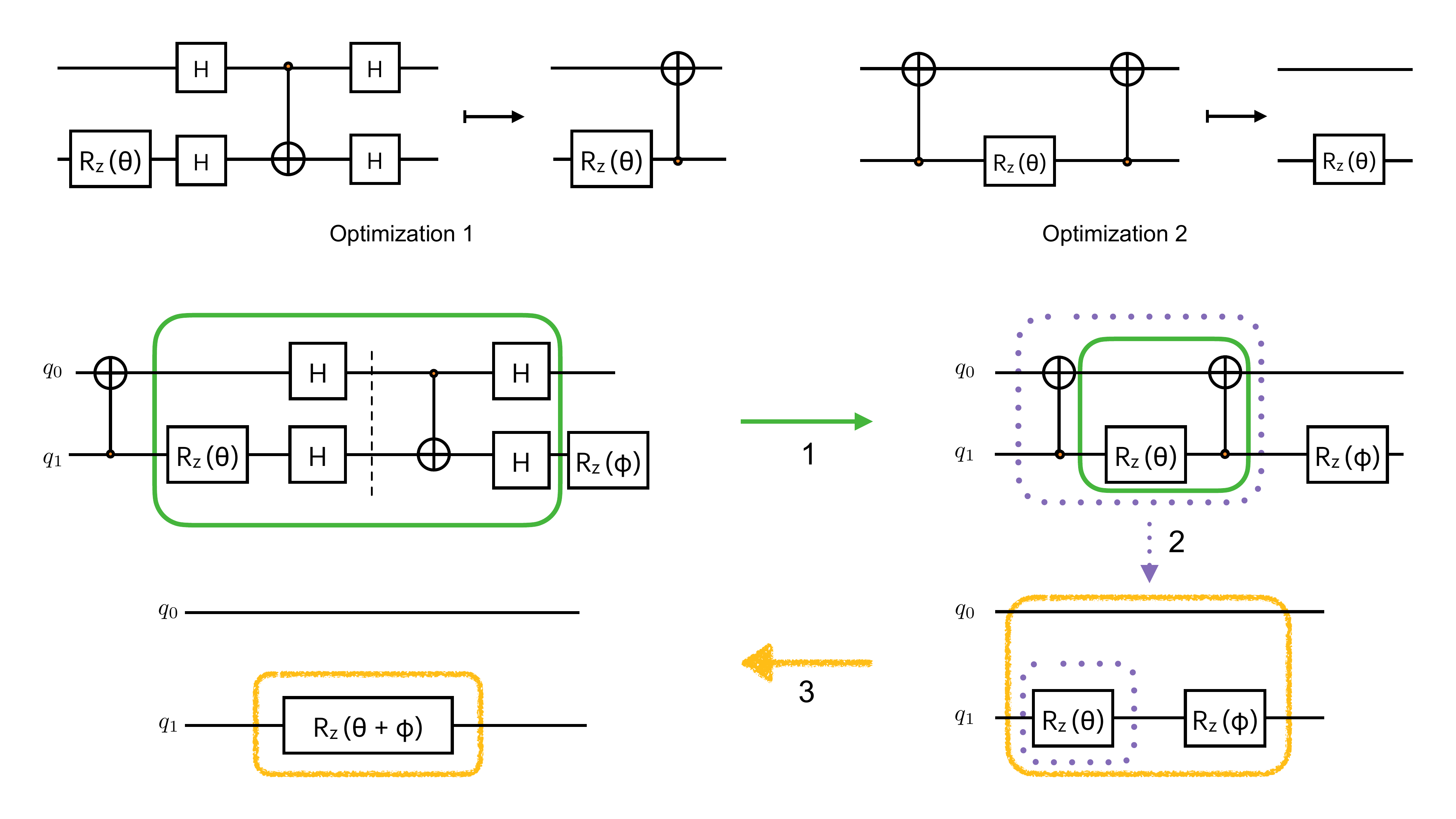}
  \caption{
  The figure shows a three-step meld operation and illustrates
  how it propagates optimizations at the boundary of two optimal circuits.
  At the top, the figure shows specific optimizations ``1'' and ``2'',
  and towards the bottom,
  the figure shows the optimization steps of meld.
  Before the first step,
  two individual circuit segments that are optimal are separated by a dashed line.
  Each meld step considers a segment, represented by a box with solid/dotted/shaded lines,
  and applies an optimization to it, reducing the gate count.
  The first step focuses on the boundary segment within the solid green box,
  overlapping with both circuits,
  and applies ``Optimization 1''.
  This step introduces a flipped $\mathsf{CNOT}$ gate,
  which interacts with a neighboring $\mathsf{CNOT}$ gate
  and triggers ``Optimization 2'' in the purple dotted box.
  The third step merges two neighboring rotation gates in the yellow shaded box.
  }
  \label{fig:propagate}
\end{figure}

\figref{propagate} shows a three-step meld operation
that identifies optimizations at the boundary of two circuits.
All the circuits in the figure are expressed using
the $\mathsf{H}$ gate (Hadamard gate),
the $\mathsf{R_Z}$ gate (rotation around $\mathsf{Z}$),
and the two-qubit $\mathsf{CNOT}$ gate (Controlled Not gate),
which is represented using a dot and an XOR symbol.
We first provide background on the optimizations used by meld
and label them ``Optimization 1'' and ``Optimization 2''.
Optimization 1 shows that when a $\mathsf{CNOT}$ gate is surrounded by
four $\mathsf{H}$ gates,
all of these gates can replaced by a single
$\mathsf{CNOT}$ gate whose qubits are flipped.
Optimization 2 shows that when two $\mathsf{CNOT}$ gates
are separated by a $\mathsf{R_Z}$ gate as shown, they may be removed.

The steps in the figure describe a meld operation
on the two circuits separated by a dashed line, which represents their seam.
To join the circuits, the meld operation proceeds outwards in both directions
and
and optimizes the boundary segment, represented as a green box with solid lines.
The meld applies Optimization 1 to the green segment
and this introduces a flipped $\mathsf{CNOT}$ gate.

The meld propagates this change in step 2,
by considering a new segment,
which includes a neighboring layer.
We represent this segment by a purple box with dotted lines,
and it contains two $\mathsf{CNOT}$ gates,
one of which was introduced by the first optimization.
The meld then applies Optimization 2,
removing the $\mathsf{CNOT}$ gates and
bringing the two rotation gates next to each other.
Note that Optimization 2 became possible only because of Optimization 1,
which introduced the flipped the $\mathsf{CNOT}$ gate.
In the final step,
the meld considers the segment represented by a yellow box with shaded lines
and performs a third optimization, merging the two rotation gates.
Overall,
this sequence of optimizations,
at the boundary of two circuits,
reduces the gate count by seven.

\subsection{Correctness and Efficiency}

Because our algorithm cuts the input circuit into subcircuits and
optimizes them independently, it is far from obvious that its output is
segment optimal.
We prove that this is indeed the case with \thmref{opt} below.
The reason for this is the meld operation that is able to optimize
circuit cuts.
We also prove, with \corref{linear-calls}, that even though
meld behaves dynamically and its cost varies from one circuit to
another, it remains efficient, in the sense that the number of calls
to oracle is always linear in the size of the circuit plus the
improvement in the cost.

The proofs for these are presented in the provided appendix.

\begin{lem}[Segment optimality of meld]
	\label{lem:meld-is-optimal}
	Given any additive \cost{} function and any segment optimal circuits
	$C_1$ and $C_2$, the result of \lstinline{meld}$(C_1, C_2)$ is a
	segment optimal circuit $C$ and $\costof{C} \leq \costof{C_1} + \costof{C_2}$.
\end{lem}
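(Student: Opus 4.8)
The plan is to prove both conclusions---segment optimality of the output and the bound $\costof{C} \le \costof{C_1} + \costof{C_2}$---simultaneously, by strong induction on the total cost $N = \costof{C_1} + \costof{C_2}$, which is a well-founded measure since $\cost$ is $\mathbb{N}$-valued. Before the induction I would record two facts used throughout. First, every contiguous sub-segment of a segment-optimal circuit is segment-optimal: this is immediate from the definition of $\windowopt{\Omega}{C}$, since each window of length at most $\Omega$ of $C[i:j]$ is also such a window of $C$. Second, every circuit returned by the $\oracle$ is segment-optimal, and whenever the oracle reports no improvement on a window $W$ (that is, $\costof{\oracle{}(W)} = \costof{W}$) the input $W$ is itself segment-optimal. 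I flag below that this second fact is the load-bearing hypothesis about the oracle. (Throughout, the code's ``$+$'' denotes the same concatenation as ``$;$'', and the slicing convention for $C[i:j]$ makes the degenerate cases where $d_1 < \Omega$ or $d_2 < \Omega$ go through unchanged.)

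Base case (the concatenation branch, $\costof{W'} = \costof{W}$): here $\textsf{meld}$ returns $C_1 ; C_2$, so additivity gives $\costof{C_1 ; C_2} = \costof{C_1} + \costof{C_2}$ and the bound holds with equality. For segment optimality, take any window $V = (C_1 ; C_2)[i:j]$ of length at most $\Omega$. If $V$ lies entirely within $C_1$ or entirely within $C_2$, it is optimal by hypothesis together with the sub-segment fact. Otherwise $V$ straddles the seam, so $V = C_1[d_1 - a : d_1] ; C_2[0 : b]$ with $a, b \ge 1$ and $a + b \le \Omega$; since then $a, b < \Omega$, this window is exactly the sub-window $W[\Omega - a : \Omega + b]$ of the super-segment $W = C_1[d_1 - \Omega : d_1] ; C_2[0 : \Omega]$ built on \lineref{combine}. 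Because $\costof{\oracle{}(W)} = \costof{W}$, the second fact makes $W$ segment-optimal, hence each of its windows of length at most $\Omega$---and in particular $V$---is oracle-optimal. Thus $C_1 ; C_2$ is segment-optimal.

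Recursive case ($\costof{W'} < \costof{W}$, which is forced in the else-branch because $\costof{\oracle{}(W)} \le \costof{W}$ always): $\textsf{meld}$ forms $M = \textsf{meld}(C_1[0 : d_1 - \Omega], W')$ and returns $\textsf{meld}(M, C_2[\Omega : d_2])$. The inner call's arguments are segment-optimal---$C_1[0 : d_1 - \Omega]$ by the sub-segment fact, and $W' = \oracle{}(W)$ by the oracle fact. Splitting costs by additivity, writing $\costof{C_1} = \costof{C_1[0 : d_1 - \Omega]} + \costof{C_1[d_1 - \Omega : d_1]}$, $\costof{C_2} = \costof{C_2[0 : \Omega]} + \costof{C_2[\Omega : d_2]}$, and $\costof{W} = \costof{C_1[d_1 - \Omega : d_1]} + \costof{C_2[0 : \Omega]}$, the inner call has argument total $\costof{C_1[0 : d_1 - \Omega]} + \costof{W'} < \costof{C_1[0 : d_1 - \Omega]} + \costof{W} \le N$. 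The induction hypothesis therefore applies and yields that $M$ is segment-optimal with $\costof{M} \le \costof{C_1[0 : d_1 - \Omega]} + \costof{W'}$. For the outer call, the arguments $M$ and $C_2[\Omega : d_2]$ are segment-optimal (the latter by the sub-segment fact), and its argument total is at most $\costof{C_1[0 : d_1 - \Omega]} + \costof{W'} + \costof{C_2[\Omega : d_2]} < N$, again strictly below $N$. Applying the induction hypothesis once more shows the returned $C$ is segment-optimal and $\costof{C} \le \costof{M} + \costof{C_2[\Omega : d_2]} \le \costof{C_1[0 : d_1 - \Omega]} + \costof{W'} + \costof{C_2[\Omega : d_2]} < \costof{C_1} + \costof{C_2}$, giving both conclusions.

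The \emph{main obstacle} is the second oracle fact: that the oracle's inability to lower the cost of the $2\Omega$-layer super-segment $W$ certifies that every window of length at most $\Omega$ inside $W$---including the seam-straddling ones---is already oracle-optimal. This is exactly the step on which segment optimality of the concatenation branch rests, and it does not follow from the stated compatibility assumption $\costof{\oracle{}(C)} \le \costof{C}$ alone, since an adversarial oracle could refuse to improve $W$ yet improve a sub-window of it. I would discharge it from a locality/consistency assumption on the oracle---for instance, that the oracle returns its input verbatim when it finds no strict improvement, so that $\oracle{}(W) = W$ and segment optimality of oracle outputs applies, or equivalently a monotonicity property asserting that optimizing $W$ wholesale is never worse than optimizing any sub-window of $W$ in place. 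Pinning down this assumption precisely is the crux; the remainder is the additive bookkeeping above together with the observation that the measure $N$ strictly drops across both recursive calls, which simultaneously secures well-foundedness of the recursion and the cost bound.
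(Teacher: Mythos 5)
Your proof follows essentially the same route as the paper's: strong induction on the total input cost $\costof{C_1}+\costof{C_2}$, a case split on whether the oracle improves the super-segment $W$, containment of seam-straddling windows inside $W$ to handle the concatenation branch, and additive cost bookkeeping showing both recursive calls have strictly smaller total cost so the induction hypothesis applies and yields the bound. The ``load-bearing'' oracle facts you flag---that oracle outputs are segment optimal, and that non-improvement of the length-$2\Omega$ segment $W$ certifies optimality of every sub-window of $W$---are precisely the properties the paper itself uses implicitly (its Case~I asserts the latter without derivation, and its appeal to the induction hypothesis on $W'$ needs the former), so your explicit identification of them as additional oracle assumptions beyond $\costof{\oracle{}(C)} \leq \costof{C}$ is a legitimate sharpening rather than a divergence.
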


\begin{theorem}[Segment optimality algorithm] \label{thm:opt}
  For any circuit $C$, the function $\mathsf{segopt}(C)$ outputs
  a segment optimal circuit.
  \end{theorem}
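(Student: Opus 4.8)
The plan is to prove \thmref{opt} by strong induction on the length $d = \mathsf{length}(C)$, following the recursive structure of $\mathsf{segopt}$. Throughout I work in the setting of \lemref{meld-is-optimal}: the $\cost$ function is additive, and the $\oracle{}$ returns a cost-minimal equivalent circuit, so that $\costof{\oracle{}(C)}$ is the least cost among all circuits equivalent to $C$. Recall that the goal, $\windowopt{\Omega}{R}$, asserts precisely that the oracle cannot lower the cost of any segment of $R$ of length at most $\Omega$. The two cases of the induction correspond exactly to the two branches of $\mathsf{segopt}$.

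\emph{Inductive step ($d > 2\Omega$).} Here $\mathsf{segopt}$ sets $m = \lfloor d/2 \rfloor$, recurses on $C_1 = C[0:m]$ and $C_2 = C[m:d]$ to obtain $C_1' = \mathsf{segopt}(C_1)$ and $C_2' = \mathsf{segopt}(C_2)$, and returns $\mathsf{meld}(C_1', C_2')$. Since $d > 2\Omega \geq 2$, both $m$ and $d - m$ are at least $1$ and strictly less than $d$, so the induction hypothesis applies and gives that $C_1'$ and $C_2'$ are segment optimal. Feeding these two segment-optimal circuits to \lemref{meld-is-optimal} immediately yields that $\mathsf{meld}(C_1', C_2')$ is segment optimal, closing this case. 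This step carries essentially no work of its own: the real content is packaged inside the meld lemma, which I am entitled to assume.

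\emph{Base case ($d \leq 2\Omega$).} $\mathsf{segopt}$ returns either $R = \oracle{}(C)$ (when the oracle strictly improves the cost) or $R = C$ (otherwise). In the second branch the guard did not fire, i.e.\ $\costof{\oracle{}(C)} = \costof{C}$, so in both branches the returned circuit $R$ is cost-minimal among circuits equivalent to it. I would then show that any cost-minimal circuit is segment optimal. Take a segment $S = R[i:j]$ with $j - i \leq \Omega$ and suppose, toward a contradiction, that $\costof{\oracle{}(S)} < \costof{S}$. Writing $R = R[0:i] ; S ; R[j:d]$ and replacing $S$ with the equivalent circuit $\oracle{}(S)$ gives an equivalent, well-formed circuit whose cost, by the concatenation clause of additivity, equals $\costof{R} - \costof{S} + \costof{\oracle{}(S)} < \costof{R}$, contradicting cost-minimality of $R$. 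Hence $\costof{\oracle{}(S)} = \costof{S}$ for every such $S$, which is exactly $\windowopt{\Omega}{R}$.

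The main obstacle is the base case, and in particular the passage from ``the oracle cannot improve the whole length-$\leq\!2\Omega$ circuit'' to ``the oracle cannot improve any of its length-$\leq\!\Omega$ segments.'' This is where additivity of $\cost$ and the cost-minimality of the oracle both do essential work: additivity lets a purely local improvement be lifted to a strictly cheaper global circuit, while cost-minimality turns ``the guard did not fire'' into genuine global optimality of $R$. Once \lemref{meld-is-optimal} is granted, the inductive step is routine, so the entire content of the theorem reduces to this base-case argument together with a clean well-founded induction on circuit length.
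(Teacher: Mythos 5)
Your proof is correct and follows essentially the same route as the paper's: induction on circuit length $d$, with the base case ($d \leq 2\Omega$) discharged by the oracle and the inductive step discharged by \lemref{meld-is-optimal} applied to the two recursively optimized halves. The only difference is one of detail, not of approach: you spell out the base case carefully—covering both branches of the guard and deriving segment optimality from additivity plus the oracle's cost-minimality—whereas the paper compresses this into a single sentence ("the circuit is fed to the oracle, thereby guaranteeing segment optimality"), leaving implicit exactly the argument you make explicit.
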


\begin{theorem}[Efficiency of segment optimization] \label{thm:cost}
  The function $\mathsf{segopt}(C)$ calls the oracle at most
  $\mathsf{length}(C) + 2\Delta$ times on segments of length at most $2\Omega$,
  where $\Delta$ is the improvement in the cost of the output.
  \end{theorem}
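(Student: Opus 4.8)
The plan is to classify every oracle call made during the run of $\mathsf{segopt}(C)$ as either \emph{successful} (the oracle returns a strictly cheaper circuit) or \emph{failing} (the cost is unchanged), and to bound the two populations separately: successes against the cost improvement and failures against $\mathsf{length}(C)$. The engine of the whole argument is additivity of $\cost{}$, which lets local cost changes telescope into global ones. Writing $\Delta(C) = \costof{C} - \costof{\mathsf{segopt}(C)}$ and letting $\mathsf{impr}$ denote the improvement achieved by any single call, additivity together with the definitions of \textsf{segopt} and \textsf{meld} gives the telescoping identity $\Delta(C) = \Delta(C_1) + \Delta(C_2) + \mathsf{impr}$, where $C = C_1 ; C_2$ is the split used by \textsf{segopt} and $\mathsf{impr}$ is the improvement of the top-level \textsf{meld}. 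The ``length at most $2\Omega$'' claim is immediate from the code: the base case calls the oracle on $C$ with $\mathsf{length}(C) \le 2\Omega$, and every other call is on a window $W = C_1[d_1-\Omega:d_1] + C_2[0:\Omega]$ of length at most $2\Omega$.

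First I would prove a lemma bounding the calls made by one top-level \textsf{meld}. Each invocation of \textsf{meld} makes exactly one oracle call (on $W$); a failing call returns immediately, while a successful call spawns exactly two recursive \textsf{meld}s. Hence the invocation tree is a full binary tree whose $S$ internal nodes are the successful melds and whose leaves are the failing melds, giving exactly $S+1$ leaves and $2S+1$ total calls. An induction on this tree, using additivity again, yields the meld telescoping identity $\mathsf{impr} = \delta_0 + \mathsf{impr}_1 + \mathsf{impr}_2$, where $\delta_0 = \costof{W} - \costof{W'} \ge 1$ is the gain of the current oracle call and $\mathsf{impr}_1, \mathsf{impr}_2$ are the gains of the two recursive melds. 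Unrolling, $\mathsf{impr}$ equals the sum of the $\delta_0$'s over the $S$ successful melds (failing melds contribute $0$), and since costs are natural numbers each $\delta_0 \ge 1$, so $S \le \mathsf{impr}$. Therefore \textsf{meld} makes at most $2\,\mathsf{impr} + 1$ oracle calls.

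Next I would bound $T(C)$, the number of oracle calls made by $\mathsf{segopt}(C)$, by induction on $\mathsf{length}(C)$. A direct induction with the target bound loses: substituting the induction hypothesis for $C_1, C_2$, the meld lemma, and the telescoping identity into $T(C) = T(C_1) + T(C_2) + (2S+1)$ yields $T(C) \le \mathsf{length}(C) + 2\Delta(C) + 1$ — off by the stray $+1$ contributed by the failing root of each \textsf{meld}. The fix is to strengthen the invariant to $T(C) \le \mathsf{length}(C) + 2\Delta(C) - 1$ for all $C$ with $\mathsf{length}(C) \ge 2$. In the recursive case ($\mathsf{length}(C) > 2\Omega$) each half has length at least $\Omega \ge 2$, so the strengthened hypothesis applies to both $C_1$ and $C_2$, and their two $-1$ contributions exactly absorb the meld's $+1$, leaving a net $-1$ and closing the induction. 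In the base case ($2 \le \mathsf{length}(C) \le 2\Omega$) there is a single call and $1 \le \mathsf{length}(C) + 2\Delta(C) - 1$ holds since $\mathsf{length}(C) \ge 2$. The theorem then follows: for $\mathsf{length}(C) \ge 2$ it is weaker than the invariant, and for $\mathsf{length}(C) = 1$ it is the trivial $1 \le 1 + 2\Delta(C)$.

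I expect the main obstacle to be the accounting of the failing calls — specifically, recognizing that each \textsf{meld} necessarily ends with one extra failing call (the convergence certificate) and that this $+1$ per \textsf{meld} must be paid out of the slack in $\mathsf{length}(C)$ rather than out of $\Delta$. Getting this right is exactly what forces the strengthened invariant and the use of the fact that split subcircuits have length at least $\Omega \ge 2$; the degenerate case $\Omega = 1$ would need separate handling. The remaining work is routine: checking that both telescoping identities genuinely follow from the two clauses of additivity, and verifying the window-length bound in the boundary cases of \textsf{meld} where $C_1[0:d_1-\Omega]$ or $C_2[\Omega:d_2]$ is empty.
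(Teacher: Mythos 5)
Your proposal is correct and, at the top level, matches the paper's strategy: isolate \textsf{meld}, show it makes at most $1 + 2\cdot(\text{its cost improvement})$ oracle calls, then let additivity of \cost{} telescope improvements across the recursion of \textsf{segopt}. The differences are in the execution. For the meld bound, the paper inducts on the input cost $\costof{C_1} + \costof{C_2}$ (one call plus, in the successful case, two recursive melds of strictly smaller input cost, using $\costof{W'} \leq \costof{W} - 1$); your full-binary-tree count ($S$ successful internal nodes, $S+1$ failing leaves, $2S+1$ calls in total, and $S \leq \mathsf{impr}$ by telescoping) has the same content but exposes more plainly where the factor $2$ and the $+1$ come from. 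The more substantive difference is in the second half: the paper declares the meld lemma to be the crux and never writes out the aggregation over \textsf{segopt}'s recursion. As you observed, that aggregation is not quite routine: the naive induction with the stated bound produces $\mathsf{length}(C) + 2\Delta + 1$ and does not close, and your strengthened invariant $T(C) \leq \mathsf{length}(C) + 2\Delta - 1$ (for $\mathsf{length}(C) \geq 2$, using that split halves have length at least $\Omega$) is a genuine repair of a step the paper's write-up silently skips.

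Your caveat about $\Omega = 1$ is sharper than you state: there the theorem itself fails, not merely your proof of it. With $\Omega = 1$, a length-$6$ circuit on which every oracle call fails ($\Delta = 0$) is split as $6 \to 3+3 \to (1+2)+(1+2)$, producing $4$ base-case calls and $3$ meld calls, i.e., $7 > 6 = \mathsf{length}(C) + 2\Delta$. (In general, with $\Delta = 0$ the call count is $2k-1$ where $k$ is the number of leaves of the recursion, and leaves of length $1$ push $2k-1$ above $\mathsf{length}(C)$.) So the implicit hypothesis $\Omega \geq 2$ that your strengthened invariant consumes is actually necessary, and flagging it is a contribution rather than a loose end. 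One small simplification: only the concatenation clause of additivity is used anywhere in this argument; the layer-union clause matters for compaction, not here.
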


\begin{corollary}[Linear calls to the oracle]
  When optimizing for gate count, our $\mathsf{segopt}(C)$ makes a linear,
  $O(\mathsf{length}(C) + \sizeof{C})$, number of calls to the oracle.
\label{cor:linear-calls}
\end{corollary}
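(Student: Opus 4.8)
The plan is to obtain the corollary as an immediate consequence of \thmref{cost}, which already bounds the number of oracle calls made by $\mathsf{segopt}(C)$ by $\mathsf{length}(C) + 2\Delta$, where $\Delta$ is the cost improvement between the input and the output. All that remains is to specialize this bound to the gate-count cost function and to argue that $\Delta$ is itself bounded by $\sizeof{C}$.

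First I would verify the hypothesis of \thmref{cost}: the theorem requires an additive cost function, so I would note (as already observed in the text following the definition of additivity) that gate count, $\costof{C} = \sizeof{C}$, is a linear combination of per-gate counts and is therefore additive. Hence \thmref{cost} applies and gives the bound $\mathsf{length}(C) + 2\Delta$ on the number of oracle calls, each invoked on a segment of length at most $2\Omega$.

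Next I would bound $\Delta$. By definition, $\Delta = \costof{C} - \costof{\mathsf{segopt}(C)}$ is the reduction in cost from the input to the output. Since the cost function maps into $\mathbb{N}$, the output cost $\costof{\mathsf{segopt}(C)}$ is non-negative, so $\Delta \leq \costof{C}$. For the gate-count cost we have $\costof{C} = \sizeof{C}$, and therefore $\Delta \leq \sizeof{C}$. Substituting into the bound from \thmref{cost} yields at most $\mathsf{length}(C) + 2\sizeof{C} = O(\mathsf{length}(C) + \sizeof{C})$ oracle calls, which is precisely the claimed linear bound.

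This argument is essentially bookkeeping: the real content is already packaged inside \thmref{cost}, whose output-sensitive bound charges costs to the cost improvement $\Delta$ rather than only to the input size. Consequently the sole point that warrants any care is the observation that $\Delta$ cannot exceed the input gate count — which holds precisely because the gate-count cost equals circuit size and the cost is non-negative. I do not expect any genuine obstacle here; the same reasoning shows that for a general additive cost the bound becomes $O(\mathsf{length}(C) + \costof{C})$ calls, and the corollary is just the instance where $\costof{C} = \sizeof{C}$.
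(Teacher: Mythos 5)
Your proposal is correct and matches the paper's own argument: the paper likewise derives the corollary directly from \thmref{cost} by observing that for gate-counting metrics the improvement $\Delta$ is trivially bounded by $\sizeof{C}$, giving $\mathsf{length}(C) + 2\Delta \leq \mathsf{length}(C) + 2\sizeof{C} = O(\mathsf{length}(C) + \sizeof{C})$. Your added checks (additivity of gate count, non-negativity of the cost) are the same implicit assumptions the paper relies on, so there is no substantive difference.
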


We experimentally validate this corollary in \secref{eval-calls},
where we study the number of oracle calls made by our algorithm
for many circuits.

\begin{wrapfigure}{r}{0.33\textwidth}
\definecolor{mPurple}{rgb}{0.58,0,0.82}
\lstset{
  basicstyle=\footnotesize\fontfamily{ttfamily}\selectfont, 
  keywordstyle={\color{mPurple}}, 
  morecomment=[l]{//},
  commentstyle=\rmfamily\slshape,
  showstringspaces=false, 
  columns=fullflexible, 
  morekeywords={fun,func,let,val,in,end,case,of,SOME, NONE, and, structure, if, or, else, then, return, def}, 
  mathescape=true, 
  escapechar={@},
  keepspaces=true,
  breaklines=true,
  numbers=none,
  numbersep=0pt,
  xleftmargin=0em}
\begin{lstlisting}
def $\mathsf{\algname{}}^*(f, C)$:
  $C'$ = segopt(compact$(C)$)
  if $1 - \frac{\costof{C'}}{\costof{C}} \leq f$:
    return $C'$
  else:
    return $\mathsf{\algname{}}^*(f, C')$
\end{lstlisting}
\caption{$\algname{}^*$ terminates when
the cost improvement ratio falls below the \emph{convergence threshold}, $f$.}
\label{fig:oac-star}
\end{wrapfigure}
\subsection{\algname{}*: controlling convergence}
\label{sec:convergence-thresh}
We observe that, except for the very last round,
each round of our \algname{} algorithm improves the cost of the circuit.
This raises a practical question: how does the
improvement in cost vary across rounds?
For the vast majority of our evaluation,
we observed that nearly all optimization (> 99\%) occur in the
first round itself (see \secref{converge});
the subsequent rounds have a small impact on the quality.
%
Based on this observation,
we propose $\mathsf{\algname{}}^*$ which uses a
convergence threshold $0 \leq \epsilon \leq 1$
to provide control over how quickly the algorithm converges.

The $\algname{}^*$ algorithm, in \figref{oac-star},
terminates as soon as the cost is reduced
by a smaller fraction than $f$.
For example, if we measure cost as the number of gates and $\epsilon = 0.01$,
then $\algname{}^*$ will terminate as soon as an
optimization round removes fewer than $1\%$ of the remaining gates.
Note that $\algname{}^*$ gives two guarantees:
1) the output circuit is segment optimal,
and 2) the fractional cost improvement in the last round is less than $\epsilon$.
In addition, setting $\epsilon = 0$
results in identical behavior to \algname{} and guarantees local optimality.

\section{Evaluation} \label{sec:eval}

We perform an empirical evaluation of the effectiveness of the
local optimality approach for quantum circuits.
Specifically, we consider the following research questions (RQ).
\begin{description}
  \item[RQ I:] Is local optimality and the \algname{} algorithm
    effective in terms of efficiency, scalability, and optimization
    quality?

 \item[RQ II:] Is empirical performance consistent with the asymptotic bound?

 \item[RQ III:] What is the role of the lazy \textsf{meld} operation? 
 
 \item[RQ IV:] What is the impact of segment size $\Omega$ and
 compaction on the local optimality and performance of \algname{} algorithm?
\end{description}

To answer these questions,
we implement the \algname{} algorithm and integrate
it with \voqc{}~\cite{hietala2021verified} as an oracle.
We chose \voqc{} because it is overall the best optimizer both in
terms of efficiency and quality of optimization among all the
optimizers that we have experimented with.
%
%
%
We evaluate the effectiveness of local optimality and \algname{}, on
the Nam gate set~\cite{Nam_2018} and compare it with three
state-of-the-art optimizers \quartz{}, \queso{}, and
\voqc{}~\cite{quartz-2022, queso-2023,hietala2021verified}.

In brief, these experiments show that our cut-and-meld algorithm
delivers fast optimization while closely matching (within 0.1\%) or
improving the optimization quality for all circuits.
These results show that the local optimality approach can be effective in
optimizing large quantum circuits, and can help scale existing optimizers. 

\iffull
In \appref{clifft}, 
\else
In the Appendix, 
\fi
we present results for
the Clifford+T gate set by using the \feyntool{}~\cite{amy2019formal},
as an oracle.
We omitted these from the main body of the paper due to space reasons
but note that they are similar to the results presented here in terms
of efficiency and quality.

\subsection{Implementation}
\label{sec:impl}

To evaluate whether the \algname{} algorithm  (\secref{algorithm})
is practically feasible,
we implemented \algname{}  in SML
(Standard ML), which comes with an optimizing compiler, MLton, that
can generate fast executables.
Our implementation closely follows the algorithm description.
It uses the layered circuit representation and
represents circuits as an array of arrays, where
each array denotes a ``layer'' of the circuit.
The implementation splits and joins circuit segments by
splitting and joining the corresponding arrays,
performing rounds of optimization and compaction.

As described in \secref{convergence-thresh}, our implementation
allows user control over convergence through
a specified convergence ratio $\epsilon$, where $0 \le \epsilon \le 1$.
In the evaluation, we choose $\epsilon = 0.01$, and analyze this choice in
\secref{compaction}.
Note that regardless of $\epsilon$,
the implementation always guarantees that output is \wopttext.

Our implementation is parametric in the oracle being used. To allow calls to existing optimizers,
we use MLton's foreign function interface, which supports cross-language calls to C++.
Specifically, to use an existing optimizer as an oracle, we only need to provide a C++ wrapper that takes a circuit in QASM format as input
and returns an optimized circuit as output.

\if0
As described in \secref{convergence-thresh}, our implementation
allows user control over convergence through
a specified convergence ratio $\epsilon$, where $0 \le \epsilon \le 1$.
The implementation terminates when an optimization round reduces the cost by
a smaller fraction than $\epsilon$.
For example, with $\epsilon = 0.01$ and using the number of gates as the cost function,
the algorithm stops when fewer than $1\%$ of the gates are removed in a round.
Regardless of $\epsilon$,
the implementation always guarantees that output is \wopttext;
setting $\epsilon = 0$ additionally ensures that the circuit is locally
optimal.
The convergence ratio thus acts as an ``optimization level'' parameter,
similar to those in many optimizers
(e.g., \texttt{"-O"} option of C/C++ compilers, \texttt{"opt-level"} option of Rust).
We evaluate both settings $\epsilon = 0.01$ and $\epsilon = 0$ and find
no difference in output quality.
As we study in \secref{converge}, the main reason for this is that after the first one/two rounds,
the circuit is almost optimal and any further rounds
do not improve the circuit significantly.
%

Similar to the algorithm, our implementation is parametric in the oracle being used.
To allow calls to existing optimizers, which may be written in different languages, we use MLton's foreign function interface, which supports cross-language calls to C++.
Specifically, to use an existing optimizer as an oracle, we only need to provide a C++ wrapper that takes a circuit in QASM format as input
and returns an optimized circuit as output.
Using this approach, we have implemented wrappers for two different optimizers: \feyntool{}~\cite{amy2019formal} and \voqc{}~\cite{hietala2021verified}.
\fi
%

%
%
%

\subsection{Benchmarks and gate set}
\label{sec:benchmarks}
To evaluate our \coam{} algorithm,
we consider a benchmark suite of eight circuit families that include both near-term and future fault-tolerant quantum algorithms.
For each family, we select circuits with different sizes by changing the number of qubits.
Our benchmark suite includes advanced quantum algorithm such as
Grover's algorithm~ for unstructured search~\cite{grover1996fast},
the HHL algorithm for solving linear systems of equations~\cite{harrow2009quantum},
Shor's algorithm for factoring large integers~\cite{shor1994algorithms},
and
the Binary Welded Tree (bwt) quantum walk algorithm~\cite{childs2003exponential}.
In addition, our benchmarks include  near-term algorithms like Variational Quantum Eigensolver (vqe)~\cite{peruzzo2014variational}
and reversible arithmetic algorithms~\cite{amy2014polynomial,Nam_2018}
such as boolean satisfaction problems (boolsat) and square-root algorithm (sqrt).

%
%

%
%
The benchmark suite is written in the Nam gate set~\cite{Nam_2018}, which consists of the
Hadamard ($\mathsf{H}$),
Pauli-X ($\mathsf{X}$),
controlled-NOT ($\mathsf{CNOT}$),
and Z-rotation ($\mathsf{R_Z}$) gates~\cite{Nam_2018}.
%
%
We preprocess all our benchmarks with the \quartz{} preprocessor,
which merges rotation gates~\cite{quartz-2022}.
%

%



%

\subsection{RQ I: Effectiveness of \algname{} and local optimality}
\label{sec:scalability}

%
%
To evaluate the effectiveness of \algname{}, we use our \algname{}
implementation with \voqc{} as the oracle on segments of size $\Omega
= 40$ and compare it to optimizers \quartz{}, \queso{}, and
\voqc{}.
The approach works for many different settings of $\Omega$ and we
analyze the impact of $\Omega$ in \secref{var-segment} in detail.
We give each optimizer a 12-hour cut-off time (excluding time for parsing and printing),
to allow completion of the experiments within a reasonable amount of time.
Throughout, we omit circuit-parsing time for timings of \voqc{}, whose parser
appears to scale superlinearly and can take significant time
(sometimes more than the optimization itself).
This approach is consistent with prior work on \voqc{}, which also excludes parse time.
When we use \voqc{} as an oracle of \algname{}, however,
we do include the parse time.
This makes the comparison somewhat unfair for \algname{}.
%

%
%
%
We evaluate the running times of these optimizers
on benchmarks from the Nam gate set with sizes ranging
from thousands to hundreds of thousands of gates.


%
\if0
We organize the results based on the different cost function and gate set we use:
\begin{itemize}
    \item In \secref{nam}, we evaluate our \algname{} optimizer by comparing it
    to state-of-the-art optimizers \quartz{}, \queso{}, and \voqc{}. We observe that \voqc{} and \algname{}
    produce the best quality circuits and \algname{} is the fastest, producing the best quality circuits 10x faster on average.
    \item In \secref{clifft}, we evaluate our \algname{} optimizer by comparing it to the \feyntool{},
    a state-of-the-art optimizer for benchmarks in the Clifford+T gate set. We observe that \algname{} matches
    the quality of circuits produced by \feyntool{} while being 10x faster on average
    \item In \secref{converge} , we study the the convergence factor of \algname{},
    which is typically around $3$ for most of the benchmarks. We also analyze the quality impact
    of successive rounds of optimization and observe that most of the optimizations (> $99.7\%$) are actually
    found after the first round, and the subsequent rounds have small impact on circuit quality.
    \item and \secref{var-segment}, we study the impact of changing the parameter $\Omega$
    on the output quality and running time of \algname{}.
\end{itemize}
\fi


%
\if0
We discuss the time performance and output quality separately,
and show the results in\figref{main-time} and \figref{main-gate-count}
respectively.

These results show that our \algname{} optimizer produces
the best quality circuits and runs significantly faster than other tools,
delivering speedups ranging from $10$x to $1000$x
across circuit sizes ranging from thousands to hundreds of thousands of gates.
\fi

\begin{figure}[!ht]
    \centering
\begin{tabular}{cccccccc}
                          &        &            & \multicolumn{4}{c}{Time (s)}                      &                                                           \\\cmidrule(lr){4-7}
Benchmark                 & Qubits & Input Size & Quartz & Queso         & VOQC    & \algname{}         & \begin{tabular}[c]{@{}c@{}}\algname{}\\ speedup\end{tabular} \\\midrule{}
\multirow{4}{*}{boolsat}  & 28     & 75670      & 12h    & 12h           & 68.6    & \textbf{45.2}   & 1.52                                                      \\
                          & 30     & 138293     & 12h    & 12h           & 307.3   & \textbf{98.7}   & 3.11                                                      \\
                          & 32     & 262548     & 12h    & 12h           & 1266.2  & \textbf{213.6}  & 5.93                                                      \\
                          & 34     & 509907     & 12h    & 12h           & 6151.0  & \textbf{462.8}  & 13.29                                                     \\\midrule{}
\multirow{4}{*}{bwt}      & 17     & 262514     & 12h    & 12h           & 8303.1  & \textbf{524.9}  & 15.82                                                     \\
                          & 21     & 402022     & 12h    & 12h           & 23236.8 & \textbf{1062.1} & 21.88                                                     \\
                          & 25     & 687356     & 12h    & 12h           & >12h & \textbf{2341.7} & > 18.45                                                     \\
                          & 29     & 941438     & 12h    & 12h           & >12h & \textbf{3982.6} & > 10.85                                                     \\\midrule{}
\multirow{4}{*}{grover}   & 9      & 8968       & 12h    & 12h           & 9.3     & \textbf{4.8}    & 1.94                                                      \\
                          & 11     & 27136      & 12h    & 12h           & 106.5   & \textbf{20.8}   & 5.12                                                      \\
                          & 13     & 72646      & 12h    & 12h           & 815.7   & \textbf{68.1}   & 11.97                                                     \\
                          & 15     & 180497     & 12h    & 12h           & 5743.2  & \textbf{223.9}  & 25.65                                                     \\\midrule{}
\multirow{4}{*}{hhl}      & 7      & 5319       & 12h    & 12h           & \textbf{0.3}     & 0.9    & 0.27                                                      \\
                          & 9      & 63392      & 12h    & 12h           & 74.1    & \textbf{22.9}   & 3.24                                                      \\
                          & 11     & 629247     & 12h    & 12h           & 14868.8 & \textbf{434.3}  & 34.24                                                     \\
                          & 13     & 5522186    & 12h    & Parsing Error & >12h & \textbf{8243.1}       & > 5.24                                                          \\\midrule{}
\multirow{4}{*}{shor}     & 10     & 8476       & 12h    & OOM           & 8.8     & \textbf{5.2}    & 1.70                                                      \\
                          & 12     & 34084      & 12h    & 12h           & 179.9   & \textbf{26.3}   & 6.84                                                      \\
                          & 14     & 136320     & 12h    & 12h           & 3638.4  & \textbf{126.0}  & 28.88                                                     \\
                          & 16     & 545008     & 12h    & 12h           & 70475.2 & \textbf{648.9}  & 108.60                                                    \\\midrule{}
\multirow{4}{*}{sqrt}     & 42     & 79574      & 12h    & 12h           & \textbf{30.0}    & 81.4   & 0.37                                                      \\
                          & 48     & 186101     & 12h    & 12h           & \textbf{191.2}   & 268.0  & 0.71                                                      \\
                          & 54     & 424994     & 12h    & 12h           & 3946.5  & \textbf{679.8}  & 5.81                                                      \\
                          & 60     & 895253     & 12h    & 12h           & >12h & \textbf{1653.5} & > 26.13                                                     \\\midrule{}
\multirow{4}{*}{statevec} & 5      & 31000      & 12h    & OOM           & \textbf{1.6}     & 4.3    & 0.38                                                      \\
                          & 6      & 129827     & 12h    & 12h           & 45.9    & \textbf{27.1}   & 1.70                                                      \\
                          & 7      & 526541     & 12h    & 12h           & 1812.2  & \textbf{164.7}  & 11.00                                                     \\
                          & 8      & 2175747    & 12h    & 12h           & >12h & \textbf{1345.1} & > 32.12                                                     \\\midrule{}
\multirow{4}{*}{vqe}      & 12     & 11022      & 12h    & 12h           & \textbf{0.2}     & 1.2    & 0.13                                                      \\
                          & 16     & 22374      & 12h    & 12h           & \textbf{0.6}     & 3.4    & 0.18                                                      \\
                          & 20     & 38462      & 12h    & 12h           & \textbf{2.0}     & 7.0    & 0.29                                                      \\
                          & 24     & 59798      & 12h    & 12h           & \textbf{5.4}     & 13.4            & 0.41                                                      \\\midrule{}
avg                       &        &            &        &               &         &                 & > 12.62
\end{tabular}

\caption{The figure shows the running time in seconds of the four optimizers, using gate count as the cost metric.
The column "\algname{} Speedup" is the speed of our  \algname{} with respect to \voqc{}, calculated as \voqc{} time divided by \algname{} time.
These measurements show that our optimizer \algname{} can be significantly faster, especially for large circuits (more than one order of magnitude on average).  As \figref{main-gate-count} shows, these time improvements come without any loss in optimization quality.
These results suggest that local optimality approach to optimization of quantum circuits can be effective in practice.}
    \label{fig:main-time}
\end{figure}
\begin{figure}[!ht]
  \centering
  \small
\medskip
\begin{tabular}{ccccccc}
                          &        &            & \multicolumn{4}{c}{Optimizer}         \\\cmidrule(lr){4-7}
Benchmark                 & Qubits & Input Size & Quartz  & Queso   & VOQC    & \algname{} \\\midrule{}
\multirow{4}{*}{boolsat}  & 28     & 75670      & -41.1\% & -30.4\% & -83.2\% & \textbf{-83.7\%} \\
                          & 30     & 138293     & -23.5\% & -30.2\% & -83.3\% & \textbf{-83.7\%} \\
                          & 32     & 262548     & -6.9\%  & 0.0\%   & -83.3\% & \textbf{-83.5\%} \\
                          & 34     & 509907     & -2.9\%  & 0.0\%   & -83.3\% & \textbf{-83.4\%} \\\midrule{}
\multirow{4}{*}{bwt}      & 17     & 262514     & -8.7\%  & -0.1\%  & -30.0\% & \textbf{-31.1\%} \\
                          & 21     & 402022     & -3.1\%  & -0.1\%  & -38.4\% & \textbf{-40.0\%} \\
                          & 25     & 687356     & -0.8\%  & 0.0\%   & N.A.    & \textbf{-43.8\%} \\
                          & 29     & 941438     & -0.4\%  & 0.0\%   & N.A.    & \textbf{-44.5\%} \\\midrule{}
\multirow{4}{*}{grover}   & 9      & 8968       & -9.4\%  & -13.7\% & \textbf{-29.4\%} & \textbf{-29.4\%} \\
                          & 11     & 27136      & -9.5\%  & -9.6\%  & -29.9\% & \textbf{-30.0\%} \\
                          & 13     & 72646      & -9.6\%  & -0.3\%  & \textbf{-29.7\%} & \textbf{-29.7\%} \\
                          & 15     & 180497     & -9.6\%  & -0.2\%  & \textbf{-29.5\%} & \textbf{-29.5\%} \\\midrule{}
\multirow{4}{*}{hhl}      & 7      & 5319       & -26.6\% & -28.7\% & \textbf{-55.4\%} & -55.3\% \\
                          & 9      & 63392      & -24.4\% & -23.8\% & -56.3\% & \textbf{-56.5\%} \\
                          & 11     & 629247     & -1.1\%  & 0.0\%   & \textbf{-53.7\%} &\textbf{ -53.7\%} \\
                          & 13     & 5522186    & 0.0\%   & N.A.    & N.A.    & \textbf{-52.6\%} \\\midrule{}
\multirow{4}{*}{shor}     & 10     & 8476       & 0.0\%   & -5.4\%  & \textbf{-11.1\%} & -11.0\% \\
                          & 12     & 34084      & 0.0\%   & -3.8\%  & \textbf{-11.2\%} & \textbf{-11.2\%} \\
                          & 14     & 136320     & 0.0\%   & 0.0\%   & \textbf{-11.3\%} & -11.2\% \\
                          & 16     & 545008     & 0.0\%   & 0.0\%   & \textbf{-11.3\%} &\textbf{ -11.3\%} \\\midrule{}
\multirow{4}{*}{sqrt}     & 42     & 79574      & -16.2\% & -0.1\%  & \textbf{-33.0\%} &\textbf{ -33.0\%} \\
                          & 48     & 186101     & -15.2\% & 0.0\%   & \textbf{-32.7\%} & -32.6\% \\
                          & 54     & 424994     & -5.1\%  & 0.0\%   & \textbf{-32.4\%} & -32.3\% \\
                          & 60     & 895253     & -2.1\%  & 0.0\%   & N.A.    & \textbf{-34.3\%} \\\midrule{}
\multirow{4}{*}{statevec} & 5      & 31000      & -48.4\% & -74.5\% & -78.8\% & \textbf{-78.9\%} \\
                          & 6      & 129827     & -35.0\% & -29.7\% & \textbf{-78.4\%} & \textbf{-78.4\%} \\
                          & 7      & 526541     & -2.0\%  & -29.7\% & \textbf{-78.1\%} & \textbf{-78.1\%} \\
                          & 8      & 2175747    & -0.1\%  & 0.0\%   & N.A.    & \textbf{-78.7\%} \\\midrule{}
\multirow{4}{*}{vqe}      & 12     & 11022      & -35.4\% & -69.1\% & -63.0\% & \textbf{-69.5\%} \\
                          & 16     & 22374      & -33.7\% & -64.2\% & -60.1\% &\textbf{ -66.3\%} \\
                          & 20     & 38462      & -32.2\% & -60.8\% & -57.4\% & \textbf{-63.4\%} \\
                          & 24     & 59798      & -30.8\% & -36.4\% & -54.9\% & \textbf{-60.6\%} \\\midrule{}
avg                       &        &            & -13.6\% & -16.5\% & -48.1\% & \textbf{-49.4\%}
\end{tabular}
\caption{Optimization Quality.  The figure shows percentage reduction in gate count achieved by the four optimizers. 
We write ``N.A.'' in cases where the optimizer did not finish within the allotted 12 hour deadline.
The experiments show that our cut-and-meld optimizer \algname{}
optimizes slightly better on average than all other optimizers.
%
}
    \label{fig:main-gate-count}
\end{figure}

\myparagraph{Time Performance.}
\figref{main-time} show the time for our \algname{} implementation (with \voqc{} oracle) compared against \quartz{}, \queso{}, and \voqc{}.
The figure includes eight families of circuits, where
horizontal lines separate families and circuits within each family
arranged body increasing qubit and gate counts.
The optimizers \quartz{} and \queso{} use the maximum allotted time of 12 hours in all circuits,
because they explore a very large search space of all optimizations.
In a few cases, \queso{} throws an error or runs out of memory (denoted ``OOM'').
The \voqc{} optimizer and our \algname{} optimizer terminate much faster.
Specifically, \algname{} optimizes all circuits
between 0.2 seconds and 3 hours depending on the size,
and \voqc{} finishes for all but six benchmarks within 12 hours.
In the figure, we highlight in bold the fastest optimizer(s) for each circuit.

Comparing between \voqc{} and our \algname{}, we observe the following:
\begin{itemize}
    \item \textbf{Performance:} \algname{} is the fastest across the board except for \texttt{vqe}
    and except perhaps for the smallest circuits in some families.
    \item \textbf{Scalability:} the gap between \algname{} and \voqc{} increases as the circuit size increases,
    with \algname{} performing as much as 100$\times$ faster in some cases.
    \item \textbf{Overall:} \algname{} is over an order of magnitude faster than \voqc{} on average.
\end{itemize}
In the case of the \texttt{vqe} family, \voqc{} is consistently faster, but as we discuss next,
this comes at the cost of poorer optimization quality.
For the small circuits of families \texttt{hhl}, \texttt{statevec}, and \texttt{sqrt},
our optimizer is slower than \voqc{}.
This is due to the overheads that our implementation incurs for
(1) splitting and joining circuits,
(2) serialization/deserialization of input/output circuits for each oracle call, and
(3) various system-level calls needed to support calls to an external oracle.
For example,
for the $7$-qubit \texttt{hhl} benchmark and the $42$-qubit \texttt{sqrt} benchmark, we have measured that at least 30\% of the running
time is spent parsing and serializing/deserializing circuits.
\if0
For example,
in the $7$-qubit \texttt{hhl} benchmark which takes $0.9$ seconds to run,
we measured that  over $0.3$ seconds are spent just on system calls.
Similarly, for the $42$-qubit ``sqroot'' which takes $81.4$ seconds, over
 $25$ seconds are spent on system calls and deserializaiton/serialization operations.
As circuit sizes increase, these overheads, which are constant, diminish.

\fi

\if0
For example,
in the ``shor'' family of circuits,
our optimizer takes around $5$ seconds for the smallest case with $7543$ gates,
around $38$ seconds for $30268$ gates ($12$ qubits),
and $189$ seconds for the $14-$qubit case which has around $51000$ gates.
The corresponding times for \voqc{} range from $8.7$ seconds to $3638$ seconds
and are $1.6$x to $190$x slower than our \algname{} optimizer.
For the $16-$qubit case, our \algname{} finishes in $943$ seconds,
while \voqc{} does not finish within twelve hours.
In the ``bwt'' family,
our \algname{} optimizes the $17$-qubit case in $1196$ seconds
and \voqc{} takes $8303$ seconds on this benchmark ($\approx$ 7x improvement);
for the $21-$qubit our \algname{} takes $2280$ seconds
and is an order of magnitude faster than \voqc{},
which takes $23236$ seconds.
For the larger instances in ``bwt'',
our \algname{} finishes in 5698 (1.5 hours) and 11842 seconds ($\approx$ 3 hours) respectively
and neither of \quartz{}, \queso{}, or \voqc{} terminate in $12$ hours.
\fi

\if0

For the majority of benchmarks, our \algname{} optimizer is faster.
We observe a pattern for the speedup of our \algname{} optimizer relative to \voqc{}.
For almost all families of circuits,
the speedup increases significantly with circuit size (see column speedup in \figref{main-time}).
For instance, in the \texttt{grover} family,
increasing the circuit size increases our speedup from $1.35$x to $17.93$x.
In the \texttt{boolsat} family,
our speedup increases from $1.14$x on the smallest case
to $10.2$x on the largest case.
This trend is consistent across almost all families of benchmarks
and shows that our optimizer handles large circuits in a scalable fashion.

The key reason for the scalability of our \algname{} optimizer
is that it focuses on finding local optimizations on the circuit.
It does not "chase" global optimizations which are more complex
and require a significant amount of time.
This raises the question: although \algname{} finds local optimizations
in an efficient and scalable fashion, does it lose anything in terms
of quality by only considering local optimizations
and not operating on the entire circuit simultaneously?
We show that \textbf{our speedups do not come at any cost
to circuit quality} and in fact,
\algname{} produces better quality circuits in some cases.
This is due to the fact \algname{} guarantees local optimality when it terminates.
\fi



%

\myparagraph{Optimization quality.}
Our experiments so far show that our \algname{} performs well but it
does not give evidence of optimization quality.
%
%
%
%
\figref{main-gate-count} shows
the output quality (measured by gate count)
of all optimizers for eight families of circuits.
%
The figure shows the original gate count and the percent reduction
in gate count achieved by tools \quartz{}, \queso{}, \voqc{}, and \algname{}.
The best optimizers are highlighted in bold.
These results show that \algname{} always matches the best optimizer
within $0.1\%$ or outperforms it.
On average, \algname{} reduces the gate count by $49.7\%$, improving by
$1\%$ over the second best.
We note all optimizers except for our \algname{}, are unable to finish
some large circuits within the allotted 12-hour time limit or yield
very small (less than 1\%) improvement.
We present a more detailed discussion of these experiments below.

The results show that \algname{} and \voqc{} produce overall
better circuits than \quartz{} and \queso{}.
For the \texttt{hhl} family, both \algname{} and \voqc{} achieve
reductions of around $56\%$, while \quartz{} and \queso{} are around
$26\%$ for $7$ and $9$ qubits, and less than $1\%$ for $11$ qubits.
In the \texttt{statevec} family,
\algname{} and \voqc{} consistently reduce the gate count by $78\%$.
However, for the 8-qubit case, \voqc{} does not finish within our timeout
of 12 hours so we write ``N.A.''.
\queso{} also finds comparable reductions for the 5-qubit benchmark.


For almost all families, we observe that the output quality of
\algname{} matches that of \voqc{} within $0.1\%$ or improves it,
sometimes significantly.
%
%
%
Specifically, for the \texttt{vqe} family, \algname{} optimizes better
than \voqc{}.
For example, on the 24-qubit \texttt{vqe} circuit, \algname{} improves
the gate count by $60.6\%$, and \voqc{} improves the gate count by
$54.9\%$.
Indeed, we observed that running \voqc{} twice by running it again on
its own output circuit bridges this gap.
%


\if0
is over an order of magnitude faster on average
and
produces circuits that match the best quality.}
These results demonstrate that
our \algname{} optimizes
circuits in an efficient and scalable fashion
and that local optimality is an effective
quality criterion.
\fi

\subsection{RQ II: Is empirical performance consistent with the asymptotic bound?}
\label{sec:eval-calls}
\begin{figure}[t]
    \centering
    \includegraphics[width=\columnwidth]{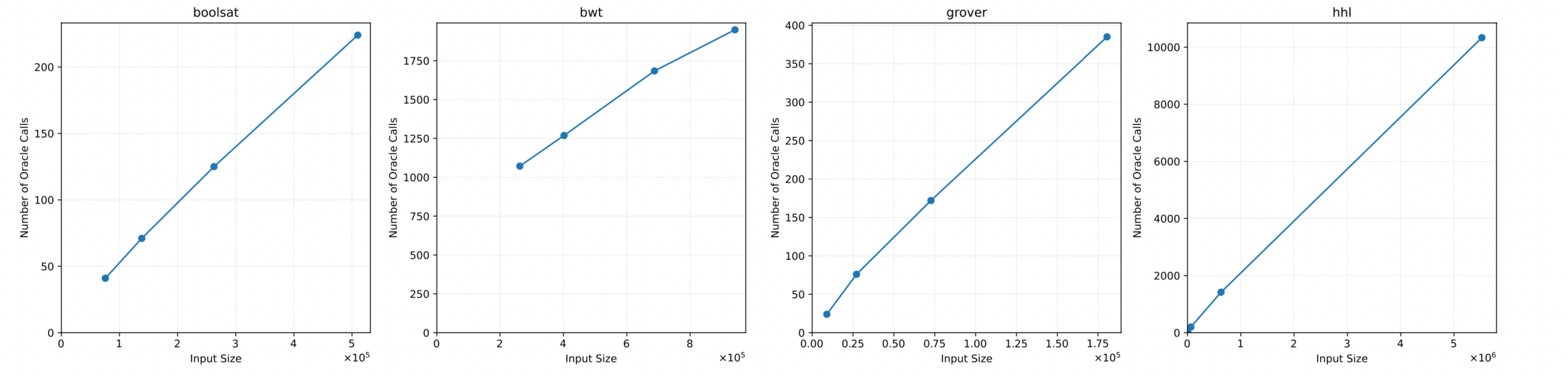}
    \vspace{-20pt}
    \caption{ The number of oracle calls versus input circuit size for
      selected circuits.  The plots show that the number of
      calls scales linearly with the number of gates.  }
    \label{fig:plot-oracle-calls}
\end{figure}

In \secref{algorithm}, we established bounds on the number of oracle
calls performed by our \algname{} algorithm.
In this section, we check that our implementation is consistent with
these bounds by analyzing the number of oracle calls with respect to
the circuit size.
%
%
\figref{plot-oracle-calls} plots the number of oracle calls made by
our \algname{} optimizer for a subset of circuit families (the other
circuit families behave similarly; see
\iffull
\figref{plot-oracle-calls-all} included in the Appendix).
\else
Appendix).
\fi
The Y-axis represents the number of oracle calls
and the X-axis represents the input circuit size.
The plot shows that the number of oracle calls
increases linearly with circuit size, for all circuit families.


We note that it would be more desirable to establish that the total
run-time, rather than the number of oracle calls, is linear, but this
is not the case because the oracle optimizers can take asymptotically non-linear time.
For example, the oracle \voqc{} can require at
least quadratic time in the number of gates in the circuit being
optimized, which varies as we increase the qubit
counts.


\subsection{RQ III: Ablation Study of Meld}

\begin{figure}[t]
  \scriptsize
  \begin{minipage}[]{0.49\textwidth}
    \begin{tabular}{cccccc|}
      & & & \multicolumn{3}{c}{Optimizer} \\ \cmidrule(lr){4-6}
Family                 & Qubits & Input Size & VOQC  & \algname{} & \hspace{-2pt}\algnameminus{}\hspace{-2pt} \\
 \midrule\multirow{4}{*}{boolsat}  & 28 &   75670 & -83.2\% & \textbf{-83.7\%} & -83.0\% \\
                           & 30 &  138293 & -83.3\% & \textbf{-83.7\%} & -83.1\% \\
                           & 32 &  262548 & -83.3\% & \textbf{-83.5\%} & -83.1\% \\
                           & 34 &  509907 & -83.3\% & \textbf{-83.4\%} & -83.1\% \\
 \midrule\multirow{4}{*}{bwt}      & 17 &  262514 & -30.0\% & \textbf{-31.1\%} & -28.3\% \\
                           & 21 &  402022 & -38.4\% & \textbf{-40.0\%} & -36.3\% \\
                           & 25 &  687356 & N.A.    & \textbf{-43.8\%} & -40.0\% \\
                           & 29 &  941438 & N.A.    & \textbf{-44.5\%} & -41.0\% \\
 \midrule\multirow{4}{*}{grover}   &  9 &    8968 & \textbf{-29.4\%} & \textbf{-29.4\%} & -26.1\% \\
                           & 11 &   27136 & -29.9\% & \textbf{-30.0\%} & -26.3\% \\
                           & 13 &   72646 & \textbf{-29.7\%} & \textbf{-29.7\%} & -26.0\% \\
                           & 15 &  180497 & \textbf{-29.5\%} & \textbf{-29.5\%} & -26.0\% \\
 \midrule\multirow{4}{*}{hhl}      &  7 &    5319 & \textbf{-55.4\%} & -55.3\% & -54.4\% \\
                           &  9 &   63392 & -56.3\% & \textbf{-56.5\%} & -55.6\% \\
                           & 11 &  629247 & \textbf{-53.7\%} & \textbf{-53.7\%} & -53.1\% \\
                           & 13 & 5522186 & N.A.    & \textbf{-52.6\%} & -52.1\% \\
      \midrule
    \end{tabular}
  \end{minipage}
  \begin{minipage}[]{0.49\textwidth}
    \begin{tabular}{cccccc}
      & & & \multicolumn{3}{c}{Optimizer} \\ \cmidrule(lr){4-6}
Family                 & Qubits & Input Size & VOQC  & \algname{} & \hspace{-3pt}\algnameminus{}\hspace{-3pt} \\
 \midrule\multirow{4}{*}{shor}     &  10 &    8476 & \textbf{-11.1\%} & -11.0\% & -10.3\% \\
                           &  12 &   34084 & \textbf{-11.2\%} & \textbf{-11.2\%} & -10.7\% \\
                           &  14 &  136320 & \textbf{-11.3\%} & -11.2\% & -10.8\% \\
                           &  16 &  545008 & \textbf{-11.3\%} & \textbf{-11.3\%} & -10.9\% \\
 \midrule\multirow{4}{*}{sqrt}     & 42 &   79574 & \textbf{-33.0\%} & \textbf{-33.0\%} & -31.5\% \\
                           & 48 &  186101 & \textbf{-32.7\%} & -32.6\% & -31.2\% \\
                           & 54 &  424994 & \textbf{-32.4\%} & -32.3\% & -30.9\% \\
                           & 60 &  895253 & N.A.    & \textbf{-34.3\%} & -32.9\% \\
 \midrule\multirow{4}{*}{statevec} &  5 &   31000 & -78.8\% & \textbf{-78.9\%} & -78.1\% \\
                           &  6 &  129827 & \textbf{-78.4\%} & \textbf{-78.4\%} & -77.9\% \\
                           &  7 &  526541 & \textbf{-78.1\%} & \textbf{-78.1\%} & -77.8\% \\
                           &  8 & 2175747 & N.A.    & \textbf{-78.7\%} & -78.6\% \\
 \midrule\multirow{4}{*}{vqe}      & 12 &   11022 & -63.0\% & \textbf{-69.5\%} & -62.7\% \\
                           & 16 &   22374 & -60.1\% & \textbf{-66.3\%} & -59.7\% \\
                           & 20 &   38462 & -57.4\% & \textbf{-63.4\%} & -57.2\% \\
                           & 24 &   59798 & -54.9\% & \textbf{-60.6\%} & -54.8\% \\
  \midrule
  \end{tabular}
  \end{minipage}
\caption{Ablation study of \lstinline{meld}. The table shows
  percentage reduction in gate count achieved by the version of
  \algname{}, called \algnameminus{}, that uses simply concatenation
  instead of the \lstinline{meld} algorithm, compared with \voqc{} and
  the original \algname{}.
The measurements show
that with the \lstinline{meld} algorithm, our \algname{} optimizer reduces gate counts better than 
\algnameminus{} (-49.4\% vs. -47.3\% on average).
}    \label{fig:main-ablation}
\end{figure}

Our algorithm for local optimality optimizes a circuit by cutting it
into smaller subcircuits, optimizing the subcircuits, and melding the
optimized subcircuits into a locally optimal circuit.
If the algorithm joined the circuits together instead of melding them,
then the resulting circuits would not be locally optimal, because the
segments overlapping the circuit cuts may not be optimal.
To understand the impact of the \lstinline{meld} operation, we perform
an ablation experiment.
Specifically, we implement an ablating version of our \algname{},
called \algnameminus{}, that simply concatenates the optimized
subcircuits instead of the \lstinline{meld} operation.

\figref{main-ablation} shows the results of this ablation study.
%
%
%
\algname{} consistently performs better than the ablating version
\algnameminus{}, with over 2\% improvement on the average total gate
count.
Even though the percentage degradation due to ablation may seem modest,
it is significant, because each and every gate has a significant
runtime and fidelity cost on modern and near-term quantum computers.
This ablation study shows the importance of the \lstinline{meld}
algorithm and that of local optimality, which does not hold without
the \lstinline{meld}.
Notably, the quality of the circuits produced by the ablating version
are significantly worse than those produced by the baseline \voqc{}.

\subsection{RQ IV: Impact of compaction and $\Omega$ on the effectiveness of \algname{}}
\label{sec:converge}

\begin{figure}[t]
  \scriptsize
  \begin{minipage}[]{0.4\textwidth}
    \begin{tabular}{ccccc|}
      & & & \multicolumn{2}{c}{Optimizations} \\ \cmidrule(lr){4-5}
      Family & Qubits & \#Rounds & Round 1 & Round 2 \\
      \midrule
      \multirow{4}{*}{boolsat}& 28 & 2 & 100.00\% & 0.00\% \\
      & 30 & 2 & 100.00\% & 0.00\% \\
      & 32 & 2 & 100.00\% & 0.00\% \\
      & 34 & 2 & 100.00\% & 0.00\% \\
      \midrule\multirow{4}{*}{bwt}& 17 & 2 & 99.59\% & 0.41\% \\
      & 21 & 2 & 99.79\% & 0.21\% \\
      & 25 & 2 & 99.83\% & 0.17\% \\
      & 29 & 2 & 99.90\% & 0.10\% \\
      \midrule\multirow{4}{*}{grover}& 9 & 2 & 99.96\% & 0.04\% \\
      & 11 & 2 & 99.90\% & 0.10\% \\
      & 13 & 2 & 99.99\% & 0.01\% \\
      & 15 & 2 & 99.97\% & 0.03\% \\
      \midrule\multirow{4}{*}{hhl}& 7 & 2 & 99.76\% & 0.24\% \\
      & 9 & 2 & 99.95\% & 0.05\% \\
      & 11 & 2 & 99.96\% & 0.04\% \\
      & 13 & 2 & 99.95\% & 0.05\% \\
      \midrule
    \end{tabular}
  \end{minipage}
  \begin{minipage}[]{0.4\textwidth}
    \begin{tabular}{ccccc}
      & & & \multicolumn{2}{c}{Optimizations} \\ \cmidrule(lr){4-5}
      Family & Qubits & \#Rounds & Round 1 & Round 2 \\
    \midrule\multirow{4}{*}{shor}& 10 & 2 & 100.00\% & 0.00\% \\
  & 12 & 2 & 99.97\% & 0.03\% \\
  & 14 & 2 & 99.96\% & 0.04\% \\
  & 16 & 2 & 99.99\% & 0.01\% \\
  \midrule\multirow{4}{*}{sqrt}& 42 & 2 & 99.90\% & 0.10\% \\
  & 48 & 2 & 99.81\% & 0.19\% \\
  & 54 & 2 & 99.97\% & 0.03\% \\
  & 60 & 2 & 100.00\% & 0.00\% \\
  \midrule\multirow{4}{*}{statevec}& 5 & 2 & 100.00\% & 0.00\% \\
  & 6 & 2 & 99.92\% & 0.08\% \\
  & 7 & 2 & 99.92\% & 0.08\% \\
  & 8 & 2 & 99.99\% & 0.01\% \\
  \midrule\multirow{4}{*}{vqe}& 12 & 2 & 99.95\% & 0.05\% \\
  & 16 & 2 & 99.99\% & 0.01\% \\
  & 20 & 2 & 99.98\% & 0.02\% \\
  & 24 & 2 & 99.99\% & 0.01\% \\
  \midrule
  \end{tabular}
  \end{minipage}
  \vspace{-1pt}
    \caption{The number of rounds and the
  percentage of optimizations 
  in each optimization round of \algname{}.}
  \label{fig:num-rounds}
\end{figure}

\subsubsection{Impact of compaction}
\label{sec:compaction}
\begin{wrapfigure}{r}{0.5\textwidth}
\centering
\small
\vspace{-10pt}
\begin{tabular}{ccc}
$\Omega$ & Time (s) & Output gate count (reduction) \\
\midrule
2 & 17.4 & 27600 (-56.46\%) \\
5 & 13.9 & 27620 (-56.43\%) \\
10 & 17.7 & 27609 (-56.45\%) \\
20 & 18.9 & 27590 (-56.48\%) \\
40 & 21.8 & 27570 (-56.51\%) \\
80 & 35.3 & 27564 (-56.52\%) \\
160 & 66.6 & 27559 (-56.53\%) \\
320 & 122.8 & 27551 (-56.54\%) \\
\midrule
\voqc{} & 74.1 & 27673 (-56.35\%) \\
\end{tabular}
\caption{Choice of $\Omega$: performance of \algname{} with different $\Omega$
on the \texttt{hhl} circuit with 9 qubits, which initially contains 63392 gates.
For reference, we also present the performance of \voqc{} at the end.}
   \label{fig:omega}
\end{wrapfigure}
\figref{num-rounds} shows the number of rounds and percentage
optimizations for each round of our \algname{} algorithm, using the
convergence ratio $\epsilon = 0.01$.
%
The results show that \algname{} converges very quickly, always
terminating after $2$ rounds of optimization, and that it
consistently finds over 99\% of the optimizations in the first round.
This is because our \algname{} ensures the slightly weaker segment
optimality after the first round of optimization (see
\secref{algorithm}), which ensures that all segments are optimal,
though there may be gaps.
%
The experiment shows that although compaction can enable some
optimizations by removing the gaps, its impact on these benchmarks is
minimal.
We separately ran the same experiments with $\epsilon = 0$, which
forces the algorithm to run up to perfect convergence, and observed
that \algname{} requires $4$ rounds of optimization on average over
all circuits.
These results show that in practice a small number of compaction
rounds suffice to obtain results that are within a very small fraction
of the local optimal.

\subsubsection{Impact of varying segment size $\Omega$}
\label{sec:var-segment}

\figref{omega} shows the running time and the gate count reduction of \algname{} with different
values of $\Omega$
on the \texttt{hhl} circuit with 9 qubits, which initially contains 63392 gates.
The results show that for a wide range of $\Omega$ values,
our optimizer produces a circuit of similar quality to the baseline \voqc{}, and typically does so in significantly less time.
When $\Omega$ is large, \algname{}'s running time scales linearly with $\Omega$, and the output gate count reduces marginally when $\Omega$ increases.
We choose $\Omega=40$ in our evaluation to achieve a balance between running time and output quality but note that many different values work similarly well.

\section{Related Work}
We discussed most closely related work in the body of the paper.  In
this section, we present a broader overview of the work on quantum
circuit optimization.

\paragraph{Cost Functions.}
Gate count is a widely used metric for optimizing quantum circuits.
In the NISQ era, reducing gate count improves circuit performance
by minimizing noise from operations and decoherence.
It also reduces resources in fault-tolerant architectures
like the Clifford+T gate set.
Researchers have developed techniques to reduce gate count by
either directly optimizing circuits or
resynthesizing parts using efficient synthesis algorithms.
We cover optimization techniques later in the section.


In addition to reducing gate counts,
compilers like Qiskit and t$\ket{\textnormal{ket}}$,
implement circuit transformations that optimize cost specific to NISQ architectures.
Examples include maximizing circuit fidelity in the presence of noise~\cite{murali2019noise, tannu2019not},
and
reducing qubit mapping and routing overhead (SWAP gates)
for specific device topologies~\cite{molavi2022qubit, lye2015determining, itoko2020optimization, li2019tackling},
or hardware-native gates and pulses \cite{nottingham2023decomposing, wu2021tilt, shi2019optimized, gokhale2020optimized}.
Techniques also exist to optimize/synthesis
circuits for specific unitary types, such as classical
reversible gates~\cite{prasad2006data, ding2020square, bandyopadhyay2020post, wille2019towards},
\clifft{}~\cite{amy2020number, kliuchnikov2014asymptotically, ross2014optimal, kissinger2020reducing},
Clifford-cyclotomic~\cite{forest2015exact}, V-basis~\cite{bocharov2013efficient, ross2015optimal},
and Clifford-CS~\cite{glaudell2020optimal} circuits.
While algorithms for small unitaries produce
Clifford+T circuits with an asymptotically optimal number of $\mathsf{T}$ gates~\cite{giles2013remarks},
efficiently generating optimal large Clifford+T circuits remains a challenge.
The \feyntool{} optimizer is used for optimizing the $\mathsf{T}$ count of quantum circuits.
It uses an efficient (polynomial-time) algorithm called phase folding~\cite{amy2014polynomial},
to reduce phase gates, such as the $\mathsf{T}$ gate, by merging them.
More generally, the Feynman toolkit combines phase folding
with synthesis techniques to optimize other metrics like the CNOT count~\cite{amy2019formal}.
We demonstrate that our \algname{} algorithm,
which guarantees local optimality,
can use \feyntool{} as an oracle for optimizing
$\mathsf{T}$ count in Clifford+T circuits
\iffull
in \appref{clifft}.
\else
in the Appendix.
\fi
%
These experiments show that our \algname{} algorithm scales
to large circuits without reducing optimization quality.

\paragraph{Resynthesis methods.}
Resynthesis methods focus on decomposing unitaries into sequences of
smaller unitaries using algebraic structures of matrices.
Examples include
the Cartan decomposition \cite{tucci2005introduction},
the Cosine-Sine Decomposition (CSD),
the Khaneja Glaser Decomposition (KGD) \cite{khaneja2001cartan},
and the Quantum Shannon Decomposition (QSD).
Some synthesis methods demonstrate optimality for arbitrary unitaries of small size
(typically for fewer than five qubits),
particularly in terms of gate counts like CNOT gates \cite{rakyta2022approaching}.
However, their efficiency degrades significantly when dealing with larger unitaries;
furthermore, they require the time-consuming step of turning the circuit into a unitary.
QGo \cite{wu2020qgo} addresses this limitation with a hierarchical approach that
partitions and resynthesizes circuits block-by-block.
However, due to the lack of optimization across blocks,
the performance of QGo depends heavily on how circuits are partitioned.
Our local optimality technique, and specifically melding,
could be used to address this limitation of QGo.

\paragraph{Rule-based and peephole optimization methods.}
%
Rule-based methods find and substitute rules in quantum circuits to
optimize the circuit~\cite{iten2022exact, bandyopadhyay2020post,hietala2021verified, quartz-2022}.
VOQC~\cite{hietala2021verified} is
a formally verified optimizer that uses rules to optimize circuits.
VOQC implements several optimization passes inspired by state-of-the-art
unverified optimizer proposed by Nam et al.~\cite{Nam_2018}.
These passes include rules that perform $\mathsf{NOT}$ gate propagation, Hadamard gate reduction,
single- and two-qubit gate cancellation, and rotation merging.
Most of these passes take quadratic time in circuit size,
while some can take as much as cubic time~\cite{Nam_2018}.
Our experiments show that our local optimization algorithm \algname{}
effectively uses \voqc{} as an oracle for gate count optimization.
%

%

The notion of local optimality proposed in this paper is related
peephole optimization techniques from the classical compilers
literature~\cite{ct-compiler-2022,h+stratified-2016,sa-peephole-2006}.
Peephole optimizers typically optimize a small number of instructions,
e.g., rewriting a sequence of three addition operations into a single
multiplication operation.
Our notion of local optimality applies to segments of quantum
circuits, without making any assumption about segment sizes (in our
experiments, our segments typically contained over a thousand gates).
Because peephole optimizers typically operate on small instructions at
a time and because they traditionally consider the non-quantum
programs,  efficiency concerns are less important.
In our case, efficiency is crucial, because our segments can be large,
and optimizing quantum programs is expensive.
To ensure efficiency and quality, we devise a circuit
cutting-and-melding technique.

Prior work use peephole optimizers~\cite{prasad2006data,
liu2021relaxed} to improve the circuit one group of gates at a time,
and repeat the process from the start until they reach a fixed point.
The Quartz optimizer also uses a peephole optimization technique but
cannot make any quality guarantees~\cite{quartz-2022}.
Our algorithm differs from this prior work, in several aspects.
First,
it ensures efficiency, while also providing a quality guarantee based
on local optimality.
Key to attaining efficiency and quality is its use of circuit
cutting and melding techniques.
Second, our algorithm is generic: it can work on large segments (far
larger than a peephole) and optimizes each segment with an oracle,
which can optimize the circuit in any way it desires, e.g., it can use
any of the techniques described above.

\if0
Although we focus on the Nam gate set for this evaluation, we note that
\quartz{} and \queso{} have shown competitive performance
for other gate sets (including IBM, Rigetti, and Ion)~\cite{quartz-2022, queso-2023}, and
we expect that this performance would translate to our setting.
\fi

\pyzx{}~\cite{kissinger2020Pyzx} is another rule-based optimizer
that optimizes $\mathsf{T}$ count. It uses ZX-diagrams to optimize circuits and then extracts the circuit. Circuit extraction for ZX-diagrams is \#{}P-hard~\cite{de2022circuit}, and can take up much more time than optimization itself.
Because \algname{} invokes the optimizer many times, circuit extraction for ZX-diagrams can become a bottleneck. In addition, \pyzx{} only minimizes $\mathsf{T}$ count and does not explicitly optimize gate count. We therefore did not use \pyzx{} in our evaluation.

\paragraph{Search-based methods.}
Rule-based optimizers may be limited by a small set of rules and are not exhaustive.
To address this,
researchers have developed search-based optimizers~\cite{queso-2023, quartz-2022, qfast,qsearch}
including \quartz{}~\cite{quartz-2022} and \queso{}~\cite{queso-2023}
that automatically synthesize exhaustive circuit equivalence rules.
Although their rule-synthesis approach differs,
both use similar algorithms for circuit optimization.
They iteratively operate on a search queue of candidate circuits.
In each iteration,
they pop a circuit from the queue,
rewrite parts of the circuit using equivalence rules,
and insert the new circuits back into the queue.
To manage the exponential growth of candidate circuits,
both tools use a ``beam search'' approach that limits the search queue size by
dropping circuits appropriately.
By limiting the size of the search queue,
\quartz{} and \queso{} ensure that
the space usage is linear relative to the size of the circuit.
Their running time remains exponential,
and they offer a timeout functionality,
allowing users to halt optimization after a set time.
This approach has delivered excellent reductions in gate count
for relatively small benchmarks~\cite{queso-2023, quartz-2022}.
However, for large circuits,
the optimizers do not scale well because they attempt to search an exponentially large search space.

QFast and QSearch apply numerical optimizations to
search for circuit decompositions that are close to the desired unitary~\cite{qfast, qsearch}.
Although faster than search-based methods \cite{davis2020towards},
these numerical methods tend to produce longer circuits, and their running time
is difficult to analyze.


\if0
The optimizers \quartz{} \cite{quartz-2022} and \queso{} \cite{queso-2023}
construct equivalent classes of quantum circuits and
use them to run a beam-search algorithm for reducing gate count (\secref{impl}).
These state-of-the-art optimizers are excellent for optimizing circuits
whose sizes are hundreds of gates.
\fi

\paragraph{Learning-based methods.}
Researchers have also developed machine learning models~\cite{fosel2021quantum}
for optimizing quantum circuits with variational/continuous parameters,
which reduce gate count by tuning parameters of
shallow circuit ansatze~\cite{mitarai2018quantum, ostaszewski2021reinforcement},
or by iteratively pruning gates \cite{sim2021adaptive, wang2022quantumnas}.
These approaches, however, are associated with substantial training costs~\cite{wang2022quantumnas}.

\section{Conclusion}
Quantum circuit optimization is a fundamental problem in
quantum computing.
State-of-the-art optimizers require at least quadratic time in the size of the
circuit, which does not scale to larger circuits that are necessary
for obtaining quantum advantage, and are unable to make strong quality
guarantees.
This paper defines a notion of local optimality and shows that it is
possible to optimize circuits for local optimality efficiently by
proposing a circuit cutting-and-melding technique.
With this cut-and-meld technique, the algorithm cuts a circuit into
subcircuits, optimizes them independently, and melds them efficiently,
while also guaranteeing optimization quality.
Our implementation and experiments show that the algorithm is
practical and performs well, leading to more than an order of magnitude
performance improvement (on average) while also improving optimization
quality.
These results show that local optimality can be effective in
optimizing large quantum circuits, which are necessary for quantum
advantage.
These results, however, do not suggest stopping to develop global
optimizers, which remains to be an important goal.
It is likely, however, due to inherent complexity of the problem (it
is QMA hard), global optimizers may struggle to scale to larger
circuits efficiently.
Because our approach to local optimality is generic, it can scale
global optimizers to larger circuits by employing them as oracles for
local optimization.

\if0
Quantum circuit optimization is a fundamental problem in
quantum computing.
The problem is QMA hard, making it unlikely that there will be optimizers that
can guarantee global optimality efficiently.
%
%
This paper proposes a local-optimality approach to optimizing
quantum circuits.
Parameterized by a constant $\Omega$, local optimality requires that all
contiguous subcircuits with $\Omega$ layers be optimal.
We formalize local optimality and present rewrite rules for accomplishing them.
We then present an algorithm called \coam{} (Optimize and Compact) that ``schedules'' the application of the rewrite rules
to guarantee efficiency and local optimality.
The \coam{} algorithm optimizes a circuit in rounds, each of which optimizes small segments of the circuit and determines which segments to optimize by carefully tracking the optimizations and their impact on nearby segments.
We establish asymptotic efficiency bounds on the algorithm, including a linear-time bound for each round.

We show that the
algorithm is practical by implementing it and evaluating it.
The experiments show that local optimality can be significantly faster than state-of-the-art optimizers, especially as the circuit sizes increase, and does not appear to reduce optimization quality.
The results of the paper suggest that local optimality can be used to optimize quantum circuits effectively, including large circuits that will be required to obtain the benefits of quantum computing.
Interesting future directions for research include the development of a fast local optimizer (the implementation used here aims to be more generally applicable to existing optimizers so as to answer key research questions) and the development of verification techniques for local optimality.

\fi



\bibliographystyle{plain}


\bibliography{Ref_QCS_Ding,local}

\iffull
\clearpage
\appendix
\if0
\section{Detailed Results for RQ IV: Impact of compaction on the effectiveness of \algname{}}
\label{app:converge}
\figref{num-rounds} shows the number of rounds
and percentage optimizations for each round of our \algname{} algorithm.
Similar to \figref{main-gate-count},
we use our \algname{} optimizer, with \voqc{} as the oracle
and convergence ratio $\epsilon = 0.01$,
to optimize the benchmarks in the Nam gate set.
We observe that our algorithm converges quickly,
as it terminates after $2$ rounds of optimization.
Furthermore,
\textbf{the \algname{} algorithm
	consistently finds over 99\% of the optimizations in the first round itself.}
This is because our \algname{} ensures segment optimality directly
after the first round of optimization (see \secref{algorithm}).
Segment optimality guarantees that all segments of size $\Omega$ are optimal
with respect to the oracle;
the only difference between segment optimality and local optimality
is that segment optimality does not require compact circuits.
But as we observe that,
although compaction can enable some optimizations,
its impact on these benchmarks is minimal.
We also analyze these results for $\epsilon = 0$ and observe that
\algname{} produces locally optimal circuits quickly,
requiring $4$ rounds of optimization on average.

\fi

\section{Experiments with Clifford+T Gate Set}
\label{app:clifft}
In this section, we present the results of experiments with the
Clifford+T gate set. These results largely mirror the results
presented in the main body of the paper, showing similar efficiency
improvements and quality guarantees.

\subsection{RQ I: Effectiveness of \algname{} and local optimality}
We validate the effectiveness of local optimality
with the Clifford+T gate set. The Clifford+T gate set contains
the Hadamard ($\mathsf{H}$),
Phase ($\mathsf{S}$),
controlled-NOT ($\mathsf{CNOT}$),
and the $\mathsf{T}$ gate.
Our benchmark suite includes seven circuit families. We generate the Clifford+T circuits by transpiling the preprocessed Nam circuits
using Qiskit and gridsynth~\cite{gridsynth,qiskit-2023}.
For \texttt{bwt} and \texttt{hhl}, the two largest circuits of these
families failed to transpile (due to running out of memory) so we exclude
these from the Clifford+T evaluation.

For this gate set,
we run our \algname{} optimizer using \feyntool{}~\cite{amy2019formal} as the oracle optimizer
with segment depth $\Omega = 120$.
We describe in the next section, why we chose this value for $\Omega$. 
We evaluate the running time and output quality of our optimizer
against the baseline \feyntool{}.
The cost function is the $\mathsf{T}$ count, that is the number of $\mathsf{T}$ gates of the circuit.

We note that another possibility for optimizing Clifford+T circuits is
the \pyzx{} tool, which uses the ZX-diagrams for optimizations.
This tool, however, can require significant time to translate between
ZX-diagrams and the circuit representation.
For example, in our experiments, we found that for many circuits, the
\pyzx{} tool spends more than 50\% of its time on average translating
between circuits and diagrams.
Because our algorithm invokes the optimizer many times, the
translation times between circuits and diagrams can become a
bottleneck.
We therefore use the \feyntool{} in our evaluation, which does not
suffer from this problem, as it operates directly on the circuit.

\begin{figure}
	\centering\small
	\begin{tabular}{cccccccccc}
  & &  & \multicolumn{2}{c}{Time} & &  \multicolumn{2}{c}{T Count Reduction}  \\ \cmidrule(lr){4-5} \cmidrule(lr){7-8}
   Family & Qubits & Input T Count & \feyntool{} & \algname{} & \algname{} speedup & \feyntool{} & \algname{} \\

  \midrule\multirow{2}{*}{bwt}    & 17 & 169330 & 35730.0      & \textbf{354.8}  & 100.69 & -13.2\% & -13.2\% \\
                          & 21 & 214585 & 68301.4      & \textbf{569.1}  & 120.01 & -19.5\% & -19.4\% \\
  \midrule\multirow{4}{*}{grover} &  9 &   3927 & \textbf{2.6} & 3.3             &   0.79 & -31.2\% & -31.2\% \\
                          & 11 &  13720 & 12.0         & \textbf{10.5}   &   1.15 & -33.7\% & -33.7\% \\
                          & 13 &  36920 & 140.9        & \textbf{34.8}   &   4.05 & -33.1\% & -33.1\% \\
                          & 15 &  92016 & 2609.6       & \textbf{104.5}  &  24.97 & -32.7\% & -32.7\% \\
  \midrule\multirow{2}{*}{hhl}    &  7 &  61246 & 409.8        & \textbf{31.4}   &  13.03 & -31.2\% & -31.2\% \\
                          &  9 & 565183 & T.O.         & \textbf{535.4}  &  > 80.69 & T.O.    & -34.1\% \\
  \midrule\multirow{4}{*}{hwb}    &  8 &   5887 & \textbf{4.1} & 5.9             &   0.69 & -25.7\% & -25.7\% \\
                          & 10 &  29939 & 250.8        & \textbf{45.8}   &   5.48 & -29.8\% & -29.8\% \\
                          & 11 &  84196 & 4767.3       & \textbf{129.9}  &  36.7  & -31.3\% & -31.3\% \\
                          & 12 & 171465 & 21880.4      & \textbf{362.6}  &  60.35 & -34.1\% & -34.1\% \\
  \midrule\multirow{4}{*}{qft}    & 48 &  44803 & 195.1        & \textbf{77.1}   &   2.53 & -20.2\% & -20.2\% \\
                          & 64 &  61027 & 531.4        & \textbf{138.8}  &   3.83 & -20.3\% & -20.3\% \\
                          & 80 &  77251 & 1083.4       & \textbf{204.4}  &   5.3  & -20.3\% & -20.3\% \\
                          & 96 &  93475 & 1931.9       & \textbf{355.6}  &   5.43 & -20.3\% & -20.3\% \\
  \midrule\multirow{4}{*}{shor}   & 10 &   6104 & \textbf{2.0} & 4.0             &   0.51 & -19.7\% & -19.7\% \\
                          & 12 &  20180 & 21.0         & \textbf{16.5}   &   1.27 & -20.3\% & -20.3\% \\
                          & 14 &  70544 & 999.5        & \textbf{76.9}   &  12.99 & -20.5\% & -20.5\% \\
                          & 16 & 266060 & 28382.3      & \textbf{396.8}  &  71.53 & -20.6\% & -20.6\% \\
  \midrule\multirow{4}{*}{sqrt} & 42 &  25104 & 569.2        & \textbf{69.3}   &   8.21 & -37.4\% & -37.4\% \\
                          & 48 &  60366 & 5441.6       & \textbf{189.8}  &  28.67 & -39.9\% & -39.9\% \\
                          & 54 & 140830 & 36747.0      & \textbf{631.8}  &  58.16 & -41.7\% & -41.7\% \\
                          & 60 & 261308 & T.O.         & \textbf{1212.1} &  > 35.64 & T.O.    & -29.8\% \\

 \midrule
 \textbf{average} & & & & & > 9.91 & -27.1\% & -27.5\%
 \end{tabular}
	\caption{
		The figure shows the optimization results of optimizers
		$\algname{}$ and \feyntool{}, with $\mathsf{T}$ count as the cost function.
		It shows the running time in seconds for both optimizers (lower is better)
		and calculates the speedup of our \algname{} by taking the ratio of the
		two timings.
		The figure also shows the $\mathsf{T}$ count reductions of both tools.
		The results show that our $\algname{}$ delivers excellent time performance
		and runs almost an order of magnitude ($9.9\times$) faster than \feyntool{} on average.
		Our \algname{} optimizer achieves this speedup without any sacrifice in circuit quality,
		producing the same quality of circuits as \feyntool{}.
	}
	\label{fig:feynopt}
\end{figure}

\figref{feynopt} shows the results of this experiment.
The figure separates circuit families with horizontal lines
and sorts circuits within families by their size/number of qubits.
It presents the initial $\mathsf{T}$ count for each circuit
and the running times of both optimizers,
highlighting the fastest one in bold.
It computes the speedup achieved by our \algname{};
a speedup of $10\times$ means our optimizer runs $10\times$ faster.
The figure also
shows percentage reductions in $\mathsf{T}$ count achieved by
optimizers \feyntool{} and \algname{}.

The results show that our \algname{} generates high-quality circuits
for the Clifford+T gate set reasonably quickly, taking between $4$ seconds and approximately 20 minutes (for a circuit containing over 250,000 $\mathsf{T}$ gates).
\if0
and
does so in short running times,
including for circuits containing thousands to hundreds of thousands of $\mathsf{T}$ gates.
Our \algname{} optimizer optimizes
all circuits in times ranging from $4$ seconds to $2155$ seconds (< 1 hour).
\fi
We observe the following:
\begin{enumerate}
	\item \textbf{Time Performance:} Our \algname{} is faster for all circuit families,
	except perhaps for some smaller circuits.
	\item \textbf{Scalability: } Within each circuit family,
	the speedup of our \algname{} increases with increasing
	$\mathsf{T}$ counts. It is over $100\times$ faster for some cases and $9.9\times$ faster on average.
	\item \textbf{Circuit Quality:} Our \algname{} matches the $\mathsf{T}$ count reductions of \feyntool{} on
	all benchmarks.
\end{enumerate}

\myparagraph{Summary.}
\figref{feynopt}
shows that
our \algname{} optimizer can be significantly faster, especially for larger circuits, because it scales better, and does so without sacrificing optimization quality when optimizing for $\mathsf{T}$ count.
The experiment thus shows that the local optimality approach can work well  for Clifford+T circuits, especially for larger circuits.

\if0
The experiments suggest that local optimality is an effective
optimization criteria for $\mathsf{T}$-count optimization.
By focusing on local
optimizations, our \algname{} optimizes for $\mathsf{T}$-count
in a scalable fashion.
The approach is almost an order of magnitude faster (9.9x) across circuit
families without any loss in circuit quality.
This experiment also demonstrates the versatility of our oracle-based rewriting approach.
Our \algname{} shows consistent performance improvements
for both the Nam gate set (\secref{scalability}) and the Clifford+T gate set,
including for different oracles.
\fi

\if 0
For example,
in the ``hwb'' family,
\algname{} runs in $6$ seconds for the $8-$qubit case with $5887$ $\mathsf{T}$ gates
and takes $45.9$ seconds for the $10-$qubit case with $29939$ $\mathsf{T}$ gates.
The \feyntool{} optimizer is faster for the smaller $8$-qubit case,
because it runs in $4$ seconds;
Our \algname{} is faster for the other three ``hwb'' cases
with speedups ranging from $5.4\times$ to $55.78\times$.
In the ``hhl family'',
our \algname{} optimizer runs $23\times$ faster for the $7$-qubit case
taking $42.6$ seconds compared to \feyntool{}'s $409$ seconds.
It runs at least $76\times$ faster for the $9-$qubit case,
as it takes $564$ seconds ($\approx$ 30 minutes) to terminate
and \feyntool{} does not terminate in twelve hours.

We observe a similar speedup trend for Clifford+T gate
set as for the Nam gate set (see \secref{nam}).
Within any circuit family,
the speedup of our \algname{} optimizer increases with increasing circuit size.
For example,
in the ``shor'' family of circuits,
the speedup is less than one for the smallest circuit with $10$ qubits
because our \algname{} runs slower;
for the three larger cases \algname{} runs much faster
and its speedup increases from $1.3\times$ to $14.16\times$ to $63.39\times$
with increasing circuit sizes.
For the ``qft'' family, our optimizer is consistently faster
and its speedup trends upwards with increasing circuit sizes.
The trend is consistent for all circuit families.

The primary reason our \algname{} delivers scalable performance
is that focuses on local optimizations within the circuit.
\algname{} does not spend time on pursuing global optimizations
and instead achieves local optimality in a relatively short amount of time.
The key question is: does this approach compromise on circuit quality?
Our circuit quality results demonstrate that the answer is no,
and our \algname{} achieves scalability without compromising on quality.

\myparagraph{Optimization Quality.}
The results show that our optimizer \algname{} matches the output quality of \feyntool{}.
Both tools achieve significant reductions in $\mathsf{T}$ count ranging from
$20\%-40\%$ across a variety of benchmarks.
Note that the baseline \feyntool{} can, in principle, discover more optimizations
because it operates on the full circuit.
However, the experiment shows that achieving local optimality,
as guaranteed by our \algname{},
leads to the same quality of circuits.
In \secref{var-segment}, we confirm this for a range of $\Omega$ values.
%



\subsubsection{\coam{} with \voqc{} Oracle}

\figref{main-gate-count} shows
the output quality of all optimizers for eight families of circuits.
%
The figure shows the original gate count and the percent reduction
in gate count achieved by the respective tools.
We observe that our \algname{} and \voqc{} produce
significantly better circuits than
\quartz{} and \queso{} in terms of quality.
Our \algname{} and \voqc{} are almost always within 1\% of each other, except for  the ``vqe'' family,
where our optimizer produces smaller (better) circuits than \voqc{}.
For example,
in the case with $24$ qubits (``vqe''),
our optimizer improves the gate count
by $60.6\%$ and  \voqc{} improves the gate count by $54.9\%$.
It appears that \voqc{} missed some optimizations for the \texttt{vqe} family.
To understand this better, we tried running \voqc{} again on its own outpet and observed that it is able to improve quality further in a subsequent run and bridge the quality gap.
\fi

\if0
For example,
for the \texttt{hhl} family,
both \algname{} and \voqc{} achieve reductions of around $56\%$,
while \quartz{} and \queso{} are around $26\%$ for $7$ and $9$ qubits,
and less than $1\%$ for $11$ qubits.
In the \texttt{statevec} family,
\algname{} and \voqc{} consistently reduce the gate count by $78\%$.
However, for the $8-$qubit case, \voqc{} does not finish within our timeout
of twelve hours so we write ``N.A.''.
\queso{} also finds comparable reductions for the $5-$qubit benchmark.

Internally,
our \algname{} uses \voqc{} as a subroutine to optimize
small segments of the circuit, rather than optimizing
the entire circuit at once.
Specifically, \algname{} considers the circuit in segments
of size $\Omega = 40$ and applies \voqc{} to them.
In contrast,
when used as a standalone optimizer,
\voqc{} processes the entire circuit and could theoretically find more optimizations
by considering all possible gates simultaneously.
However,
our experiments show that \algname{}'s segment-based approach
achieves the same quality.
This suggests that local optimality,
as achieved by our optimizer,
is a good goal for circuit optimization,
because it does not miss any optimizations in practice.
As we saw in \figref{main-time},
the approach is significantly faster because it scales better.

Indeed across the board, we observe that the output quality
of \algname{} and \voqc{} is almost identical.
One interesting exception is the ``vqe'' family,
where our optimizer produces better circuits than \voqc{} in terms of quality.
For example,
in the case with $24$ qubits (``vqe''),
our optimizer improves the gate count
by $60.6\%$ and  \voqc{} improves the gate count by $54.9\%$.
In these cases, it seems that \voqc{} missed some optimizations and
we tested that running it again on the output circuit
bridges this gap in quality.
Our optimizer \algname{} benefits from the guarantee of local optimality.
On average \algname{} reduces the gate count by $49.7\%$ across
a range of circuit families and sizes.

Thus, in conclusion,
\textbf{our \algname{} optimizer produces circuits of identical
	quality and is 8x faster on average.}
These results demonstrate that the \algname{} is effective for
large circuits and optimizes them efficiently in a scalable fashion.
\fi

\begin{figure}[t]
    \centering
    \includegraphics[width=\columnwidth]{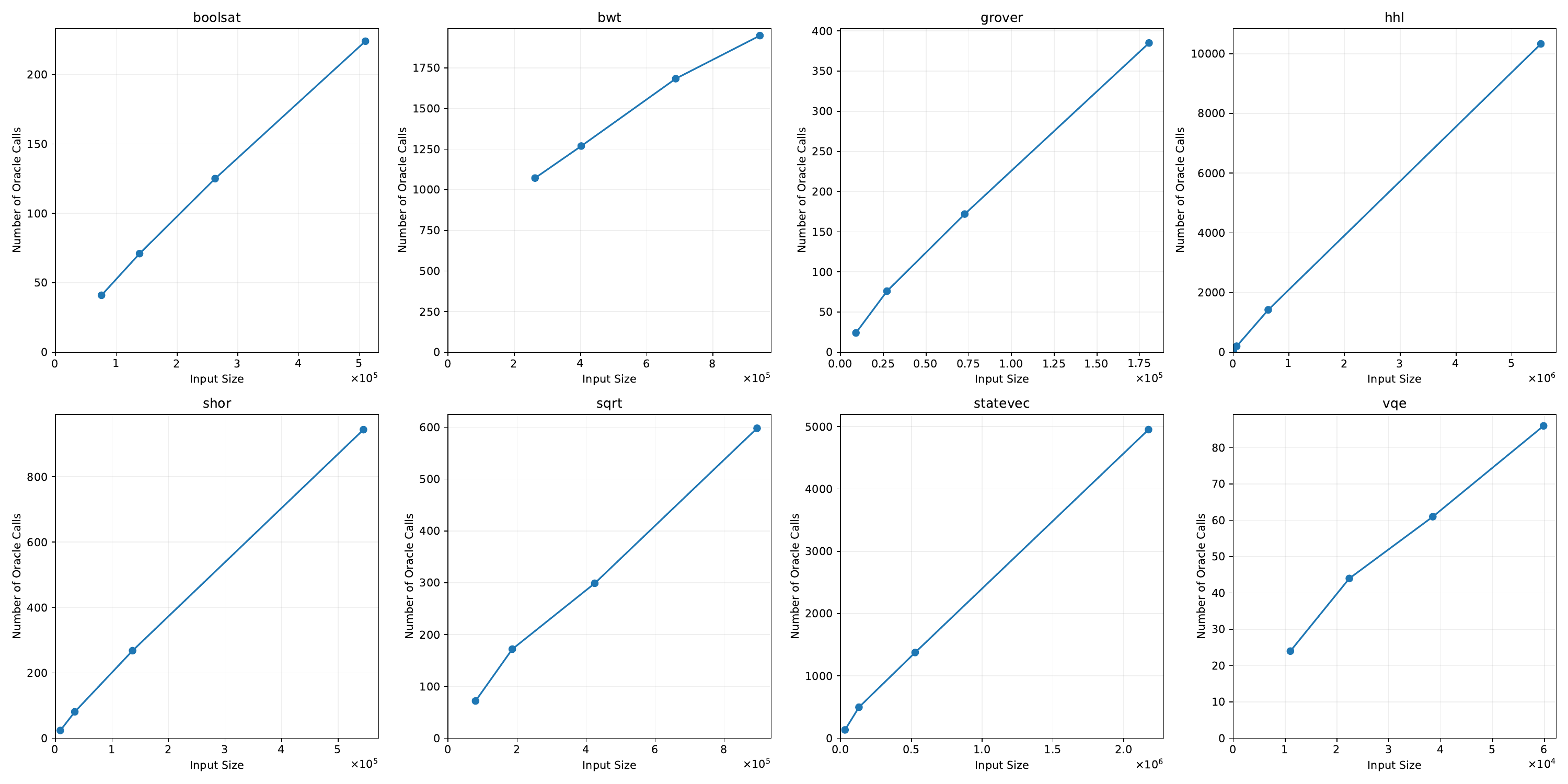}
    \vspace{-20pt}
    \caption{ The number of oracle calls versus input circuit size for
      all our circuits (Nam gate set).  The plots show that the number of
      calls scales linearly with the number of gates.  }
    \label{fig:plot-oracle-calls-all}
\end{figure}

\subsection{RQ IV: Impact of varying segment size $\Omega$}
\label{app:var-segment}
\begin{figure}[t]
	\centering
  \includegraphics[width=\textwidth]{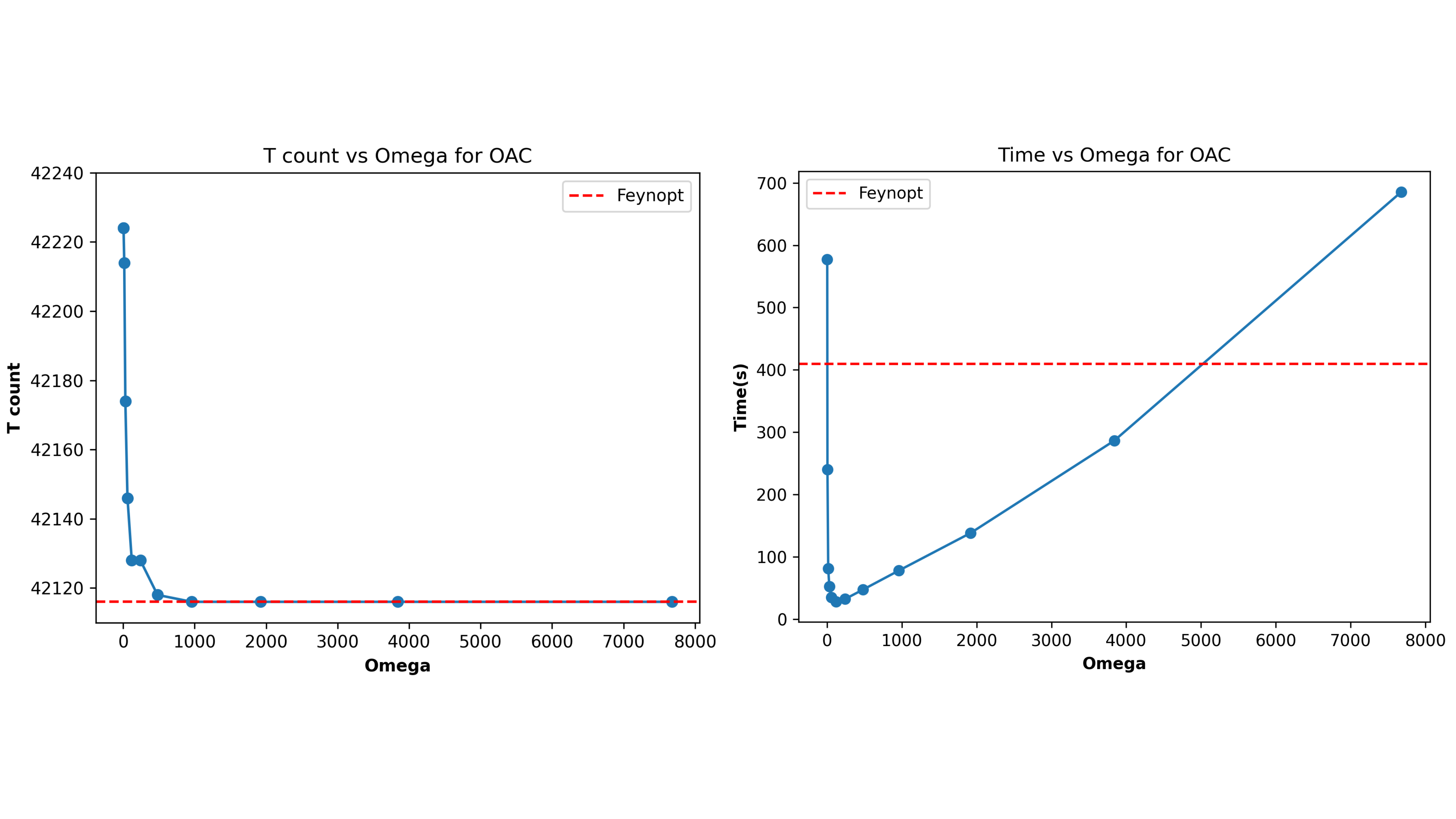}
    \vspace{-15mm}
	\caption{
    The figure plots the impact of the parameter $\Omega$ on
    the output T count (lower is better)
    and running time of $\algname{}$ optimizer on a hhl circuit with $7$ qubits.
    The dotted red lines in the plots denote the output T count and the running time of the oracle optimizer
    \feyntool{} on the whole circuit.
    For almost all values of $\Omega$, the output quality matches the oracle optimizer,
    demonstrating that local optimality is a robust quality criterion for T count optimization.
  }
    \label{fig:vary-segment}
\end{figure}
\begin{figure}[h]
  \centering
  \begin{tabular}{ccc}
      $\Omega$ & Output \textsf{T} count & Time (s) \\
      \midrule
      2 & 42224 & 577 \\
      5 & 42224 & 240.16 \\
      15 & 42214 & 81.10 \\
      30 & 42174 & 52.47 \\
      60 & 42146 & 34.58 \\
      120 & 42128 & 28.16 \\
      240 & 42128 & 32.08 \\
      480 & 42118 & 47.22 \\
      960 & 42116 & 77.85 \\
      1920 & 42116 & 138.16 \\
      3840 & 42116 & 286.31 \\
      7680 & 42116 & 685.69 \\
      \bottomrule
  \end{tabular}
  \caption{Results for \algname{} with \feyntool{} on the \texttt{hhl} circuit with $7$ qubits.}
  \label{fig:omega-clifft}
\end{figure}

We evaluate the impact of parameter $\Omega$
on the output quality and running time of our \algname{} algorithm.
We use $\algname{}$ on the Clifford+T gate set
and optimize for $\mathsf{T}$ count
with \feyntool{} as the oracle optimizer.
\figref{vary-segment} plots the output $\mathsf{T}$ count (number of $\mathsf{T}$ gates)
and the running time against $\Omega$.
The figure includes $\Omega$ values $2, 5, 15, 30, 60, 120 \dots 7680$;
we provide the raw data for the plot in \figref{omega-clifft}.
For the different values of $\Omega$,
we run our optimizer on the \texttt{hhl} circuit with $7$ qubits,
which initially contains 61246 $\mathsf{T}$ gates.
The red dotted line in the plots
show the output $\mathsf{T}$ count and running time of the
baseline, where the oracle \feyntool{} runs on the entire circuit.
The results show that for a wide range of $\Omega$ values,
our optimizer produces a circuit of similar quality to the baseline $\feyntool$
and typically does so in significantly less time
(with the exception of two values of $\Omega$ at the extremities: $2$ and $7680$).

\myparagraph{$\mathsf{T}$ count.}
The plot for $\mathsf{T}$ count shows that for small $\Omega$ (around $2$),
increasing it improves the output quality of \algname{}.
This is because the local optimality guarantee of $\algname{}$
strengthens with increasing $\Omega$,
as it guarantees larger segments to be optimal.
However, the benefits of increasing $\Omega$
become incremental around $\Omega = 60$,
where the $\mathsf{T}$ count stabilizes around 42140 (20 gates from optimal).
This demonstrates that local optimality is an effective quality criterion,
generating high-quality circuits
even with relatively small values of $\Omega$ (around $60$).

\myparagraph{Run time.}
One might expect the running time of \algname{}
to increase with segment size $\Omega$ because:
1) \algname{} uses the oracle on segments of size $2\Omega$,
thus each oracle call consumes more time
and 2) \algname{} gives a stronger quality
guarantee---larger the $\Omega$, stronger the guarantee given by local optimality.
Indeed, this intuition holds for most values in practice.
For $\Omega$ values ranging from $120$ to $7680$,
the running time increases with increasing $\Omega$.
In this range,
the increased cost of oracle call dominates the running time,
making it faster to partition circuits into smaller segments
and make many (smaller) calls to the oracle.

However, when $\Omega$ is very small,
we observe the opposite: increasing $\Omega$ reduces the running time.
\begin{wrapfigure}{r}{0.4\textwidth}
	\centering
	\includegraphics[width=0.39\textwidth]{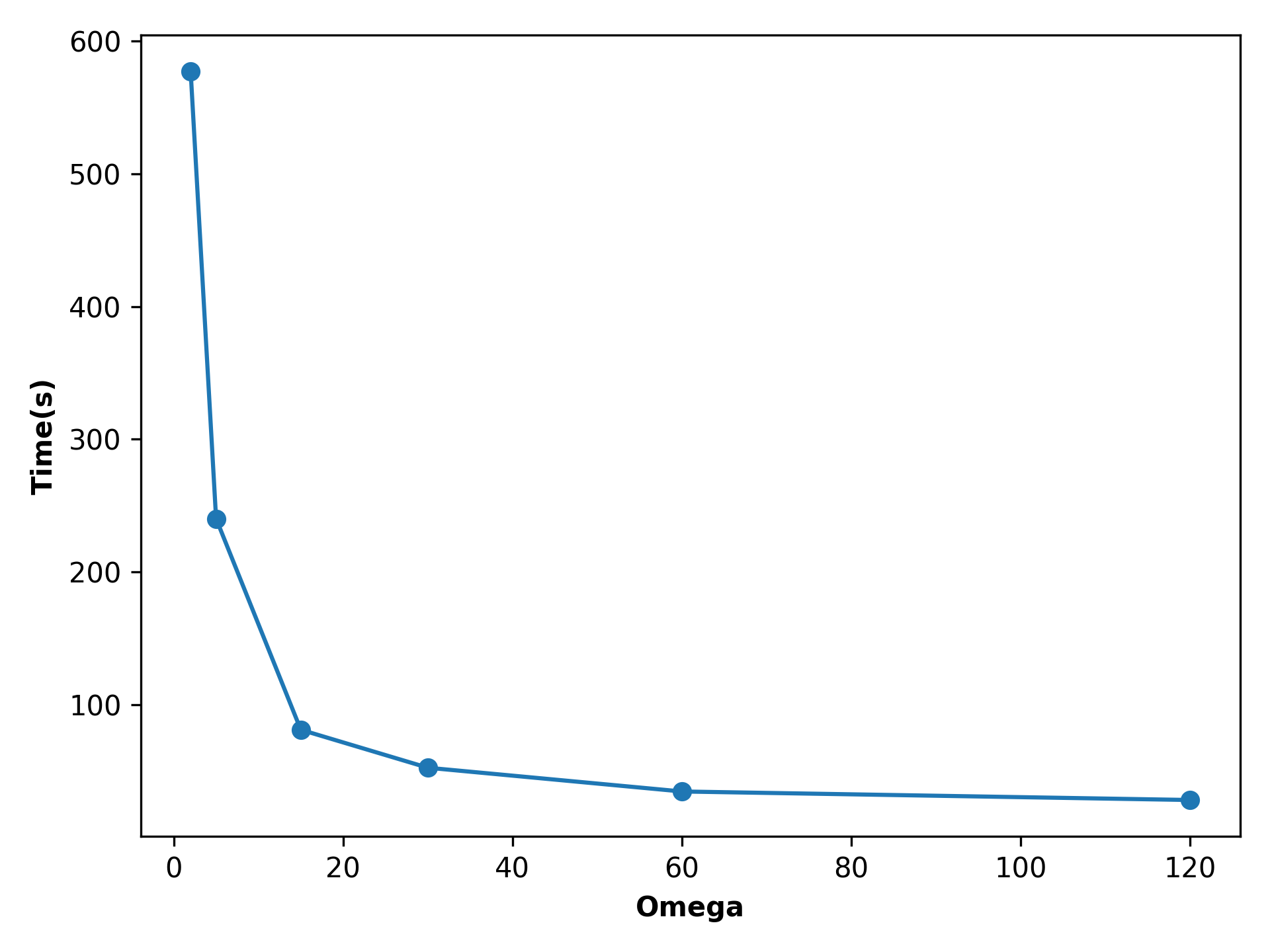}
	\caption{Zooming in: Time vs. Omega plot}
	\label{fig:zoom-plot}
\end{wrapfigure}
For reference,
we include \figref{zoom-plot},
which zooms in on the running time plot from \figref{vary-segment}
for initial values of $\Omega$, ranging from $2, 5, 15 \dots 120$.
For these smaller $\Omega$ values,
although each oracle call is fast,
the number of oracle calls dominates the time cost.
The \algname{} algorithm splits the circuit into a large number of small segments
and queries the oracle on each one, resulting in many calls to the optimizer.
When a circuit segment is small,
it is more efficient to directly call \feyntool{},
which optimizes it in one pass.
For this reason, $\Omega = 120$ is a good value for our optimizer $\algname{}$,
as it does not send large circuits to the oracle,
and also does not split the circuit into a large number of really small segments.

Overall,
we observe that for a wide range of $\Omega$ values,
our \algname{} outputs high quality circuits
and does so in a shorter running time than the baseline.
%


\section{Correctness of the Segment Optimality Algorithm}

\myparagraph{Segment Optimal Outputs.}
We prove that our \textsf{segopt} and \textsf{meld}
algorithms produce segment optimal circuits.
The challenge here is proving that
\textsf{meld} produces segment optimal outputs even though it may decide not to optimize one or both of the subcircuits based on the outcome of an optimization.

\begin{lem}[Restatement of \lemref{meld-is-optimal}, Segment optimality of meld]
  Given any additive \cost{} function and any segment optimal circuits
  $C_1$ and $C_2$, the result of \lstinline{meld}$(C_1, C_2)$ is a
  segment optimal circuit $C$ and $\costof{C} \leq \costof{C_1} + \costof{C_2}$.
\end{lem}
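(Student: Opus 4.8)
The plan is to prove both conclusions --- segment optimality of the output and the cost bound $\costof{C} \le \costof{C_1} + \costof{C_2}$ --- \emph{simultaneously} by strong induction on the total cost $\costof{C_1} + \costof{C_2} \in \mathbb{N}$, following the two branches of \textsf{meld}. Two auxiliary facts would be used throughout. First, any prefix $C[0:k]$ or suffix $C[k:d]$ of a segment-optimal circuit is itself segment optimal, since each of its $\le\!\Omega$-segments is literally a $\le\!\Omega$-segment of $C$. Second, since the oracle returns a minimum-cost equivalent circuit for inputs of length at most $2\Omega$, the circuit $W' = \oracle{}(W)$ is itself segment optimal and functionally equivalent to $W$.

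For the concatenation branch ($\costof{W'} = \costof{W}$), the output is $C = C_1 ; C_2$ and the cost bound is immediate by additivity: $\costof{C_1 ; C_2} = \costof{C_1} + \costof{C_2}$. For segment optimality I would classify each $\le\!\Omega$-segment of $C_1 ; C_2$ as lying entirely in $C_1$, entirely in $C_2$, or straddling the seam; the first two are optimal because $C_1, C_2$ are segment optimal. For a straddling segment $C_1[d_1 - i : d_1] + C_2[0 : j]$ with $i + j \le \Omega$ and $i, j \ge 1$, I would observe (as in the text) that $i, j < \Omega$, so it is a contiguous subsegment of $W = C_1[d_1 - \Omega : d_1] + C_2[0 : \Omega]$. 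The crux is then a key sublemma: if $\costof{\oracle{}(W)} = \costof{W}$, then every $\le\!\Omega$-subsegment $S$ of $W$ is optimal. I would prove this by contradiction: if $\costof{\oracle{}(S)} < \costof{S}$, write $W = A ; S ; B$, and additivity gives that $A ; \oracle{}(S) ; B$ is equivalent to $W$ with cost strictly below $\costof{W}$, contradicting the minimality of $\oracle{}(W)$.

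For the recursive branch ($\costof{W'} < \costof{W}$), I would invoke the induction hypothesis twice. The first call $\textsf{meld}(C_1[0 : d_1 - \Omega],\, W')$ is on two segment-optimal circuits (a prefix of $C_1$ and the oracle output), and its total cost is strictly smaller than the current one: using additivity together with $\costof{W'} < \costof{W} = \costof{C_1[d_1 - \Omega : d_1]} + \costof{C_2[0 : \Omega]}$, one obtains $\costof{C_1[0 : d_1 - \Omega]} + \costof{W'} < \costof{C_1} + \costof{C_2[0 : \Omega]} \le \costof{C_1} + \costof{C_2}$. The hypothesis then yields a segment-optimal $M$ with $\costof{M} \le \costof{C_1[0 : d_1 - \Omega]} + \costof{W'}$. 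For the second call $\textsf{meld}(M,\, C_2[\Omega : d_2])$ --- again on segment-optimal circuits ($M$ and a suffix of $C_2$) --- this bound on $\costof{M}$ shows its total cost is also strictly below $\costof{C_1} + \costof{C_2}$, so the hypothesis applies and delivers a segment-optimal $C$ with $\costof{C} \le \costof{M} + \costof{C_2[\Omega : d_2]}$. Chaining the two bounds and applying additivity collapses the right-hand side to $\costof{C_1} + \costof{C_2}$, establishing both conclusions.

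The main obstacle I anticipate is justifying the well-founded measure. Since \textsf{meld} recurses in a nested way, the total cost of the second recursive call depends on the intermediate result $M$ rather than directly on the inputs. The resolution is precisely that the cost bound and segment optimality must be proved \emph{together}: the induction hypothesis's cost bound on $M$ is exactly what certifies that the second call has strictly smaller total cost, making the induction on $\costof{C_1} + \costof{C_2}$ well-founded. The other delicate point is the key sublemma in the concatenation case, which genuinely relies on the oracle returning a cost-minimal equivalent circuit (not merely a cost-non-increasing one) together with the additivity of $\cost{}$; I would also use the minor observation that the oracle does not expand length beyond $2\Omega$, so every circuit on which it is invoked remains a valid oracle input.
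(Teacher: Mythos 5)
Your proof follows essentially the same route as the paper's: strong induction on the total input cost $\costof{C_1}+\costof{C_2}$, the same case split on whether the oracle improves the boundary super-segment $W$, the same classification of $\le\Omega$-segments in the concatenation case, and the same chaining of the two induction hypotheses in the recursive case (using the inherited cost bound on the intermediate result $M$ to certify that the second recursive call has strictly smaller measure). The one substantive difference is in the concatenation case: the paper simply asserts that because the oracle cannot improve $W$ it cannot improve any subsegment of $W$ (and, in its algorithm description, that $W'$ is segment optimal because it is an oracle output), whereas you isolate this as a sublemma and prove it by the cut-and-paste argument $W = A;S;B \mapsto A;\oracle{}(S);B$ together with additivity. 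This is a genuine strengthening of rigor, and it comes at a price you correctly identify: your sublemma needs the oracle to return a \emph{cost-minimal} equivalent circuit, which is strictly stronger than the paper's only formal assumption $\costof{\oracle{}(C)}\le\costof{C}$. Under the paper's literal assumption the asserted step can fail (a black-box, cost-non-increasing oracle may well improve a subsegment of $W$ while failing to improve $W$ itself), so making the minimality hypothesis explicit is exactly what is needed to close that step; in this sense your write-up fills an implicit gap in the paper's argument rather than deviating from it. Your decision to absorb the base case into the strong induction (the concatenation branch needs no induction hypothesis) instead of treating zero total cost separately, as the paper does, is an inessential stylistic difference.
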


\begin{proof}

  We show the lemma by induction on the total cost of the input, i.e.,
  $\costof {C_1} + \costof{C_2}$.
  In the base case, the input cost is zero, and therefore all segments of
  the input circuits have zero cost, due to additivity of the \cost{} function.
  The output of meld in this case is the concatenation $(C_1; C_2)$,
  because the oracle can not improve the boundary segment $W$.
  For the inductive step,
  we consider two cases, depending on whether the oracle optimizes the boundary segment $W$.

  \paragraph{Case I}
  When the oracle finds no improvement in $W$, the output is the concatenation $C = (C_1; C_2)$.
  %
  %
  To prove that the output $C$ is segment optimal,
  we show that every $\Omega$-segment of the output is optimal relative to
  the oracle.
  For any such $\Omega$-segment $X$, there are two cases.
  \begin{itemize}
  \item If $X$ is a segment of either $C_1$ or $C_2$, then $X$ is optimal relative to the oracle
  because the input circuits are segment optimal and $X$ is of length $\Omega$.

  \item If $X$ is not a segment of either $C_1$ or $C_2$, then $X$ is a sub-segment of $W$,
  because the segment $W$ has length $2\Omega$ and contains all possible $\Omega$-segments at the boundary of $C_1$
  and $C_2$. Because the oracle could not improve segment $W$, it can not improve its sub-segment $X$.
  \end{itemize}

  Thus, all $\Omega$-segments of the output are optimal relative to the oracle,
  and the output is segment optimal.

  \paragraph{Case II}
  If the oracle improves the boundary segment $W$,
  then the cost of the optimized segment $W'$ is less than cost of segment $W$, i.e.,
  $\costof{W'} < \costof{W}$.
  In this case, the function first melds the segment $C_1[0 : d_1 - \Omega]$ and the segment $W'$,
  by making a recursive call.
  We apply induction on the costs of the inputs to prove that the output circuit $M$ is segment optimal.
  Formally, we have the cost of inputs
  $\costof{C_1[0 : d_1 - \Omega]} + \costof{W'} < \costof{C_1[0 : d_1 - \Omega]} + \costof{W} \leq \costof{C_1} + \costof{C_2}$,
  because $W'$ has a lower cost than $W$.
  Therefore, by induction, the output $M$ is segment optimal
  and $\costof{M} \leq \costof{C_1[0 : d_1 - \Omega]} + \costof{W'}$.
  %

  For the second recursive call $\mathsf{meld}(M, C_2[\Omega : d_2])$,
  we can apply induction because $\costof{M} + \costof{C_2[\Omega : d_2]} < \costof{C_1} + \costof{C_2}$.
  This inequality follows from the cost bound on circuit $M$ above and using that
  $W'$ has a lower cost than $W$.
  Specifically,
  $\costof{M} + \costof{C_2[\Omega : d_2]} \leq \costof{C_1[0 : d_1 - \Omega]} + \costof{W'} + \costof{C_2[\Omega : d_2]}$,
   which is strictly less than
    $\costof{C_1[0 : d_1 - \Omega]} + \costof{W} + \costof{C_2[\Omega : d_2]} = \costof{C_1} + \costof{C_2}$.
  Therefore, by induction on the second recursive call,
  we get that meld returns a segment optimal circuit
  that is bounded in cost by $\costof{C_1} + \costof{C_2}$.
  Note that for both recursive calls, the induction relies on the fact that $W'$ has a lower cost than $W$,
  showing that the algorithm works because the recursion is tied to cost improvement.
\end{proof}

Based on \lemref{meld-is-optimal}, we can prove the following theorem.

\begin{theorem}[Restatement of \thmref{opt}, segment optimality algorithm] 
  For any circuit $C$, the function $\mathsf{segopt}(C)$ outputs
  a segment optimal circuit.
  \end{theorem}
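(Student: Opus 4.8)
The plan is to prove the statement by strong induction on the length (number of layers) of the input circuit $C$, mirroring the recursive structure of $\mathsf{segopt}$. The recursive case will be almost immediate given $\lemref{meld-is-optimal}$, so the real content lives in the base case and in checking that the induction is well founded.

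For the base case, suppose $\mathsf{length}(C) \le 2\Omega$. Here $\mathsf{segopt}$ simply calls the oracle on all of $C$, obtaining $C' = \oracle{}(C)$, and returns $C'$ when $\costof{C'} < \costof{C}$ and $C$ otherwise. I would argue that in either branch the returned circuit is segment optimal. The key fact I would use is that the oracle's output is a cost-optimal equivalent circuit, so no sub-segment of it can be improved: if some $\Omega$-segment $X$ could be rewritten to $\oracle{}(X)$ with strictly smaller cost, then by additivity of $\cost{}$ replacing $X$ inside the oracle's output would yield an equivalent circuit of strictly smaller total cost, contradicting optimality. Hence $\oracle{}(C)$ is segment optimal. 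In the second branch, the compatibility assumption $\costof{\oracle{}(C)} \le \costof{C}$ forces $\costof{\oracle{}(C)} = \costof{C}$, so $C$ is itself cost-optimal and the same argument shows $C$ is segment optimal.

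For the inductive step, suppose $\mathsf{length}(C) = d > 2\Omega$. Then $\mathsf{segopt}$ forms $C_1 = C[0:m]$ and $C_2 = C[m:d]$ with $m = \lfloor d/2 \rfloor$, recurses on each, and returns $\mathsf{meld}(\mathsf{segopt}(C_1), \mathsf{segopt}(C_2))$. First I would observe that since $d > 2\Omega \ge 2$ we have $1 \le m < d$ and $1 \le d - m < d$, so both $C_1$ and $C_2$ are strictly shorter than $C$ and the induction hypothesis applies: $\mathsf{segopt}(C_1)$ and $\mathsf{segopt}(C_2)$ are segment optimal. Then $\lemref{meld-is-optimal}$, applied to these two segment-optimal circuits, immediately yields that $\mathsf{meld}(\mathsf{segopt}(C_1), \mathsf{segopt}(C_2))$ is segment optimal, which is exactly the output of $\mathsf{segopt}(C)$.

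The hard part is really the base case, and specifically making precise the step that the oracle's output has no improvable sub-segment. This is where I must lean on the assumption that the oracle returns a cost-optimal equivalent circuit together with additivity of $\cost{}$, exactly as $\lemref{meld-is-optimal}$ does when it asserts that $W'$ is segment optimal because it was returned by the oracle. The inductive step, by contrast, is mechanical once well-foundedness of the recursion is checked, since all of its difficulty has been front-loaded into $\lemref{meld-is-optimal}$.
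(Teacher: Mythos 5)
Your proof takes essentially the same route as the paper's: induction on $\mathsf{length}(C)$, with the base case handled by the oracle call and the inductive step dispatched by \lemref{meld-is-optimal}. The only difference is that you explicitly justify (via cost-optimality of oracle outputs plus additivity) why both branches of the base case yield segment-optimal circuits, a step the paper simply asserts; your justification is sound and matches the assumption the paper itself relies on implicitly when it claims $W'$ is segment optimal ``because it was returned by the oracle.''
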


\begin{proof}
	We prove the theorem by induction on $d = \textsf{length}(C)$.
	In the base case, $d \leq 2\Omega$ and the circuit is fed to the oracle,
	thereby guaranteeing segment optimality.
	For the inductive case, the algorithm splits the circuit into
	two smaller circuits $C_1$ and $C_2$, and
	recursively optimizes them
	to obtain segment optimal outputs $C'_1$ and $C'_2$.
	The result $C = \mathsf{meld}(C_1', C_2')$ is then
	segment optimal, by \lemref{meld-is-optimal}.
	%
	%
\end{proof}

\section{Efficiency of Segment Optimality Algorithm }
\label{sec:opt-analysis}

Having shown that the functions \textsf{segopt} and \textsf{meld}
produce segment optimal outputs,
we now analyze the runtime efficiency of these functions.
The runtime efficiency of both functions is data-dependent and varies with the number of optimizations found throughout the circuit.
To account for this,
we charge the running time to the cost improvement between the input and the output,
represented by $\Delta$.
We prove the following theorem.

\begin{theorem}[Restatement of \thmref{cost}, Efficiency of segment optimization]
  The function $\mathsf{segopt}(C)$ calls the oracle at most
  $\mathsf{length}(C) + 2\Delta$ times on segments of length at most $2\Omega$,
  where $\Delta$ is the improvement in the cost of the output.
  \end{theorem}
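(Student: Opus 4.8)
The plan is to prove the bound by a structural induction on the recursion tree of $\mathsf{segopt}$, charging each oracle call either to a layer of the circuit or to a unit of cost improvement. First I would split the total count into two parts: the calls made at the base cases of $\mathsf{segopt}$ (one per leaf of the recursion) and the calls made inside the various invocations of $\mathsf{meld}$. For the base cases, observe that $\mathsf{segopt}$ subdivides a segment only when its length exceeds $2\Omega$, so by the choice $m = \lfloor d/2\rfloor$ every non-root leaf receives length at least $\Omega$; since the leaves partition the $\mathsf{length}(C)$ layers of $C$, there are at most $\mathsf{length}(C)/\Omega$ of them. This gives a clean bound on the ``input-size'' portion of the count, and the remaining work is to show that the $\mathsf{meld}$ calls contribute only $O(\Delta)$.

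The crux is a separate efficiency lemma for $\mathsf{meld}$, which I would prove by induction on the total input cost $\costof{C_1}+\costof{C_2}$ — the same measure used in the proof of \lemref{meld-is-optimal}. Each invocation of $\mathsf{meld}$ performs exactly one oracle call on the super-segment $W$; it then either returns a concatenation (a ``leaf'' meld, making no further calls) or, precisely when $\costof{W'} < \costof{W}$, spawns two recursive melds (an ``internal'' meld). Because $\mathsf{meld}$ therefore forms a binary tree with $k$ internal nodes and $k+1$ leaves, it makes exactly $2k+1$ oracle calls. The key quantitative step is that every internal meld strictly decreases the cost, and since $\cost$ is $\mathbb{N}$-valued this decrease is at least one; unwinding the recursion with additivity (so that concatenation, and the split of $W$ into $C_1[d_1-\Omega:d_1]$ and $C_2[0:\Omega]$, preserve total cost) shows that the improvement $\costof{C_1}+\costof{C_2}-\costof{\mathsf{meld}(C_1,C_2)}$ is at least $k$. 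Hence a single $\mathsf{meld}$ tree makes at most $2\cdot(\text{its improvement})+1$ oracle calls.

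To assemble the global bound, I would combine these two ingredients by induction on $\mathsf{segopt}$. Writing $\mathrm{imp}(C) = \costof{C} - \costof{\mathsf{segopt}(C)}$, additivity yields the decomposition $\mathrm{imp}(C) = \mathrm{imp}(C_1) + \mathrm{imp}(C_2) + (\text{meld improvement})$, so the improvements of all the meld trees telescope to at most the total improvement $\Delta$. Summing the meld bound over the internal nodes and the base-case bound over the leaves then gives a total of at most roughly $2L + 2\Delta$ calls, where $L$ is the leaf count; using $L \le \mathsf{length}(C)/\Omega$ collapses the leaf term into $\mathsf{length}(C)$. The main obstacle I anticipate is precisely this telescoping/amortization argument for $\mathsf{meld}$: the number of recursive melds is genuinely data-dependent, so the only way to control it is to pair each cost-reducing recursive call with a distinct unit of the final cost improvement, which relies essentially on integrality of the cost and on additivity to make the per-step decreases accumulate without interference. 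A secondary technical point is the bookkeeping of the small additive constants (the ``$+1$'' per meld tree and the root leaf), which I expect to absorb using the slack that every non-root leaf has length at least $\Omega$.
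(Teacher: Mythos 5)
Your proposal is correct and takes essentially the same approach as the paper: the paper's central lemma likewise bounds each $\mathsf{meld}$ at $1 + 2\times(\text{its cost improvement})$ oracle calls, relying on integrality and additivity of $\cost$ exactly as you do, and the theorem follows by charging base-case calls to circuit length and meld calls to the telescoped improvement $\Delta$. The only cosmetic difference is that the paper proves the meld bound by induction on the total input cost $\costof{C_1}+\costof{C_2}$, whereas you obtain it by globally counting the binary recursion tree ($2k+1$ calls versus improvement at least $k$), which is an equivalent amortization argument.
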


The theorem shows that our segment optimality algorithm is productive in its use of the oracle.
Consider the terms, $\mathsf{length}(C)$ and $2\Delta$, in the bound on the number of oracle calls:
The first term, $\mathsf{length}(C)$, is for checking segment
optimality---even if the input circuit is already optimal,
the algorithm needs to call the oracle on each segment and confirm it.
The second term $2\Delta$,
shows that all further calls result from optimizations,
with each optimization requiring up to two oracle calls in the worst case.
This shows that our algorithm (alongside meld)
carefully tracks the segments on which the oracle needs to be called
and avoids unnecessary calls.

The theorem also highlights one of the key features of our $\mathsf{segopt}$ function:
it only uses the oracle on small circuit segments of length $2\Omega$.
Suppose $q$ is the number of qubits in the circuit.
Then the $\mathsf{segopt}$ function only calls the oracle on
manageable circuits of size less than or equal to $q * 2\Omega$,
which is significantly smaller than circuit size.
This has performance impacts because, for many oracles,
it is faster to invoke the oracle many times on small circuits
rather than invoking the oracle a single time on the full circuit (see \secref{eval}).

Since we bound number of oracle calls in terms of the cost improvement $\Delta$,
an interesting question is how large can $\Delta$ be?
When optimizing for gate-counting metrics
such as $\mathsf{T}$ count (number of $\mathsf{T}$ gates), $\mathsf{CNOT}$ count (number of $\mathsf{CNOT}$ gates),
gate count (total number of gates),
the cost improvement is trivially bounded by the circuit size,
i.e., $\Delta \leq \sizeof{C}$.
This observation shows that our \textsf{segopt}
function only makes a linear number of calls (with depth and gate count)
to the oracle.

\begin{corollary}[Restatement of \corref{linear-calls}, Linear calls to the oracle]
  When optimizing for gate count, our $\mathsf{segopt}(C)$ makes a linear,
  $O(\mathsf{length}(C) + \sizeof{C})$, number of calls to the oracle.
\end{corollary}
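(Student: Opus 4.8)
The plan is to derive this corollary directly from \thmref{cost}, which bounds the number of oracle calls made by $\mathsf{segopt}(C)$ by $\mathsf{length}(C) + 2\Delta$, where $\Delta$ is the cost improvement between the input and the output. The only additional ingredient needed is a bound on $\Delta$ that is specific to the gate-count cost function, so the corollary is essentially a specialization of \thmref{cost}.

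First I would instantiate the cost function to be gate count, so that $\costof{C} = \sizeof{C}$ for every circuit. Gate count is additive in the sense required by the earlier definition---concatenating two circuits adds their gate counts, and merging two disjoint-qubit layers adds their gate counts---so \thmref{cost} applies and guarantees at most $\mathsf{length}(C) + 2\Delta$ oracle calls, each on a segment of length at most $2\Omega$.

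Next I would bound the improvement $\Delta$. By definition $\Delta = \costof{C} - \costof{C^{\text{out}}}$, where $C^{\text{out}}$ is the output of $\mathsf{segopt}(C)$. Since the cost function ranges over $\mathbb{N}$, we have $\costof{C^{\text{out}}} \geq 0$, and therefore $\Delta \leq \costof{C} = \sizeof{C}$. Substituting this into the bound from \thmref{cost} gives at most $\mathsf{length}(C) + 2\sizeof{C}$ oracle calls, which is $O(\mathsf{length}(C) + \sizeof{C})$ as claimed.

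There is no substantial obstacle here; the one point worth emphasizing is \emph{why} the improvement is bounded by circuit size. This holds for any gate-counting metric (gate count, $\mathsf{T}$ count, $\mathsf{CNOT}$ count) precisely because such metrics are nonnegative and are themselves bounded above by the number of gates, so the cost can never be reduced by more than the starting cost. For a general additive cost with large per-gate weights the same argument still bounds $\Delta$ by $\costof{C}$, but only the gate-count instantiation yields a bound phrased directly in terms of $\sizeof{C}$, giving the stated linear dependence on the circuit size.
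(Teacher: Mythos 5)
Your proposal is correct and follows essentially the same route as the paper: both derive the corollary by specializing Theorem~\ref{thm:cost} to the gate-count cost function and observing that the improvement satisfies $\Delta \leq \costof{C} = \sizeof{C}$ because the (natural-number-valued) cost of the output is nonnegative. Your explicit justification of this bound via nonnegativity simply spells out what the paper calls the "trivially bounded" observation, so there is no substantive difference.
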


We experimentally validate this corollary in \secref{eval-calls},
where we study the number of oracle calls made by our algorithm
for many circuits.
The crux of the proofs of \thmref{cost} and the corollary 
is bounding the number of calls in the \textsf{meld} function,
because it can propagate optimizations through the circuit.
We show the proof below.
%
\begin{lem}[Restatement of \lemref{meld-num-oracles}]
	Computing $C =$ \lstinline{meld}$(C_1, C_2)$
	makes at most $1 + 2\Delta$ calls to the oracle,
	where $\Delta$ is the improvement in cost,
	i.e., $\Delta = \left(\costof {C_1} + \costof{C_2} - \costof{C}\right)$.
\end{lem}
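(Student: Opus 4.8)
The plan is to prove the bound by strong induction on the total cost of the inputs, $\costof{C_1} + \costof{C_2}$, mirroring the structure used in the proof of \lemref{meld-is-optimal}. The key accounting principle is that every oracle call, except for a single ``probe'' call per meld invocation, must be charged against a strict decrease in cost; since cost is a natural number and bounded below by zero, the total improvement $\Delta$ controls the number of such chargeable calls.

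First I would observe that each invocation of \lstinline{meld} makes exactly one oracle call on the boundary segment $W$ (\lineref{guard2start}) before deciding how to proceed. In the base case, where $\costof{C_1} + \costof{C_2} = 0$, additivity of \cost{} forces every segment to have zero cost, so the oracle cannot improve $W$; meld returns $(C_1 ; C_2)$ after its single probe call, giving $1 = 1 + 2\Delta$ with $\Delta = 0$. Likewise, in the general non-improving case ($\costof{W'} = \costof{W}$), meld returns the concatenation after exactly one call, and since $\costof{(C_1;C_2)} = \costof{C_1} + \costof{C_2}$ by additivity, we again have $\Delta = 0$ and the bound $1 + 2\Delta = 1$ holds.

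The interesting case is when the oracle improves the seam, so $\costof{W'} < \costof{W}$. Here meld performs its one probe call and then makes two recursive calls: $M = \lstinline{meld}(C_1[0:d_1-\Omega], W')$ and $\lstinline{meld}(M, C_2[\Omega:d_2])$. I would apply the inductive hypothesis to each, using the strict cost inequalities already established in the proof of \lemref{meld-is-optimal} to justify that both recursive inputs have strictly smaller total cost than $\costof{C_1}+\costof{C_2}$. Writing $\Delta_1$ and $\Delta_2$ for the improvements realized by the two recursive melds, induction gives at most $(1 + 2\Delta_1) + (1 + 2\Delta_2)$ calls for the recursion, plus the single probe call, for a total of $3 + 2\Delta_1 + 2\Delta_2$. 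The crux is then a cost-bookkeeping argument: I must show $3 + 2\Delta_1 + 2\Delta_2 \le 1 + 2\Delta$, i.e. $\Delta \ge 1 + \Delta_1 + \Delta_2$, where $\Delta = \costof{C_1} + \costof{C_2} - \costof{C}$ is the overall improvement.

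The main obstacle is this final telescoping inequality, and I expect it to turn on the fact that the seam optimization itself contributes a strict improvement of at least one unit that is \emph{not} double-counted by the recursive $\Delta_i$'s. Concretely, the total input cost of the first recursive call is $\costof{C_1[0:d_1-\Omega]} + \costof{W'}$, and using additivity together with $\costof{W'} \le \costof{W} - 1$ (since costs are integers and the improvement is strict), the ``missing'' cost $\costof{W} - \costof{W'} \ge 1$ gets absorbed into $\Delta$ beyond what $\Delta_1$ and $\Delta_2$ account for. I would make this precise by expressing $\costof{C}$ in terms of the final meld's output cost, unfolding the definitions $\Delta_1 = \costof{C_1[0:d_1-\Omega]} + \costof{W'} - \costof{M}$ and $\Delta_2 = \costof{M} + \costof{C_2[\Omega:d_2]} - \costof{C}$, and summing so that $\costof{M}$ cancels. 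Adding the seam gain $\costof{W} - \costof{W'} \ge 1$ and reassembling $C_1$ and $C_2$ via additivity should yield exactly $\Delta = \Delta_1 + \Delta_2 + (\costof{W}-\costof{W'}) \ge \Delta_1 + \Delta_2 + 1$, closing the induction. The delicate point to verify carefully is that the additive splitting $\costof{C_1} = \costof{C_1[0:d_1-\Omega]} + \costof{C_1[d_1-\Omega:d_1]}$ and the corresponding split of $W$ into $C_1$'s tail and $C_2$'s head are legitimate applications of additivity across the concatenation boundary.
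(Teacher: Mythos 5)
Your proposal is correct and follows essentially the same route as the paper's proof: induction on the total input cost $\costof{C_1} + \costof{C_2}$, a case split on whether the oracle improves the seam segment $W$, and in the improving case charging the single probe call plus the two inductive bounds against the telescoped identity $\Delta = \Delta_1 + \Delta_2 + \left(\costof{W} - \costof{W'}\right) \geq \Delta_1 + \Delta_2 + 1$, using additivity and $\costof{W'} \leq \costof{W} - 1$. The cost bookkeeping you flag as the crux is exactly the paper's closing step, and your additivity decompositions are the legitimate ones the paper invokes.
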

\begin{proof}
	We prove this by induction on the input cost $\costof {C_1} + \costof{C_2}$.
	The base case is straightforward: when the input cost is zero,
	the meld function only makes the one call and returns.
	For the inductive step,
	we consider two cases depending on whether the oracle improves the boundary
	segment $W$.
	
	\paragraph{Case I} If the oracle finds no improvement in the segment $W$, then
	there is exactly one call to the oracle.
	Thus we have $1 \leq 1 + 2\left(\costof {C_1} + \costof{C_2} - \costof{C}\right)$.
	
	\paragraph{Case II}
	When the oracle finds an improvement in the segment $W$,
	it returns $W'$ such that $\costof{W'} \leq \costof W - 1$.
	(There is a difference of at least 1 because $\cost{}(-) \in \mathbb{N}$.)
	In this case, the meld function recurs twice,
	first to compute $M = \mathsf{meld}(C_1[0 : d_1 - \Omega], W')$,
	and then to compute the output $C = \mathsf{meld}(M, C_2[\Omega:d_2])$.
	The total number of calls to the oracle can be decomposed as follows:
	
	\begin{itemize}
		\item 1 call to the oracle for the segment $W$.
		\item Inductively,
		at most $1 + 2\left(\costof {C_1[0 : d_1 - \Omega]} + \costof{W'} - \costof{M}\right)$ calls to the oracle for the first meld.
		\item Inductively, at most
		$1 + 2\left(\costof M + \costof{C_2[\Omega:d_2]} - \costof{C}\right)$ calls to the oracle in the second recursive meld.
	\end{itemize}
	Adding these up yields at most
	$1 + 2\left(\costof {C_1} + \costof{C_2} - \costof{C}\right)$ calls to the oracle, as desired.
	(We use here the facts
	$\costof{W'} \leq \costof W - 1$ and
	$\costof {C_1[0 : d_1 - \Omega]}  + \costof W + \costof {C_2[\Omega:d_2]} = \costof {C_1} + \costof{C_2}$.)
\end{proof}






\begin{lem}[Bounded calls to oracle in meld]
  \label{lem:meld-num-oracles}
Computing $C =$ \lstinline{meld}$(C_1, C_2)$
makes at most $1 + 2\Delta$ calls to the oracle,
where $\Delta$ is the improvement in cost,
i.e., $\Delta = \left(\costof {C_1} + \costof{C_2} - \costof{C}\right)$.
\end{lem}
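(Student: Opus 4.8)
The plan is to induct on the total input cost $\costof{C_1} + \costof{C_2}$, which is a natural number and which I will show strictly decreases on each recursive call. The base case is when this cost is zero: by additivity of the cost function the boundary segment $W = C_1[d_1 - \Omega : d_1] + C_2[0 : \Omega]$ then also has cost zero, so the oracle (which only decreases cost) cannot improve it, $\mathsf{meld}$ returns the concatenation $C_1; C_2$ after exactly one oracle call, and the bound $1 \le 1 + 2\Delta$ holds trivially since $\Delta \ge 0$.

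For the inductive step I split on whether the oracle improves $W$. If it does not, $\mathsf{meld}$ again returns the concatenation after a single oracle call and $1 \le 1 + 2\Delta$ follows from $\Delta \ge 0$. The interesting case is when the oracle returns $W'$ with $\costof{W'} < \costof{W}$; because the cost function is $\mathbb{N}$-valued, this yields the sharper fact $\costof{W'} \le \costof{W} - 1$, and this single guaranteed unit of improvement is what ultimately pays for the extra recursive oracle calls.

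In that case $\mathsf{meld}$ performs one oracle call on $W$ and then recurses twice: first to form $M = \mathsf{meld}(C_1[0 : d_1 - \Omega], W')$, and then to return $C = \mathsf{meld}(M, C_2[\Omega : d_2])$. Before invoking the inductive hypothesis I must check that both recursive calls have strictly smaller input cost. For the first call this follows from additivity, $\costof{C_1[0 : d_1 - \Omega]} + \costof{W} \le \costof{C_1} + \costof{C_2}$, together with $\costof{W'} < \costof{W}$. For the second call I will need the cost bound $\costof{M} \le \costof{C_1[0 : d_1 - \Omega]} + \costof{W'}$ supplied by \lemref{meld-is-optimal}, which combined again with $\costof{W'} < \costof{W}$ and additivity gives $\costof{M} + \costof{C_2[\Omega : d_2]} < \costof{C_1} + \costof{C_2}$. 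With the hypothesis applicable, I bound the two recursive calls by $1 + 2\Delta_1$ and $1 + 2\Delta_2$, where $\Delta_1$ and $\Delta_2$ are their respective cost improvements, for a running total of $3 + 2(\Delta_1 + \Delta_2)$ oracle calls.

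It then remains to telescope the improvements. Writing $\Delta_1 + \Delta_2 = \costof{C_1[0 : d_1 - \Omega]} + \costof{W'} + \costof{C_2[\Omega : d_2]} - \costof{C}$ and using additivity in the form $\costof{C_1} + \costof{C_2} = \costof{C_1[0 : d_1 - \Omega]} + \costof{W} + \costof{C_2[\Omega : d_2]}$, I obtain $\Delta - (\Delta_1 + \Delta_2) = \costof{W} - \costof{W'} \ge 1$, whence $3 + 2(\Delta_1 + \Delta_2) \le 1 + 2\Delta$, exactly the claimed bound. I expect the main obstacle to be the bookkeeping in this final step: the argument closes only because the one guaranteed decrease $\costof{W} - \costof{W'} \ge 1$ absorbs the two ``$+1$'' startup costs of the recursive calls, so both the integrality of the cost function and the cost bound from \lemref{meld-is-optimal} are essential and must be invoked precisely.
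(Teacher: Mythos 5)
Your proof is correct and follows essentially the same route as the paper's: induction on the total input cost $\costof{C_1}+\costof{C_2}$, a case split on whether the oracle improves $W$, the integrality fact $\costof{W'}\leq\costof{W}-1$, and the same telescoping of $\Delta_1+\Delta_2$ against $\Delta$ via additivity. If anything, you are slightly more careful than the paper in explicitly verifying (via the cost bound from \lemref{meld-is-optimal}) that both recursive calls have strictly smaller input cost, a point the paper's proof leaves implicit.
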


The \algname{} algorithm
uses the segment optimality guarantee, given by \textsf{segopt},
to produce a locally optimal circuit.
In \figref{lopt-code}, the function \algname{} takes the input circuit $C$
and computes the circuit $C' = \mathsf{segopt}(\mathsf{compact}(C))$.
It then checks if $C'$ differs from the input $C$ and if so,
it recurses on $C'$.
The function repeats this until convergence, i.e.,
until the circuit does not change.
In the process, it
computes a sequence of segment optimal circuits $\{C_i\}_{1 \leq i \leq \kappa}$,
where $\kappa$ is the number of rounds until convergence:
\[C \to C_1 \to C_2 \to \ldots \to C_\kappa \quad\text{where}\quad
\begin{array}{c}
\costof{C_i} > \costof{C_{i+1}} \\
\windowopt\Omega{C_i} \\
\locallyopt{\Omega}{C_\kappa}
\end{array}
\]
Each intermediate circuit $C_i$ is segment optimal
and gets compacted before optimization in the next round.
Given that \algname{} continues until convergence,
the final circuit, $C_\kappa$, is locally optimal.


Using \thmref{cost},
we observe that the number of oracle calls in each round is bounded by
$\mathsf{length}(C_i) + 2\Delta_i$,
where $\Delta_i$ is the improvement in cost (i.e., $\Delta_i = \costof{C_i} - \costof{C_{i+1}}$).
Thus, the total number of oracle calls performed by \algname{} can be bounded by
$O(\Delta + \mathsf{length}(C) + \sum_{1 \leq i \leq \kappa} \mathsf{length}(C_i))$
where $\Delta = \costof{C} - \costof{C_\kappa}$ is the overall improvement in cost between the input and output.

\begin{theorem}
\label{thm:local-opt}
Given an additive cost function,
the output of $\algname{}(C)$ is locally optimal
and requires $O(\Delta + \mathsf{length}(C) + \sum_{1 \leq i \leq \kappa} \mathsf{length}(C_i))$ oracle calls,
where $C_i$ denotes the circuit after $i$ rounds,
$\kappa$ is the number of rounds,
and $\Delta$ is the end-to-end cost improvement.
\end{theorem}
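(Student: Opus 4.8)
The plan is to prove the two claims—local optimality of the output and the oracle-call bound—separately. Write $C_0 = C$ and let $C_i = \mathsf{segopt}(\mathsf{compact}(C_{i-1}))$ be the circuit produced after round $i$, so that \algname{} walks along $C_0 \to C_1 \to \cdots \to C_\kappa$ and halts at the first fixed point, i.e. where $\mathsf{segopt}(\mathsf{compact}(C_\kappa)) = C_\kappa$. I would first argue such a $\kappa$ exists (termination). Reusing the potential $\Phi(C) = (\costof{C}, \mathsf{IndexSum}(C))$ from \thmref{convergence}, each round either leaves the circuit unchanged (convergence) or strictly decreases $\Phi$ lexicographically: if $\mathsf{segopt}$ reduces the cost then the first component drops, and otherwise $\mathsf{compact}$ must have shifted some gate left, strictly lowering $\mathsf{IndexSum}$ while preserving cost. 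Well-foundedness of $\mathbb{N} \times \mathbb{N}$ then gives termination, so $C_\kappa$ is well defined.

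For correctness I must show $\locallyopt\Omega{C_\kappa}$, i.e. that $C_\kappa$ is both segment optimal and compact. Segment optimality is immediate: $C_\kappa = \mathsf{segopt}(\mathsf{compact}(C_\kappa))$ is an output of $\mathsf{segopt}$, which always returns a segment optimal circuit by \thmref{opt}. The delicate part is compactness, which is what links the syntactic fixed-point test $C' = C$ to a genuine structural guarantee. The key lemma I would establish is that \emph{$\mathsf{segopt}$ leaves its input unchanged whenever it fails to reduce the cost}: if $\costof{\mathsf{segopt}(D)} = \costof{D}$ then $\mathsf{segopt}(D) = D$. This follows by induction on $\mathsf{length}(D)$, tracking the guards of $\mathsf{segopt}$ and $\mathsf{meld}$—in the base case the oracle's output is committed only when it strictly lowers the cost, and in $\mathsf{meld}$ the recursion fires only when $\costof{W'} < \costof{W}$, which (using additivity to write $\costof{C_1}+\costof{C_2} = \costof{C_1[0:d_1-\Omega]}+\costof{W}+\costof{C_2[\Omega:d_2]}$ together with \lemref{meld-is-optimal}) strictly lowers the overall cost. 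Granting this lemma, set $D = \mathsf{compact}(C_\kappa)$; since $\mathsf{compact}$ performs only \rulename{ShiftLeft} moves it preserves cost (additivity), so $\costof{D} = \costof{C_\kappa} = \costof{\mathsf{segopt}(D)}$, whence $\mathsf{segopt}(D) = D$ and therefore $C_\kappa = D = \mathsf{compact}(C_\kappa)$ is compact.

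For efficiency I would bound the oracle calls round by round. By \thmref{cost}, round $i$ makes at most $\mathsf{length}(\mathsf{compact}(C_{i-1})) + 2\Delta_i$ calls, where $\Delta_i = \costof{C_{i-1}} - \costof{C_i}$ is that round's improvement (again using that $\mathsf{compact}$ preserves cost). Since compaction never increases the number of layers, $\mathsf{length}(\mathsf{compact}(C_{i-1})) \le \mathsf{length}(C_{i-1})$. Summing over $i = 1, \dots, \kappa$, the improvement terms telescope to $\sum_i \Delta_i = \costof{C} - \costof{C_\kappa} = \Delta$, and the length terms give $\sum_{i=1}^{\kappa}\mathsf{length}(C_{i-1}) = \mathsf{length}(C) + \sum_{i=1}^{\kappa-1}\mathsf{length}(C_i)$. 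Hence the total is at most $\mathsf{length}(C) + \sum_{1 \le i \le \kappa}\mathsf{length}(C_i) + 2\Delta = O(\Delta + \mathsf{length}(C) + \sum_{1 \le i \le \kappa}\mathsf{length}(C_i))$, as claimed.

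I expect the main obstacle to be the compactness argument—specifically the cost-preservation lemma for $\mathsf{segopt}$, since it is what converts the fixed-point condition into true compactness and hence into local optimality. Everything else reduces to applying the already-established \thmref{opt} and \thmref{cost} plus a routine telescoping sum, and the termination claim is a mild adaptation of the potential argument behind \thmref{convergence}.
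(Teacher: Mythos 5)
Your proposal is correct, and on the efficiency half it coincides with the paper's argument: bound each round's oracle calls via \thmref{cost}, note that compaction preserves cost and does not increase length, and telescope the per-round improvements $\Delta_i$ into the end-to-end improvement $\Delta$. Where you genuinely diverge is on the correctness half. The paper's treatment of local optimality is essentially an assertion: each round's output is segment optimal by \thmref{opt}, and ``given that \algname{} continues until convergence, the final circuit $C_\kappa$ is locally optimal'' --- no argument is given that the syntactic fixed-point test $C' = C$ yields a \emph{compact} circuit (note that $C_\kappa$ is the output of $\mathsf{segopt}$, not of $\mathsf{compact}$, so compactness is not automatic from the code). Your cost-preservation lemma --- if $\costof{\mathsf{segopt}(D)} = \costof{D}$ then $\mathsf{segopt}(D) = D$, proved by tracking the strict-improvement guards in $\mathsf{segopt}$ and $\mathsf{meld}$ together with additivity and the cost bound of \lemref{meld-is-optimal} --- is exactly the missing bridge: combined with the fact that \rulename{ShiftLeft} moves preserve cost, it converts the fixed point into $C_\kappa = \mathsf{compact}(C_\kappa)$, hence compactness, hence $\locallyopt\Omega{C_\kappa}$. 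Your termination argument (reusing the potential $\Phi$ from \thmref{convergence}, with the index-sum component handling rounds where compaction changes the circuit without changing cost) is likewise absent from the paper, which takes the existence of $\kappa$ for granted; indeed the paper's side condition $\costof{C_i} > \costof{C_{i+1}}$ is not strictly accurate for such rounds, a wrinkle your setup handles correctly. In short, both proofs share the same skeleton, but yours supplies, at the price of one extra lemma, the rigor for what is arguably the only delicate point of the theorem, which the paper leaves unproved.
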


The bound on the number of oracle calls in \thmref{local-opt} is very general
because it applies to any oracle (and gate set).
For particular cost functions, we can use it to deduce a more precise bound.
Specifically, in the case of gate count, where
$\costof C = \sizeof C$, we get the following bound.

\begin{corollary}\label{cor:oac-linear-calls}
When optimizing for gate count,
$\algname{}(C)$ performs at most $O(\sizeof{C} + \kappa \cdot \mathsf{length}(C))$ oracle calls.
\end{corollary}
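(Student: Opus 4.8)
The plan is to derive this corollary directly from \thmref{local-opt}, which already bounds the number of oracle calls by $O(\Delta + \mathsf{length}(C) + \sum_{1 \le i \le \kappa} \mathsf{length}(C_i))$, by specializing the two data-dependent quantities $\Delta$ and $\sum_i \mathsf{length}(C_i)$ to the gate-count cost function. First I would bound the end-to-end cost improvement: for gate count we have $\costof{D} = \sizeof{D}$, so $\Delta = \costof{C} - \costof{C_\kappa} = \sizeof{C} - \sizeof{C_\kappa} \le \sizeof{C}$ since gate counts are nonnegative, which collapses the $\Delta$ term into $\sizeof{C}$.

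The substantive step is to bound $\sum_{1 \le i \le \kappa} \mathsf{length}(C_i)$ by $\kappa \cdot \mathsf{length}(C)$, for which it suffices to prove the length-monotonicity invariant $\mathsf{length}(C_i) \le \mathsf{length}(C)$ for every round $i$. Since each $C_i = \mathsf{segopt}(\mathsf{compact}(C_{i-1}))$ with $C_0 = C$, I would establish two sublemmas: (i) compaction never increases length, because the \rulename{ShiftLeft} rule only relocates a gate from one layer to an adjacent one, leaving the layer count unchanged (and at most trimming trailing empty layers), so $\mathsf{length}(\mathsf{compact}(D)) \le \mathsf{length}(D)$; and (ii) $\mathsf{length}(\mathsf{segopt}(D)) \le \mathsf{length}(D)$, proved by induction on the divide-and-conquer recursion together with an auxiliary claim that $\mathsf{length}(\mathsf{meld}(C_1, C_2)) \le \mathsf{length}(C_1) + \mathsf{length}(C_2)$. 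The meld claim itself follows by induction on the input cost, mirroring \lemref{meld-is-optimal}: in the non-improving case the output is the concatenation $C_1; C_2$ of the stated length, and in the improving case the two recursive calls contribute the lengths of their arguments, which telescope back to $\mathsf{length}(C_1) + \mathsf{length}(C_2)$. Chaining (i) and (ii) across all rounds yields $\mathsf{length}(C_i) \le \mathsf{length}(C_{i-1}) \le \cdots \le \mathsf{length}(C)$, hence $\sum_{1 \le i \le \kappa} \mathsf{length}(C_i) \le \kappa \cdot \mathsf{length}(C)$.

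Finally I would substitute both bounds into \thmref{local-opt}: the call count becomes $O(\sizeof{C} + \mathsf{length}(C) + \kappa \cdot \mathsf{length}(C))$, and since $\kappa \ge 1$ the lone $\mathsf{length}(C)$ term is absorbed, giving $O(\sizeof{C} + \kappa \cdot \mathsf{length}(C))$ as claimed.

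The main obstacle is sublemma (ii): both the base case of $\mathsf{segopt}$ and the oracle call inside $\mathsf{meld}$ invoke $\oracle{}$ on a segment, so the length bound goes through only if $\oracle{}$ does not increase the depth of the segment it optimizes, i.e. $\mathsf{length}(\oracle{}(W)) \le \mathsf{length}(W)$. This is stronger than the cost-only assumption $\costof{\oracle{}(W)} \le \costof{W}$ posited earlier, but it is a natural property of gate-count optimizers, since cancellations and merges never lengthen a circuit, and I would make it explicit as a mild additional hypothesis on the oracle. Without it, one only obtains the weaker $\mathsf{length}(C_i) \le \sizeof{C}$ (via $\mathsf{length}(C_i) \le \sizeof{\mathsf{compact}(C_{i-1})} = \sizeof{C_{i-1}} \le \sizeof{C}$) and a correspondingly looser $O(\kappa \cdot \sizeof{C})$ bound, so pinning down the depth behavior of the oracle is exactly what separates the tight statement from the loose one.
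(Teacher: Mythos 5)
Your proposal is correct and follows essentially the same route as the paper: the paper derives this corollary by directly specializing \thmref{local-opt} to the gate-count cost function, where $\costof{C} = \sizeof{C}$ yields $\Delta \leq \sizeof{C}$, exactly as in your first step. The paper gives no further justification for replacing $\sum_{1 \leq i \leq \kappa} \mathsf{length}(C_i)$ by $\kappa \cdot \mathsf{length}(C)$, so your length-monotonicity sublemmas --- and in particular your observation that they require the oracle to be depth-non-increasing, i.e.\ $\mathsf{length}(\oracle{}(W)) \leq \mathsf{length}(W)$, which is strictly stronger than the paper's stated assumption $\costof{\oracle{}(W)} \leq \costof{W}$ --- fill in a step the paper leaves entirely implicit, and making that hypothesis explicit (or falling back to the looser $O(\kappa \cdot \sizeof{C})$ bound, as you note) is a genuine sharpening of the paper's presentation.
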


An interesting question is whether or not it is possible to bound the number of rounds, $\kappa$.
In general, this will depend on a number of factors,
such as the quality of the oracle and how quickly the optimizations it performs converge
across compactions.
In practice (\secref{eval}), we find that the number of rounds required is small, and that
nearly all optimizations are performed in the first round itself.
For gate count optimizations in particular, Corollary~\ref{cor:oac-linear-calls} tells us
that when only a small number of rounds are required, the scalability of \algname{} is
effectively linear in circuit size.

\if0
\subsection{Worst-case number of rounds for \algname{}}
\begin{figure}[t]
    \centering
    \includegraphics[width=\columnwidth]{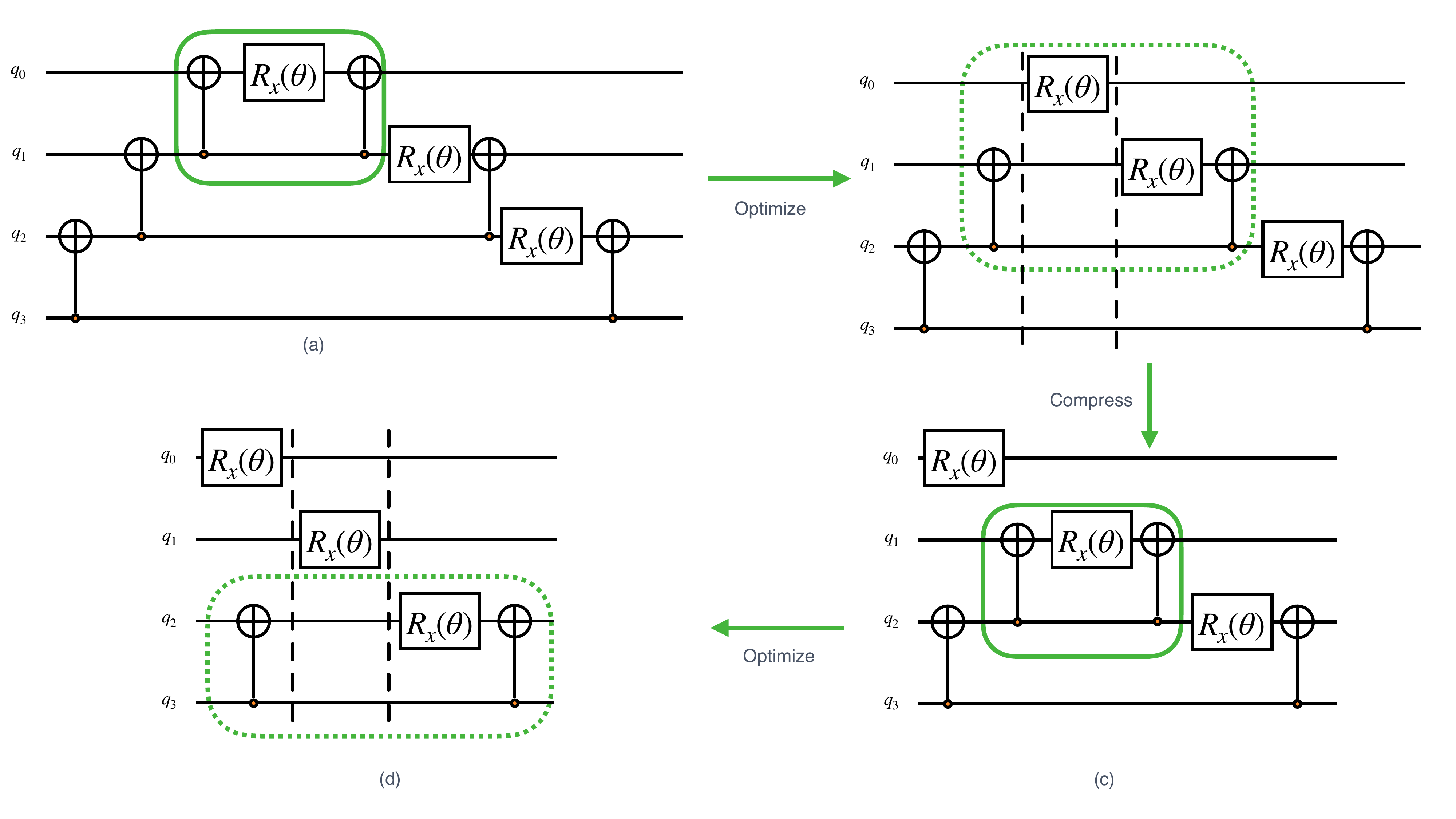}
    \vspace{-20pt}
    \caption{A worst case example showing why linear number of compressions are needed.}
    \label{fig:quadratic_lower_bound}
\end{figure}

\figref{quadratic_lower_bound} gives an example showing that a linear number of compactions is necessary in the worst case.
The example uses $\Omega=3$.
We optimize the solid green segment from (a) to (b) by canceling the two CNOTs.
However, at (b), no more optimization can be applied.
Although the dashed green segment is not optimal, the two CNOTs are separated enough so that we cannot optimize them.
So, we must apply a compaction step to bring the two CNOTs together and then optimize them, as shown in (c).
This pattern will continue and can require a number of compactions proportional to the number of qubits.
\fi

\subsection{Circuit Representation and Compaction}
%
%

Our algorithm represents circuits using a hybrid data structure that
switches between sequences and linked lists for splitting and
concatenating circuits in $O(1)$ time.
Initially, it represents the circuit as a sequence of layers,
enabling circuit splits in $O(1)$ time.
Later during optimization
when circuit concatenation is required,
the algorithm switches to a linked list of layers,
enabling $O(1)$ concatenation.
At the end of each optimization round,
the algorithm uses a function $\mathsf{compact}(C)$
to revert to the sequence of layers representation.

The function $\mathsf{compact}(C)$ ensures that the resulting circuit
is equivalent to $C$ while eliminating any unnecessary ``gaps'' in the layering,
meaning that every gate has been shifted left as far as possible.
For brevity, we omit the implementation of the $\mathsf{compact}$ function from \figref{lopt-code}.
Our implementation in practice is straightforward: we use a single left-to-right
pass over the input circuit and build the output by iteratively adding gates.
The time complexity is linear, i.e., $\mathsf{compact}(C)$ requires $O(|C| + \mathsf{length}(C))$ time.
%
%
%
%

%


\section{Proof of Termination}
\label{app:lem-pot-dec}
We prove below (\lemref{pot-dec}) that the potential decreases on every step.
\thmref{convergence} then follows from \lemref{pot-dec}, because ordering by $\Phi$
is well-founded and cannot infinitely decrease.
%

\begin{lemma}
	\label{lem:pot-dec}
	For any additive function $\cost{}: \textit{Circuit} \to \mathbb{N}$,
	if $\localstep\Omega{C}{C'}$ then $\Phi(C') < \Phi(C)$.
\end{lemma}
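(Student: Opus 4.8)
The plan is to proceed by case analysis on the rewrite rule used to derive $\localstep\Omega{C}{C'}$; by \figref{circ-rewrite} there are exactly two, \rulename{Lopt} and \rulename{ShiftLeft}. Since $\Phi$ is ordered lexicographically on $\mathbb{N}\times\mathbb{N}$, in each case I only need to show that either the first component $\costof{-}$ strictly decreases, or it stays equal while the second component $\mathsf{IndexSum}(-)$ strictly decreases. I expect the two rules to split neatly along these lines: \rulename{Lopt} lowers the cost, whereas \rulename{ShiftLeft} preserves the cost but lowers the index sum.

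For the \rulename{Lopt} case I would write $C = P; C''; S$ and $C' = P; \oracle(C''); S$, where the rule's side condition guarantees $\costof{\oracle(C'')} < \costof{C''}$. Applying the first additivity clause ($\costof{C_1;C_2}=\costof{C_1}+\costof{C_2}$) to peel off $P$ and $S$, the strict drop on the middle segment propagates to a strict drop on the whole circuit, so $\costof{C'} < \costof{C}$. Because the first lexicographic component already decreases, $\Phi(C') < \Phi(C)$ follows with no reasoning about $\mathsf{IndexSum}$.

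The \rulename{ShiftLeft} case is where I expect the real work. Here $C = P; \langle L_1, L_2\rangle; S$ and $C' = P; \langle L_1', L_2'\rangle; S$ with $L_1' = L_1\cup\{G\}$ and $L_2' = L_2\setminus\{G\}$. I would first argue that cost is unchanged: since $L_2$ is well-formed we have $\compat{(L_2\setminus\{G\})}{\{G\}}$, and the rule supplies $\compat{L_1}{\{G\}}$, so the second additivity clause gives $\costof{\langle L_1'\rangle} = \costof{\langle L_1\rangle} + \costof{\langle\{G\}\rangle}$ and $\costof{\langle L_2\rangle} = \costof{\langle L_2'\rangle} + \costof{\langle\{G\}\rangle}$; combining these with the first clause yields $\costof{\langle L_1', L_2'\rangle} = \costof{\langle L_1, L_2\rangle}$, hence $\costof{C'} = \costof{C}$. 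It then remains to show $\mathsf{IndexSum}(C') < \mathsf{IndexSum}(C)$. Writing $p = \mathsf{length}(P)$ so that $L_1, L_2$ sit at absolute indices $p$ and $p+1$, the only changed contributions are $p|L_1| + (p+1)|L_2|$ before and $p(|L_1|+1) + (p+1)(|L_2|-1)$ after, for a net change of exactly $-1$; every other layer is untouched. Thus $\mathsf{IndexSum}(C') = \mathsf{IndexSum}(C) - 1$, and since the cost is equal, $\Phi(C') < \Phi(C)$.

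The main obstacle is the cost-invariance step for \rulename{ShiftLeft}: this is the only place that genuinely needs the second additivity clause (on unions of compatible layers) rather than mere concatenation, and it depends on the implicit well-formedness of $L_2$ to license $\compat{(L_2\setminus\{G\})}{\{G\}}$. The index-sum computation itself is routine arithmetic, but I would be careful that the rule is stated on a two-layer subcircuit embedded in a larger context, so the indices of the affected layers are offset by $p=\mathsf{length}(P)$ rather than being $0$ and $1$; a sloppy treatment that ignores the prefix would still give the $-1$ here but is the kind of bookkeeping slip worth guarding against.
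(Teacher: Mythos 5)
Your proposal is correct and follows essentially the same argument as the paper's proof: case analysis on \rulename{Lopt} versus \rulename{ShiftLeft}, with the first case dropping the cost component via additivity and the second preserving cost while decreasing $\mathsf{IndexSum}$ by exactly $1$. In fact you are somewhat more careful than the paper, which silently assumes the cost invariance $\costof{\langle L_1, L_2 \rangle} = \costof{\langle L_1', L_2' \rangle}$ that you justify via the second additivity clause and the compatibility conditions, and which does not mention the prefix offset (harmless either way, since the net change is $-1$ independent of the offset).
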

\begin{proof}
	%
	There are two cases
	for $\localstep\Omega{C}{C'}$:
	either \rulename{Lopt} or
	\rulename{ShiftLeft}.
	In each case we show $\Phi(C') < \Phi(C)$.
	
	In the case of \rulename{Lopt}, we have $\localstep\Omega C {C'}$ where
	$C = (P ; C'' ; S)$ and $C' = (P ; \oracle{}(C'') ; S)$.
	In $C'$, the segment $C''$ has been improved by one call to the \oracle{},
	i.e., $\costof{\oracle{}(C'')} < \costof{C''}$.
	Because the \cost{} function is additive, we have that
	$\costof{C'} = \costof{P ; \oracle{}(C'') ; S} < \costof{P ; C'' ; S} = \costof{C}$.
	This in turn implies $\Phi(C') < \Phi(C)$ due to lexicographic ordering on the
	potential function.
	
	
	
	In the case of \rulename{ShiftLeft}, we have $\localstep\Omega C {C'}$ where
	$C = (P ; \langle L_1, L_2 \rangle ; S)$ and $C' = (P ; \langle L_1', L_2' \rangle ; S)$
	and $L_1' = L_1 \cup \{G\}$ and $L_2' = L_2 \setminus \{G\}$.
	Because the cost function is additive, we have $\costof{\langle L_1, L_2 \rangle} = \costof{\langle L_1', L_2' \rangle}$
	and therefore $\costof C = \costof {C'}$.
	To show $\Phi(C') < \Phi(C)$, due to the lexicographic ordering, it remains to show
	$\mathsf{IndexSum}(C') < \mathsf{IndexSum}(C)$.
	This in turn follows from the definition of $\mathsf{IndexSum}$; in particular, plugging
	in $|L_1'| = |L_1| + 1$ and $|L_2'| = |L_2| - 1$ we get
	$\mathsf{IndexSum}(C') = \mathsf{IndexSum}(C) - 1$.
	Thus we have $\Phi(C') < \Phi(C)$.
\end{proof}

\fi

\end{document}